\newtheorem{lemma}{Lemma}
\newtheorem{theorem}{Theorem}
\newtheorem{corollary}{Corollary}
\newtheorem{myproperty}{Property}
\newcommand{\MT}{\textsf{MaxT}}
\newcommand{\MM}{\textsf{MinR}}
\newcommand{\RAP}{\textsf{RAP}}
\newcommand{\VP}{\textsf{VP}}
\newcommand{\etal}{{\em et al.}}
\newcommand{\remove}[1] {}
\newcommand{\eps}{\varepsilon}
\newcommand{\comment}[1] {}
\newcommand{\negA}{\vspace{-0.05in}}
\newcommand{\negB}{\vspace{-0.1in}}
\newcommand{\mysubsection}[1]{\negB\subsection{#1}\negA}
\newcommand{\myparagraph}[1]{\par\smallskip\par\noindent{\bf{}#1:~}}
\title{\bf The Preemptive Resource Allocation Problem}
\begin{document}
\author{
Kanthi Sarpatwar\thanks{IBM T.J. Watson Research Center, Yorktown Heights, NY. \texttt{sarpatwa@us.ibm.com}}
	\and Baruch Schieber\thanks{Computer Science Department, New Jersey  Institute of Technology, Newark, NJ. \texttt{baruch.m.schieber@njit.edu}}
	\and Hadas Shachnai\thanks{Computer Science Department, Technion, Haifa, Israel. \texttt{hadas@cs.technion.ac.il}}
	}

\maketitle

\begin{abstract}
We revisit a classical scheduling model
to incorporate modern trends in data center networks and cloud services.
Addressing some key challenges in the allocation of shared resources 
to  user requests (jobs) in such settings, we consider the following variants of the classic 
{\em resource allocation problem} ({\RAP}).
The input to our problems is a set $J$ of jobs and a set $M$ of homogeneous hosts, each has available amount of some resource.
A job  is associated with a release time, a due date, a weight and a given length,
as well as its resource requirement.
A \emph{feasible} schedule is an allocation of the resource to a subset of the jobs,
satisfying the job release times/due dates as well as 
the resource constraints.
A crucial distinction between classic {\RAP} and our problems is that we allow preemption and migration of jobs, motivated by virtualization techniques. 

We consider two natural objectives: {\em throughput maximization} ({\MT}), which seeks a maximum weight subset of the jobs that can be feasibly scheduled on the hosts in $M$,
and {\em resource minimization} ({\MM}),
that is finding the minimum number of 
(homogeneous) hosts needed  to feasibly schedule all jobs.
Both problems are known to be NP-hard.
 We first present
 an $\Omega(1)$-approximation algorithm for {\MT} instances where
 time-windows form a laminar family of intervals. We then extend the algorithm to
 handle instances with arbitrary time-windows, assuming there is sufficient slack for each job to be completed. 
 For {\MM} we study a more general setting with $d$ resources and derive an 
$O(\log d)$-approximation for any fixed $d \geq 1$, under the assumption that time-windows are not too small. This assumption can be removed 
leading to a slightly worse ratio of $O(\log d\log^* T)$, where $T$ is the maximum due date of any job.
\end{abstract}



\section{Introduction}
\label{sec:intro}
We revisit a classical scheduling
model to incorporate modern trends in data center networks and cloud services.
The proliferation of virtualization and containerization technologies, along with the advent of increasingly powerful multi-core processors, has made it possible to execute multiple virtual machines (or {\em jobs}) simultaneously on the same host, as well as to preempt and migrate jobs with relative ease.
 We address some fundamental problems in the efficient allocation of shared resources such as CPU cores, RAM, or network bandwidth to several competing jobs.
These problems are modeled to exploit multi-job execution, facilitate preemption and migration while respecting resource and timing constraints. Typically, the infrastructure service providers are oversubscribed and therefore, the common goals here include admission control of jobs to maximize throughput, or minimizing the additional resource required to process all jobs.


The broad setting considered in this paper is the following.
Suppose we are given a set of jobs $J$ that need to be scheduled on a set of identical hosts $M$, where each host has a limited amount of one or more resources. Each job $j \in J$ has release time $r_j$, due date $d_j$, and length $p_j$,
along with a required amount of the resource $s_j$ ($\bar{s}_j$ for multiple resources). A job $j$ can be preempted and migrated across hosts but cannot be processed {\em simultaneously} on multiple hosts, i.e., at any given instant of time a job can be processed by at most one host. However, multiple jobs can be processed by any given host, at any given time, as long as their combined resource requirement does not exceed the available resource. As mentioned above, we consider two commonly occurring objectives, namely, {\em throughput maximization} and {\em resource minimization}.


In the {\em maximum throughput} ({\MT}) variant, we are given a set of homogeneous hosts $M$ and a set of jobs $J$, such that each job $j$ has a profit $w_j$ and attributes
 $(p_j, s_j, r_j,d_j)$. The goal is to find a subset $S\subseteq J$ of jobs of maximum profit $\sum_{j\in S}w_j$ that can be feasibly scheduled on $M$. This problem can be viewed as a preemptive variant of the
classic {\em resource allocation problem} ({\RAP})~\cite{PUW00,chen2002allocation,CCKR11,BF+14}.

In the {\em resource minimization} (\textsf{MinR}) variant, we assume that each job $j$ has a resource requirement vector $\bar{s}_j\in [0,1]^d$ as one of the attributes,
where $d \geq 1$ is the number of available resources. W.l.o.g., we assume that each host has a unit amount of each of the $d$ resources.
A schedule that assigns a set of jobs $S_{i,t}$ to a host $i\in M$ at time $t$ is feasible if
$\sum_{j\in S_{i,t}}\bar{s}_{j} \leq \bar{1}^d$. Given a set of jobs $J$ with attributes $(p_j, \bar{s}_j, r_j,d_j)$, we seek a set of (homogeneous) hosts $M$ of minimum cardinality
 such that all of the jobs can be scheduled feasibly on $M$.
This problem is a generalization of the classic {\em vector packing}
({\VP}) problem, in which a set of $d$-dimensional items needs to be feasibly packed
into a minimum number of  $d$-dimensional bins of unit size in each dimension,
i.e., the vector sum of all the items packed into each bin has to be less than or equal to
$\bar{1}^d$. Any instance of {\VP} can be viewed as an instance of {\MM} with
$r_j=0$, $d_j=1$ and $p_j=1$ for job $j\in J$.

Another application of this general scheduling setting relates to the allocation of space and time to advertisements by online advertisement platforms (such as Google or Facebook).
In the {\em ad placement problem}~\cite{DKS03,FN04} we are given a schedule length of $T$ time slots and a collection of ads that need to be scheduled within this time frame.
The ads must be placed in a rectangular display area whose contents can change in different time slots. All ads share the same height, which is the height of the display area, but may have different widths. Several ads may be displayed simultaneously (side by side), as long as their combined width does not exceed the width of the display area.
In addition, each advertisement specifies a display count (in the range $1, \ldots ,T$), which is the number of time slots during which the ad must be displayed. The actual time slots in which the advertisement will be displayed may be chosen arbitrarily by the scheduler, and, in particular, need not be consecutive.
Suppose that each advertisement is associated with some positive profit, and the scheduler may accept or reject any given ad. A common objective is
to schedule a maximum-profit subset of ads within a display area of given width.
Indeed, this problem can be cast as a special case of {\MT} with a single host, where
all jobs have the same release time and due date.

\mysubsection{Prior Work}

The classical problem of preemptively scheduling a set of jobs with attributes  $(p_j, s_j=1, r_j,d_j)$ on a single machine so as to maximize throughput
can be cast as a special case of {\MT} with a single host, where each job requires all of the available resource. 
Lawler~\cite{L90} showed that in this special case {\MT} admits a PTAS, and
the problem is polynomially solvable for uniform job weights. For multiple hosts (i.e., $m=|M|>1$), this special case of {\MT} ($s_j=1$ for all $j \in J$) admits a $\frac{1}{6+\eps}$-approximation, for any fixed $\eps>0$.
This follows from a result of Kalyanasundaram and Pruhs~\cite{KP01}.

As mentioned earlier, another special case of {\MT} was studied in the context of advertisement placement. The {\em ad placement} problem was introduced by Adler~\etal~\cite{AGM02} and later studied in numerous papers (see, e.g., \cite{DKS03,FN04,DKS05,KDM07,KA+17} and the comprehensive survey in~\cite{PDJ17}).
Freund and Naor~\cite{FN04} presented a $(1/3 - \eps)$-approximation for the maximum profit version, namely, for {\MT} with a single host and the same release time and due date for all jobs.

Fox and Korupula~\cite{fox2013weighted} recently studied our preemptive scheduling model, with job attributes $(p_j, s_j, r_j,d_j)$,
under another popular objective, namely, minimizing weighted flow-time. Their work differs from ours in two ways: while they focus on the online versions, we consider our problems in an offline setting. Further, as they note, while the throughput and resource minimization objectives are also commonly considered metrics, their techniques only deal with flow-time. In fact, these objectives are fundamentally different and we need novel algorithms to tackle them.

The non-preemptive variant of {\MT}, known as the {\em resource allocation problem} (\RAP), was introduced by Phillips~\etal~\cite{PUW00}, and later studied by many authors
   (see, e.g., \cite{BGNS01,BBFNS01,CCKR11,chakaravarthy2014improved,NM+15,chen2002allocation} and the references therein).\footnote{{\RAP} is also known as the {\em bandwidth allocation problem}.}
 Chakaravarthy~\etal~\cite{chakaravarthy2014improved} consider a generalization of  {\RAP} and obtain a constant approximation based on a primal-dual algorithm.
 We note that the preemptive versus non-preemptive problems differ quite a bit in their structural properties.

 As mentioned above, {\MM} generalizes the classic vector packing ({\VP}) problem.
The first non-trivial $O(\log d)$-approximation algorithm for {\VP} was presented by Chekuri and Khanna~\cite{chekuri2004multidimensional}, for any fixed $d \geq 1$.
This ratio was improved by Bansal, Caprara and Sviridenko~\cite{BCS09} to a randomized algorithm with asymptotic approximation ratio arbitrarily close to $\ln d + 1$.  Bansal, Eli\'{a}s and Khan~\cite{bansal2016improved} recently improved this ratio further to $0.807 + \ln(d+1)$.
A ``fractional variant'' of {\MM} problem was considered by Jansen and Porkolab~\cite{JP02}, where time was assumed to be continuous. For this problem, in the case of a single host, they obtain a polynomial time approximation scheme (PTAS), by solving a configuration linear program (rounding the LP solution is not necessary because time is continuous in their case).

Resource minimization was considered also in the context of the ad placement problem.
In this variant, all ads must be scheduled, and the objective is to minimize the width of the display area required to make this possible.
Freund and Naor~\cite{FN04} gave a $2$-approximation algorithm for the problem, which was later improved by Dawande~\etal~\cite{DKS05} to $3/2$.
This implies a $3$-approximation for {\MM} instances with $d=1$, where all jobs have the same release time and due date.
We note that this ratio can be slightly improved, using the property that $s_j \leq 1$ for all $j \in J$.
Indeed, we can schedule the jobs to use the resource, such that
the total resource requirements at any two time slots differ at most by one.
Thus, the total amount of resource
required at any time exceeds the optimum, $OPT$, at most by one unit, implying the jobs
can be feasibly scheduled on $2OPT+1$ hosts.

Another line of work relates to the non-preemptive version of {\MM}, where $d=1$ and
the requirement of each job is equal to $1$ (see, e.g.~\cite{chuzhoy2004machine, chuzhoy2009resource});
thus, at most one job can be scheduled on each host at any time.

\mysubsection{Contributions and Techniques}

Before summarizing our results, we define the notion of {\em slackness}. Denote the time window for processing job $j \in J$ by $\chi_j =[r_j, d_j]$, and let $|\chi_j| = d_j-r_j +1$ denote the length of the interval.
Throughout the discussion, we assume that the
time windows are large enough, namely, there is a  constant $\lambda \in (0,1)$, such that $p_j \leq \lambda |\chi_j|$ for any job $j$.
Such an assumption is quite reasonable in scenarios arising in our applications. We call $\lambda$ the {\em slackness} parameter of the instance.

For the {\MT} problem,
we present (in Section \ref{sec:MaxT}) an $\Omega(1)$ approximation algorithm. As mentioned earlier, the non-preemptive version of this problem is the classic resource allocation problem (\RAP). To see the structural differences between the non-preemptive and preemptive versions, we consider their natural linear programming relaxations. In the case of {\RAP}, it is sufficient to have a single indicator variable $x_{jt}$ for each job $j$ and time slot $t$ to represent its start time. This allows the application of a natural randomized rounding algorithm, where job $j$ is scheduled to start at time $t$ with probability $x_{jt}$. On the other hand, in {\MT}, a job can be preempted several times; therefore, each job requires multiple indicator variables. Further, these variables must be rounded in an {\em all-or-nothing} fashion, i.e., either we schedule all parts of a job or none of them. Our approach to handle this situation is to, somewhat counter-intuitively, ``dumb down'' the linear program  by not committing the jobs to a particular schedule; instead, we choose a subset of jobs that satisfy certain knapsack constraints and construct the actual schedule in a subsequent phase.

We first consider a \emph{laminar} variant of the problem, where the time windows for the jobs are chosen from a laminar family of intervals.\footnote{See the formal definition in 
Section~\ref{sec:prelims}.} This setting includes several important special cases, such as $(i)$ all jobs are released at $t=0$ but have different due dates,
or $(ii)$ jobs are released at different times, but all must  be completed by a given due date. Recall that $m=|M|$ is the number of hosts. Our result for the laminar case is a
$\frac 12 - {\lambda} \left(\frac 12 +\frac{1}{m} \right)$-approximation
algorithm, assuming
that the slackness parameter satisfies $\lambda < {1-\frac{2}{m+2}}$.
Using a simple transformation of an arbitrary instance to laminar,
we obtain a
$\frac 18 - {\lambda} \left( \frac 12 +\frac{1}{m} \right)$-approximation
algorithm for
 general instances, assuming that  $\lambda < \frac 14 - \frac{1}{2(m+2)}$.
 Our results imply that as $\lambda$ decreases, the approximation ratio approaches $\frac{1}{2}$ and $\frac{1}{8}$ for the laminar and the general
 case, respectively. 

Subsequently, we tighten the slackness assumption further to obtain an $\Omega(1)$ approximation algorithm for any constant slackness $\lambda \in (0,1)$ for the laminar case and any constant $\lambda \in (0,\frac{1}{4})$ for the general case. In the special case where the weight of the job is equal to its area, we extend an algorithm due to Chen, Hassin and Tzur~\cite{chen2002allocation} to obtain
 an $\Omega(1)$ approximation guarantee for the general case with no assumption on slackness.

Our algorithm for the laminar case relies on a non-trivial combination of a {\em packing} phase and a {\em scheduling} phase.
While the first phase ensures that the output solution has high profit, the second phase guarantees its feasibility.
To facilitate a successful completion of the selected jobs, we formulate a set of conditions that must
be satisfied in the packing phase.
Both phases make use of the structural properties of a {\em laminar} family of intervals.
In the packing phase, we apply our rounding procedure (for the LP solution) to the tree representation of the intervals.\footnote{This procedure bears some similarity to the {\em pipage} rounding technique of~\cite{AS04}.}
We further use this tree in the scheduling phase, to feasibly assign the resource to the selected jobs in a bottom-up fashion.
Our framework for solving {\MT} is general,
and may therefore find use in other settings of {\em non-consecutive} resource allocation.

%

%
%
For the {\MM} problem, we obtain (in Section~\ref{sec:crcsd})
an $O(\log d)$-approximation algorithm for any constant $d \geq 1$,
 under a mild assumption that any job has a window of size  $\Omega(d^2\log d\log T)$, where $T=\max_{j} d_j$.
We show that this assumption can be removed,
leading to a slight degradation in the approximation factor to $O(\log d\log^* T)$, where $\log^*T$ is the smallest integer $\kappa$ such that
$\underbrace{\log \log \ldots \log }_{\text{$\kappa$ times}}T \leq 1$.
Our  approach builds on a formulation of the problem as a configuration LP, inspired by the works of \cite{BCS09, FGMS11}. However, we quickly deviate from these prior approaches, in order to handle the time-windows and the extra  constraints. Our algorithm involves two main phases:
\emph{a maximization phase} and \emph{residual phase}. Roughly speaking, a configuration is a subset of jobs that can be feasibly assigned to a host at a given time slot $t$.
For each $t$, we choose  $O(m\log d)$ configurations with probabilities proportional to their LP-values.  In this phase, jobs may be allocated the resource only for part of their processing length.
In the second phase, we construct a residual instance based on the amount of time each job has been processed. A key challenge is to show that, for any time window $\chi$, the total ``area'' of jobs left to be scheduled is at most $1/d$
of the original total area. We use this property to solve the residual instance.

\section{Preliminaries}
\label{sec:prelims}
We start with some definitions and notation. For our preemptive variants of {\RAP}, we assume w.l.o.g. that each host has a unit amount of each resource.
We further assume that time is slotted.
We allow non-consecutive allocation of a resource to each job, as
well as job migration. Multiple jobs can be assigned to the same machine at a given time but no job can be processed by multiple machines at the same time. 
Formally, we denote the set of jobs assigned to host $i\in M$ at time $t$ by $S_{i,t}$. We say that job $j$ is {\em completed} if there are $p_j$ time slots
$t\in [r_j, d_j]=\chi_j$ in which $j$ is allocated
its required amount of the resource on some host.
A job $j$ is completed if
$|\{t\in \chi_j: \exists i\in M \mbox{ such that } j\in S_{i,t} \}| \geq p_j$.
Let $T = \max_{j \in J} d_j$ be the latest due date of any job.

 In {\MT}, each job $j \in J$ has a resource requirement $s_j \in (0,1]$. 
 An assignment of  a subset of jobs $S\subseteq J$ to the hosts in $M$ is feasible if each job $j\in S$ is completed,
 and for any time slot $t$ and host $i\in M$, $\sum_{j\in S_{i,t}} s_{j} \leq 1$, i.e., the sum of requirements of all jobs assigned to host $i$ is at most the available resource. 
 
For the {\MM} variant, we assume multiple resources. Thus, each job $j$ has a resource requirement vector $\bar{s}_j\in [0,1]^d$, for some constant $d \geq1$. Further, each host has a unit amount of each of the $d$ resources. An assignment of a set of jobs $S_{i,t}$ to a host $i\in M$ at time $t$ is feasible if $\sum_{j \in S_{i,t}}\bar{s}_{j} \leq \bar{1}^d$.

Let $a_j = s_j p_j$ denote the total resource requirement (or, {\em area}) of job $j\in J$ and refer to the quantity $w_j/a_j$ as  the {\em density} of job $j$. Finally, a set of intervals is {\em laminar} if for any two intervals $\chi'$ and $\chi''$, exactly one of the following holds: $\chi'\subseteq \chi''$, $\chi''\subset \chi'$ or $\chi'\cap \chi'' = \phi$.

\section{Throughput Maximization}
\label{sec:MaxT}


We first consider the case where $\mathcal{L} =  \{\chi_j: j\in J\}$ forms a \emph{laminar} family of intervals. In Section~\ref{sec:MT_laminar}, we present an $\Omega(1)$-approximation algorithm  for the laminar case when
$\lambda \in \left( 0,1-\frac{2}{m+2} \right)$. Following this, we describe (in Section~\ref{sec:MT_general}) our constant approximation for the general case for $\lambda\in \left( 0,\frac 14 -\frac{1}{2(m+2)} \right)$.  We then show, in Section~\ref{sec:MaxT-relax-slack}, how to tighten the results to any constant slackness parameter
(i) $\lambda \in (0,1)$ in the laminar case (ii) $\lambda \in (0,\frac 14)$ in the general case. As an interesting corollary, we obtain an
$\Omega\left( \frac{1}{\log{n}} \right)$-approximation algorithm for the general {\MT} problem with no slackness assumption. Further, we show that in the special case of maximum utilization (i.e., the profit of each job equals its ``area''), we obtain an $\Omega(1)$ guarantee with no assumption on the slackness.


\mysubsection{The Laminar Case}
\label{sec:MT_laminar}
Our algorithm proceeds in two phases. While the first phase ensures that the output solution has high profit, the second phase guarantees its feasibility. Specifically, let
$\omega \in (0, 1 - \frac{\lambda}{m}]$ be a parameter (to be determined).

In Phase 1, we find a subset of jobs $S$ satisfying a knapsack constraint for each $\chi$.
Indeed, any feasible solution guarantees that the total area of jobs within any time-window $\chi \in \mathcal{L}$ is at most $m|\chi|$. Our knapsack constraints further restrict the total area of jobs in $\chi$ to some fraction of $m|\chi|$.
\remove{Note that a simple dynamic program would give us a near optimal solution that satisfies these constraints. However, such an approach provides us with no intuition as to how a near optimal solution for the ``curtailed'' instance compares with the optimal solution of the original instance.}
We adopt an LP-rounding based approach to compute a subset $S$ that is optimal subject to the further restricted knapsack constraints.
(We remark that a dynamic programming approach would work as well. However, such an approach would not provide us with any intuition as to how an optimal solution for the further restricted instance compares with the optimal solution of the original instance.)


In Phase 2 we allocate the resource to the jobs in $S$, by considering separately each host $i$ at a given time slot $t \in [T]$ as a unit-sized bin $(i,t)$ and iteratively assigning
each job $j \in S$ to a subset of such available bins, until $j$ has the resource allocated for $p_j$ distinct time slots.
An outline of the two phases is given in Algorithm~\ref{algm:maxToOutline}.

\begin{algorithm}
\caption{Throughput maximization outline}
\label{algm:maxToOutline}
\begin{algorithmic}[1]
\Require{Set of jobs $J$, hosts $M$ and a parameter $\omega \in (0,1 - \frac{\lambda}{m}]$}
\Ensure{Subset of jobs $S\subseteq J$ and a feasible assignment of $S$ to the hosts in $M$}
\Statex \hspace*{-15pt}{\bf Phase 1:} Select a subset  $S\subseteq J$, such that for each $\chi \in \mathcal{L}$:
  \Statex{$\sum_{j\in S: \chi_j \subseteq \chi} a_j \leq (\omega + \frac{\lambda}{m}) m |\chi|$}
\Statex \hspace*{-15pt}{\bf Phase 2:} Find a feasible allocation of the resource to  the jobs in $S$
\end{algorithmic}
\end{algorithm}

\myparagraph{Phase 1} The algorithm starts by finding a subset of jobs $S \subseteq J$ such that for any $\chi \in \mathcal{L}$: $\sum_{j\in S: \chi_j \subseteq \chi} a_j \leq (\omega + \frac{\lambda}{m}) m |\chi|$.
 We solve the following LP relaxation, in which we impose stricter constraint on the total area of the jobs assigned in each time window $\chi$.
\[
  \begin{array}{llll}
\text{\bf LP:}&  \text{Maximize} &\sum_{j\in J} w_jx_j  & \nonumber\\
 & \text{Subject to:} & \sum_{j: \chi_j \subseteq \chi} a_jx_j \leq \omega m |\chi| & \forall \chi \in \mathcal{L} \nonumber \\
 & & 0\leq x_j \leq 1 & \forall j\in J \nonumber
  \end{array}
  \]

\remove{
\[
\begin{array}{llll}
\text{\bf LP:}&
 \text{Maximize }& \sum_{j\in J}(p_j\alpha_j -\sum_{t\in [T]}\beta_{j,t})& \nonumber\\
& \text{Subject to:} & \sum_{j\in S} (\alpha_j - \beta_{j,t}) - \gamma_t\leq 0, &
\forall C=(S,t) \nonumber  \\
&& \sum_{t\in [T]}\gamma_t \leq 1, & \nonumber \\
&&\alpha_j, \beta_{j,t}, \gamma_t \geq 0&\nonumber\\
&\nonumber
\end{array}
\]
}

\noindent {\em Rounding the Fractional Solution:} Suppose $\mathbf{x}^* = (x_j^*: j\in J)$ is an optimal fractional solution for the LP.
Our goal is to construct an integral solution $\hat{\mathbf{x}} = (\hat{x}_j: j\in J)$. We refer to a job $j$ with
${x}_j^*\in (0,1)$ as a \emph{fractional job}, and to the quantity
$a_j {x}^*_j$ as its \emph{fractional area}.
W.l.o.g., we may assume
that for any interval $\chi\in \mathcal{L}$, there is at most one job $j$ with $\chi_j = \chi$ such that $0< x_j^* <1 $, i.e., it is fractional. Indeed, if  two such jobs exist, then the fractional value of the higher density job (breaking ties arbitrarily) can be increased to obtain a solution no worse than the optimal.  Note, however, that there could be fractional jobs $j'$ with $\chi_{j'} \subset \chi$.

We start by setting $\hat{x}_j = x_j^*$ for all $j\in J$.  Consider the tree
representation of $\mathcal{L}$, which contains a node (also denoted by $\chi$)
for each $\chi \in \mathcal{L}$, and an edge between nodes corresponding to
$\chi$ and $\chi'$, where $\chi'\subset \chi$, if there is no interval
$\chi''\in \mathcal{L}$ such that $\chi' \subset \chi'' \subset \chi$.\footnote{Throughout the discussion we use interchangeably the terms {\em node} and {\em interval} when
referring to a time-window $\chi \in \mathcal{L}$.}
Our rounding
procedure works in a \emph{bottom-up} fashion.
As part of this procedure, we label the nodes with one of two possible colors: \emph{gray} and \emph{black}.
Initially, all leaf nodes are colored \emph{black},
and all internal nodes are colored \emph{gray}.
The procedure terminates when all nodes are colored \emph{black}.
A node $\chi$ is colored as \emph{black} if the following property
holds:

\begin{myproperty}
\label{prop:single_fractional}
For any path $\mathcal{P}(\chi,\chi_l)$ from $\chi$ to a leaf $\chi_l$ there is at most one fractional job $j$ such that $\chi_j$ lies on $\mathcal{P}(\chi,\chi_l)$.
\end{myproperty}

\noindent We note that the property trivially holds for the leaf nodes.
 Now, consider a {\em gray} interval $\chi$ with children $\chi_1, \chi_2, \ldots, \chi_\nu$, each colored {\em black}. Note that $\chi$ is well defined because leaf intervals are all colored black. If there is no fractional job that has $\chi$ as its time-window,
Property~\ref{prop:single_fractional} follows by induction, and we color $\chi$ \emph{black}. Assume  now that $j$ is a fractional job that has $\chi$ as its time-window (i.e., $\chi_j =\chi$). If there is no other fractional job that has its time-window (strictly) contained in $\chi$, Property~\ref{prop:single_fractional} is trivially satisfied. Therefore, assume that there are other fractional jobs $j_1, j_2,\ldots, j_l$ that have their time-windows (strictly) contained in $\chi$.
Now, we decrease the fractional area (i.e., the quantity $a_j \hat{x}_j$) of $j$ by $\Delta$ and increase the  fractional area of jobs in the set
$\{j_1, j_2,\ldots, j_l\}$ by $\Delta_k$ for job $j_k$,
such that $\Delta = \sum_{k\in [l]}\Delta_k$.
Formally, we set $\hat{x}_j\rightarrow \hat{x}_j - \frac{\Delta}{a_j}$ and $\hat{x}_{j_k} \rightarrow \hat{x}_{j_k} + \frac{\Delta_k}{a_{j_k}}$.
We choose these increments such that either $\hat{x}_j$ becomes $0$, or for each $k\in [l]$, $\hat{x}_{j_k}$ becomes $1$. Clearly, in both scenarios, Property~\ref{prop:single_fractional} is satisfied, and we color $\chi$ \emph{black}.

When all nodes are colored {\em black}, we round up the remaining fractional jobs. Namely, for all jobs $j$ such that $\hat{x}_j \in (0,1)$, we set $\hat{x}_j = 1$.  It is important to note that by doing so we may violate the knapsack constraints. However, in Theorem~\ref{thm:rounding}, we bound the violation.

\begin{theorem}
\label{thm:rounding}
Suppose $\mathcal{I} = (J,M,\mathcal{L})$ is a laminar instance of {\em \MT} with optimal profit $W$ and $\forall j\in J$:  $p_j\leq \lambda|\chi_j|$.
For any $\omega \in (0,1-\frac{\lambda}{m}]$, the subset $S= \{j\in J:  \hat{x}_j = 1\}$, obtained as above, satisfies
$\sum_{j\in S} w_j \geq \omega W$, and for any $\chi\in \mathcal{L}$, $\sum_{j\in S:\chi_j\subseteq \chi} a_j \leq (\omega + \frac{\lambda}{m}) m|\chi|$.
\end{theorem}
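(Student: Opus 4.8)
The plan is to prove the two assertions separately: the profit bound $\sum_{j\in S} w_j \ge \omega W$ and the per-window area bound $\sum_{j\in S:\chi_j\subseteq\chi} a_j \le (\omega+\frac{\lambda}{m})m|\chi|$.

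For the profit bound, I would first argue that the optimal LP value is already at least $\omega W$. Let $O\subseteq J$ be an optimal feasible schedule, so $\sum_{j\in O} w_j = W$. Since at most $m|\chi|$ units of area can be processed inside any window $\chi$ (there are $m$ hosts and $|\chi|$ slots, and every job $j$ with $\chi_j\subseteq\chi$ must run entirely within $\chi$), the indicator vector of $O$ satisfies $\sum_{j\in O:\chi_j\subseteq\chi} a_j \le m|\chi|$ for every $\chi\in\mathcal{L}$. Hence the scaled vector $\omega\cdot\mathbf{1}_{O}$ is feasible for the LP and has value $\omega W$, so $\sum_{j} w_j x_j^* \ge \omega W$. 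It then remains to show the rounding never decreases the objective. Each elementary step of the bottom-up procedure transfers fractional area, keeping the total fixed, between a job $j$ at a node $\chi$ and the fractional jobs strictly below it; for a fixed allocation of the increments the objective $\sum_j w_j\hat{x}_j$ is linear in the transferred amount $\Delta$, so I can always push the mass in the direction (from $j$ downward, or from the jobs below up into $j$) that does not decrease the objective, stopping at an endpoint that reduces the number of fractional jobs and restores Property~\ref{prop:single_fractional}. The final rounding up of the remaining fractional jobs only adds jobs. Therefore $\sum_{j\in S} w_j = \sum_j w_j\hat{x}_j \ge \sum_j w_j x_j^* \ge \omega W$.

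For the area bound, fix $\chi\in\mathcal{L}$ and write $A(\chi):=\sum_{j:\chi_j\subseteq\chi} a_j \hat{x}_j$. The starting solution satisfies $A(\chi)\le\omega m|\chi|$, and I would show this is preserved (up to the final round-up) by the pipage moves: every move confined to $\chi$ or to a descendant of $\chi$ involves only jobs whose windows lie inside $\chi$, so it conserves $A(\chi)$ exactly. The delicate case is a move at a proper ancestor of $\chi$, which can only raise $\hat{x}_{j_k}$ of fractional jobs lying inside $\chi$; here I use Property~\ref{prop:single_fractional} together with laminarity. Indeed, at termination the fractional jobs contained in the subtree of $\chi$ lie on pairwise distinct root-to-leaf paths, so in a laminar family their windows are pairwise \emph{disjoint} subintervals of $\chi$. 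Rounding each such job $j$ up to $1$ adds area $a_j = s_j p_j\le p_j\le\lambda|\chi_j|$, and since the windows $\chi_j$ are disjoint and contained in $\chi$, the total area added by the final round-up is at most $\lambda\sum_j|\chi_j|\le\lambda|\chi|=\frac{\lambda}{m}\,m|\chi|$. Combining with the maintained bound $\omega m|\chi|$ on the non-rounded part yields $\sum_{j\in S:\chi_j\subseteq\chi} a_j\le\left(\omega+\frac{\lambda}{m}\right)m|\chi|$.

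The main obstacle is precisely the area accounting under moves at proper ancestors of $\chi$, which shift fractional mass across levels of the tree and can temporarily raise $A(\chi)$, in apparent tension with the profit-preserving choice of move direction. The crux is to show that all such excess is absorbed by the disjoint fractional windows of Property~\ref{prop:single_fractional}: any mass pushed into the subtree of $\chi$ comes to rest on jobs eventually set to $1$, whose windows are disjoint, so the cumulative surplus over $\omega m|\chi|$ never exceeds the slack $\lambda|\chi|$. Establishing this invariant---that throughout the procedure the area in each window stays within $\omega m|\chi|$ plus the contribution of a disjoint collection of windows of total length at most $|\chi|$---is the key step, and the slackness hypothesis $p_j\le\lambda|\chi_j|$ together with $s_j\le 1$ then converts the length bound into the claimed additive violation $\frac{\lambda}{m}\,m|\chi|$.
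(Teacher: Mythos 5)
Your proof is correct and follows the same skeleton as the paper's: scale the indicator vector of an optimal schedule by $\omega$ to lower-bound the LP optimum, observe that moves conserve total fractional area within the subtree where they occur, and bound the eventual violation of the knapsack constraint at each $\chi$ by the total area of the fractional jobs inside $\chi$, which is at most $\lambda|\chi|$. Two local steps differ, and one deserves a caveat. For profit-monotonicity, the paper keeps the move direction fixed as specified in the rounding procedure (always decrease the job at the current node, increase fractional descendants) and justifies it by an exchange argument: at LP optimality the density $w_j/a_j$ of the job being decreased cannot exceed that of any fractional job in $F_j$, since otherwise transferring $\epsilon$ of fractional area from a lower-density descendant to $j$ would improve $\mathbf{x}^*$ without violating any constraint. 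You instead invoke linearity in $\Delta$ and adaptively push mass in whichever direction is non-losing. That is a sound pipage-style argument, but it proves the guarantee for a \emph{modified} procedure, whereas the theorem concerns the subset ``obtained as above,'' i.e., by the fixed downward moves; the density observation is exactly the missing line showing that the downward direction is always a non-losing choice, which reconciles your argument with the procedure as specified (your area accounting survives bidirectional moves anyway, since upward moves only decrease $A(\chi)$ for descendants $\chi$). For the bound $\sum_{j'\in\mathcal{F}(\chi)}a_{j'}\leq\lambda|\chi|$, the paper argues by induction on the tree, while you derive it directly from Property~\ref{prop:single_fractional} plus laminarity: two fractional jobs with nested windows would share a root-to-leaf path, so the fractional windows are pairwise disjoint subintervals of $\chi$ and $\sum a_{j'}\leq\lambda\sum|\chi_{j'}|\leq\lambda|\chi|$; this is equivalent and arguably cleaner. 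One point to tighten: the disjoint family should be taken as the snapshot of fractional jobs at the moment $\chi$ is colored black (using the paper's observation that the procedure never creates new fractional jobs), since jobs that ancestor moves later push all the way to $1$ are no longer fractional ``at termination''; your closing invariant — surplus over $\omega m|\chi|$ bounded by the net increase $\sum_{j'\in\mathcal{F}(\chi)}a_{j'}(1-\hat{x}_{j'})\leq\lambda|\chi|$ — is the right formalization, and only final values of the $\hat{x}_{j'}$ matter, so the accounting closes.
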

\begin{proof}
We first observe that any optimal solution $\mathbf{x}^*$ for the LP
satisfies:
$\sum_{j\in J} w_j x_j^* \geq \omega W$. Indeed, consider an optimal solution $O$ for the instance $\mathcal{I}$. We can construct a fractional feasible solution $\mathbf{x}'$ for the LP by setting $x'_j = \omega$ if $j\in O$; otherwise, $x'_j = 0$. Clearly, $\mathbf{x}'$ is a feasible solution for the LP with profit ${\omega W}$.

Consider an integral solution $\hat{\mathbf{x}}$, obtained by applying the rounding procedure on $\mathbf{x}^*$. We first show that $\sum_{j\in J} w_j  \hat{x}_j \geq \omega W$. To this end, we prove that $\sum_{j\in J} w_j \hat{x}_j \geq\sum_{j\in J}w_j x_j^* \geq \omega W$. Suppose we decrease the fractional area of a job $j$ by an amount $\Delta$, i.e., we set $\hat{x}_j \leftarrow \hat{x}_j- \frac{\Delta}{a_j}$.
By the virtue of our procedure, we must simultaneously increase the fractional area of some subset of jobs $F_j$, where for each $k\in F_j$ we have $\chi_k \subset \chi_j$. Further, the combined increase in the fractional area of the jobs in $F_j$ is the same $\Delta$. Now, we observe that the density of job $j$ (i.e., $\frac{w_j}{a_j}$) cannot be higher than any of the jobs in $F_j$. Indeed, if $j'\in F_j$ has density strictly lower than $j$, then the optimal solution $\mathbf{x}^*$ can be improved by decreasing the fractional area of ${j'}$ by some $\epsilon$ while increasing that of $j$ by the same amount (it is easy to see that no constraint is violated in this process) -- a contradiction. Therefore, our rounding procedure will never result in a loss, and $\sum_{j\in J} w_j \hat{x}_j \geq\sum_{j\in J}w_j x_j^* \geq \omega W$.

We now show that, for each $\chi \in \mathcal{L}$, $\sum_{j\in J:\chi_j\subseteq \chi} a_j  \hat{x}_j \leq (\omega + \frac{\lambda}{m}) m|\chi|$. First, observe that for any {\em gray} interval $\chi$ the total fractional area is \emph{conserved}. This is true because there is no transfer of fractional area from the subtree rooted at $\chi$ to
a node outside this subtree until $\chi$ is colored black. Now, consider an interval $\chi$ that is colored \emph{black}. We note that for any job $j$ with $x_j^* =0$, our algorithm ensures that $\hat{x}_j =0$, i.e., it creates no new fractional jobs. Consider the vector $\mathbf{\hat{x}}$ when the interval $\chi$ is converted from \emph{gray} to \emph{black}. At this stage, we have
that the total  (fractional) area packed in the subtree rooted at $\chi$ is
$V(\chi)\stackrel{def}{=}\sum_{j\in J:\chi_j\subseteq \chi} a_j  \hat{x}_j \leq \omega m|\chi|$.
Let $\mathcal{F}(\chi)$ denote the set of all fractional jobs $j'$ that have their time-windows contained in $\chi$ (i.e., $\chi_{j'} \subseteq \chi$).
We claim that the maximum increase in  $V(\chi)$ by the end of the rounding procedure is at most $\sum_{j'\in \mathcal{F}(\chi)} a_{j'}$.
This holds since our procedure does not change the variables $\hat{x}_j \in \{0,1\}$. Thus, the maximum increase in the total area occurs due to rounding all fractional jobs into complete ones,
after all nodes are colored black.

To complete the proof, we now show that  the total area of the fractional jobs in the subtree rooted at $\chi$ satisfies
 $\mathcal{A}[\chi] \stackrel{def}{=} \sum_{j'\in \mathcal{F}(\chi)}a_{j'}
\leq \lambda |\chi|$. We prove this by induction on the level of node $\chi$. Clearly, if $\chi$ is a leaf then the claim holds, since there can exist at most one fractional job $j$ in $\chi$,
 and $a_j \leq p_j \leq \lambda|\chi|$. Suppose  that $\{\chi_1, \chi_2, \ldots \chi_l\}$ are the children of $\chi$. If there is a fractional job $j$ with $\chi_j=\chi$ then, by Property~\ref{prop:single_fractional}, there are no other fractional jobs with time-windows contained in $\chi$. Hence, $\mathcal{A}[\chi] = a_j \leq \lambda|\chi|$. Suppose there is no fractional job
  with $\chi_j=\chi$; then, by
the induction hypothesis: $\mathcal{A}[\chi_k] \leq \lambda |\chi_k|$ for all $k\in [l]$. Further, $\sum_{k\in [l]} |\chi_k| \leq |\chi|$ and $\mathcal{A}[\chi] = \sum_{k\in [l]} \mathcal{A}[\chi_k] \leq \sum_{k\in [l]} \lambda |\chi_k| \leq \lambda |\chi|$.
\end{proof}

Let $O$ be
an optimal solution for $\mathcal{I}$ satisfying
$$\forall \chi\in \mathcal{L}~: ~\sum_{j\in O:\chi_j\subseteq \chi} a_j \leq c m|\chi|,$$ for
some $c \geq 1$. Then it is easy to verify that any optimal solution $\mathbf{x}^*$ for the LP
satisfies:
$\sum_{j\in J} w_j x_j^* \geq \frac{\omega}{c} W$. Hence, we have

\begin{corollary}
\label{cor:thm1}
Suppose $\mathcal{I} = (J,M,\mathcal{L})$ is a laminar instance of {\em \MT}, such that $\forall j\in J:$  $p_j\leq \lambda|\chi_j|$.
Let $S^+ \subseteq J$ be a subset of jobs of total profit $W$ satisfying
$\forall \chi\in \mathcal{L}$: $\sum_{j\in S^+:\chi_j\subseteq \chi} a_j \leq c m|\chi|$, for some $c \geq 1$.
Then, for any $\omega \in (0, 1 - \frac{\lambda}{m}]$, there exists a subset $S \subseteq J$ satisfying
$\sum_{j\in S} w_j \geq \frac{\omega}{c}W$, such that $\forall \chi\in \mathcal{L}$, $\sum_{j\in S:\chi_j\subseteq \chi} a_j \leq (\omega +\frac{\lambda}{m}) m|\chi|$.
\end{corollary}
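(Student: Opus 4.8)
The plan is to mirror the proof of Theorem~\ref{thm:rounding} almost verbatim, changing only the way we certify that the LP optimum is large. The sole new ingredient is to exhibit a \emph{feasible} fractional solution built from $S^+$ whose objective value is at least $\frac{\omega}{c}W$; everything else---the rounding procedure and the two quantitative guarantees it produces---transfers without modification, since the generalization from $c=1$ (the implicit bound used in Theorem~\ref{thm:rounding}) to arbitrary $c\geq 1$ is absorbed by a single scaling factor.

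First I would construct the witness solution. Set $x'_j=\frac{\omega}{c}$ for every $j\in S^+$ and $x'_j=0$ otherwise. Because $\omega\leq 1-\frac{\lambda}{m}\leq 1$ and $c\geq 1$, we have $\frac{\omega}{c}\in(0,1]$, so the box constraints hold. For each $\chi\in\mathcal{L}$ the area constraint of the LP reads
\[
\sum_{j:\chi_j\subseteq\chi} a_j x'_j = \frac{\omega}{c}\sum_{j\in S^+:\chi_j\subseteq\chi} a_j \leq \frac{\omega}{c}\cdot c\,m|\chi| = \omega m|\chi|,
\]
so $\mathbf{x}'$ is feasible, and its value is $\frac{\omega}{c}\sum_{j\in S^+} w_j=\frac{\omega}{c}W$. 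Hence any optimal LP solution $\mathbf{x}^*$ satisfies $\sum_{j\in J} w_j x_j^*\geq\frac{\omega}{c}W$. This is the only step that uses the hypothesis on $S^+$.

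Next I would invoke the rounding procedure of Theorem~\ref{thm:rounding} on $\mathbf{x}^*$, obtaining $\hat{\mathbf{x}}$ and $S=\{j:\hat{x}_j=1\}$. The profit-preservation argument given there---each transfer of fractional area moves weight from a job $j$ to jobs in $F_j$ of no smaller density, so $\sum_j w_j\hat{x}_j\geq\sum_j w_j x_j^*$---is phrased purely in terms of the LP optimum and never references the comparison solution; it therefore applies unchanged and yields $\sum_{j\in S} w_j\geq\frac{\omega}{c}W$. Likewise, the area-violation bound of Theorem~\ref{thm:rounding}, combining conservation of area in gray subtrees (giving $V(\chi)\leq\omega m|\chi|$ when $\chi$ turns black) with the inductive estimate $\mathcal{A}[\chi]\leq\lambda|\chi|$ on the total fractional area inside any subtree, depends only on $\omega$ and $\lambda$. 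It thus delivers $\sum_{j\in S:\chi_j\subseteq\chi} a_j\leq(\omega+\frac{\lambda}{m})m|\chi|$ for every $\chi$, as required.

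I expect essentially no obstacle beyond bookkeeping. The one point to be careful about is to cite the two guarantees of Theorem~\ref{thm:rounding} as consequences of LP feasibility and optimality \emph{alone}, so that reusing them with the new, larger lower bound $\frac{\omega}{c}W$ on the LP value is clearly legitimate; the scaling factor $\frac{\omega}{c}$ then propagates cleanly into the final profit bound while leaving the area guarantee untouched.
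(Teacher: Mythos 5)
Your proposal is correct and matches the paper's intended argument: the paper proves the corollary exactly by observing that the scaled witness $x'_j=\frac{\omega}{c}$ for $j\in S^+$ is LP-feasible with value $\frac{\omega}{c}W$ (stated as ``easy to verify'' in the remark preceding the corollary), after which the rounding guarantees of Theorem~\ref{thm:rounding} apply verbatim. Your only addition is spelling out the feasibility check and noting that the profit-preservation and area-violation arguments depend solely on LP optimality and feasibility, which is a faithful, slightly more detailed rendering of the same proof.
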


\myparagraph{Phase 2}
We refer to host $i$ at time $t$ as a \emph{bin} $(i,t)$.
In the course of the allocation phase, we label a \emph{bin} with one of three possible colors: \emph{white}, \emph{gray} or \emph{black}. Initially, all bins are colored \emph{white}.
We color a bin $(i,t)$ gray when some job $j$ is assigned to host $i$ at time $t$, and color it black when we decide to assign no more jobs to this bin.
 Our algorithm works in a bottom-up fashion and marks an interval $\chi$ as \emph{done} when it has successfully completed all the jobs $j$ with $\chi_j\subseteq \chi$. Consider an interval $\chi$ such that any $\chi' \subset \chi$ has already been marked \emph{done}.

Let $j \in J$ be a job with time-window $\chi_j = \chi$, that has not been processed yet. To complete job  $j$, we must pick $p_j$ distinct time slots
 in $\chi$ and assign it to a bin in each slot. Suppose that we have already assigned the job to $p_j' < p_j$ slots so far. Denote by {\em avail}$(j)\subseteq \chi$ the subset of time slots where $j$ has not been assigned yet.
 We pick the next slot and bin as shown in Algorithm~\ref{algm:schedule}.
\negA
\remove{
\begin{algorithm}
\caption{Resource allocation to job $j$ in a single time slot.}
\begin{algorithmic}[1]
\State If there exists a gray bin in  {\em avail}$(j)$ {\bf goto} Step~\ref{label:step2:maxT}; otherwise, pick any white bin $(i,t)$ in \emph{avail}$(j)$ and assign $j$ to host $i$ at time $t$. Color the bin $(i,t)$ gray and {\bf exit}. If no such white bin exists, report {\bf fail}.
\State \label{label:step2:maxT}Let $(i,t)$ be a gray bin in {\em avail}$(j)$, and $S_{(i,t)}$ the set of jobs assigned to this bin. If $\sum_{j'\in S_{(i,t)}}s_{j'} + s_j \leq 1$
 then assign $j$ to host $i$ at time $t$ and {\bf exit}; otherwise, {\bf goto} Step~\ref{label:step3:maxT}.
\State \label{label:step3:maxT}Pick a white bin $(i',t')$ in {\em avail}$(j)$ and assign $j$ to host $i'$ at time $t'$. Color $(i,t)$ and $(i',t')$ as black and pair up $(i,t) \leftrightarrow (i',t')$. If no such white bin exists, report {\bf fail}.
\end{algorithmic}
\end{algorithm}
}
\begin{algorithm}
\caption{Resource allocation to job $j$ in a single time slot}
\label{algm:schedule}
\begin{algorithmic}[1]
\If {there exists a gray bin $(i,t)$ in  {\em avail}$(j)$}
  \State{let $S_{(i,t)}$ be the set of jobs assigned to this bin}
  \If {$\sum_{j'\in S_{(i,t)}}s_{j'} + s_j \leq 1$} \label{label:step2:maxT}
    \State assign $j$ to host $i$ at time $t$
  \ElsIf {there exists a white bin $(i',t')$ in  {\em avail}$(j)$} \label{label:step3:maxT}
    \State assign $j$ to host $i'$ at time $t'$.
    \State color $(i,t)$ and $(i',t')$ black
    \State pair up $(i,t) \leftrightarrow (i',t')$
  \Else \State{report {\bf fail}}
  \EndIf
\ElsIf {there exists a white bin $(i,t)$ in \emph{avail}$(j)$}
  \State{assign $j$ to host $i$ at time $t$}
  \State{color the bin $(i,t)$ gray}
\Else \State{report {\bf fail}}
\EndIf
\end{algorithmic}
\end{algorithm}
\begin{theorem}
\label{thm:group-packing} For any $\lambda < 1-\frac{2}{m+2}$, there exists a
$\frac 12 - {\lambda} \left(\frac 12 +\frac{1}{m} \right)$-approximation
algorithm for the laminar {\em \MT\ }problem, assuming that $p_j \leq \lambda|\chi_j|$ for all $j  \in J$.
\end{theorem}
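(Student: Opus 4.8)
The plan is to run the two phases in sequence and argue that the area guarantee of Phase~1 is exactly strong enough to make Phase~2 succeed. By Theorem~\ref{thm:rounding}, Phase~1 returns a subset $S\subseteq J$ with $\sum_{j\in S}w_j\geq \omega W$ (where $W$ is the optimum) and with $\sum_{j\in S:\chi_j\subseteq\chi}a_j\leq(\omega+\frac{\lambda}{m})m|\chi|$ for every $\chi\in\mathcal L$. Thus the whole theorem reduces to one feasibility claim: \emph{if $S$ satisfies these area constraints, then Algorithm~\ref{algm:schedule} never reports \textbf{fail}}. Granting this, $S$ is feasibly scheduled and has profit at least a fraction $\omega$ of the optimum, so the algorithm is an $\omega$-approximation. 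I would then set $\omega=\frac12-\lambda(\frac12+\frac1m)$, which is precisely the largest value with $\omega+\frac{\lambda}{m}\leq\frac{1-\lambda}{2}$, and which is positive exactly when $\lambda<\frac{m}{m+2}=1-\frac{2}{m+2}$, matching the hypothesis; one also checks $\omega\leq 1-\frac{\lambda}{m}$, so Phase~1 applies.

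To establish the feasibility claim I would process the laminar tree bottom-up, exactly as Phase~2 does, so that when interval $\chi$ is handled every $\chi'\subset\chi$ is already \emph{done} and the only remaining jobs inside $\chi$ are those with $\chi_j=\chi$. Two invariants of Algorithm~\ref{algm:schedule} drive the count. \emph{(i) At every time slot $t$ there is at most one gray bin at all times:} a bin is turned gray only in the branch entered when \emph{no} gray bin exists in \emph{avail}$(j)$, while the pairing branch turns gray bins black rather than creating new ones. \emph{(ii) Black bins are created only in pairs $(i,t)\leftrightarrow(i',t')$ of combined load strictly above $1$:} the pairing branch is triggered precisely when a job does not fit into a gray bin $(i,t)$, so the load of $(i,t)$ plus $s_j$ already exceeds $1$, while $(i',t')$ receives that job.

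Now suppose Algorithm~\ref{algm:schedule} reports \textbf{fail} while scheduling a job $j$ with $\chi_j=\chi$. At that instant $j$ has been placed in $p_j'<p_j$ slots, so the number of slots in \emph{avail}$(j)$ equals $|\chi|-p_j'>|\chi|-\lambda|\chi|=(1-\lambda)|\chi|$, and failure means no white bin exists in any slot of \emph{avail}$(j)$; hence all $m$ bins in each such slot are non-white. I would then lower-bound the total area packed in the subtree of $\chi$: by invariant~(ii) the black bins, being paired, contribute more than $\frac12$ each, and by invariant~(i) each such slot carries at most one gray bin, whose load is $>1-s_j$ since $j$ did not fit into it. Summing over the more than $(1-\lambda)|\chi|$ slots of \emph{avail}$(j)$ would show the packed area exceeds $\frac{m(1-\lambda)}{2}|\chi|=(\omega+\frac{\lambda}{m})m|\chi|$, contradicting the Phase~1 area constraint for $\chi$. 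Hence no failure occurs and $S$ is scheduled feasibly.

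The step I expect to be the main obstacle is precisely this area lower bound, because the pairing argument alone certifies only $\frac{m-1}{2}$ per slot, so the single gray bin per slot must be shown to supply the remaining $\frac12$. This is immediate when the failing job is small, $s_j\leq\frac12$, since then the gray load $>1-s_j\geq\frac12$ and every non-white bin is at least half full on average. The delicate case is $s_j>\frac12$, where the gray bin may carry load only slightly above $1-s_j<\frac12$; there the accounting must instead exploit the relation (gray load)$+s_j>1$ together with $s_j\leq1$ and the fact that at most one job of size exceeding $\frac12$ can occupy a bin, so that the size of the failing job itself compensates for the light gray bins. Making this bookkeeping yield the uniform bound $(\omega+\frac{\lambda}{m})m|\chi|$ for all values of $s_j$ is the crux of the argument.
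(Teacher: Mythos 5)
Your reduction to the feasibility claim is exactly the paper's: invoke Theorem~\ref{thm:rounding} with $\omega=\frac12-\lambda\left(\frac12+\frac1m\right)$, then show Algorithm~\ref{algm:schedule} never reports \textbf{fail}; and both of your invariants (at most one gray bin per time slot; black bins created in pairs of combined load exceeding $1$) are sound and consistent with the paper. The gap is in your area-based contradiction, in precisely the step you flag as the crux. You assert that every gray bin in a slot of \emph{avail}$(j)$ has load greater than $1-s_j$ ``since $j$ did not fit into it.'' But Algorithm~\ref{algm:schedule} examines only \emph{one} gray bin per invocation, and whenever a gray bin fails to accommodate $j$ it is immediately colored black via pairing. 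Hence at the moment of failure at most one gray bin is certified too full for $j$; all other gray bins sitting in \emph{avail}$(j)$ slots were never tested against $j$, and each can be nearly empty, since a bin turns gray upon receiving a single job of possibly tiny requirement. The half-unit of area per slot you need from the gray bins is therefore unavailable, and the problem is not confined to $s_j>\frac12$: even for small $s_j$, the black pairs alone certify total area only greater than $\frac{(m-1)(1-\lambda)}{2}|\chi|$, which falls short of the target $\left(\omega+\frac{\lambda}{m}\right)m|\chi|=\frac{(1-\lambda)m}{2}|\chi|$ by exactly the gray bins' hoped-for contribution, so no contradiction with the Phase~1 constraint follows.

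The paper sidesteps this by counting \emph{bins} rather than area. It calls a bin \emph{bad} (for the current job $j$ with $\chi_j=\chi$) if it is gray or lies in a time slot where $j$ is already scheduled, and shows by induction up the laminar tree that the number of bad bins is at most $\lambda m|\chi|$ --- this is where the slackness hypothesis $p_j\leq\lambda|\chi_j|$ is actually spent, and it charges lightly loaded gray bins to the slack so that their loads are irrelevant. Your invariant on black pairs then bounds the number of black bins by $2\left(\omega+\frac{\lambda}{m}\right)m|\chi|=(1-\lambda)m|\chi|$, so bad plus black is strictly below $m|\chi|$, and some bin in $\chi$ is neither: a \emph{white} bin in a slot of \emph{avail}$(j)$, which the algorithm would have used instead of failing. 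In other words, the correct contradiction is the existence of a white bin, not an area overflow; to salvage your area argument you would have to control the untested gray bins first, which is exactly what the bad-bin count is designed to avoid.
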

\begin{proof}
Given an instance $\mathcal{I} = (J, M, \mathcal{L})$ and a parameter $\omega\in (0,1-\frac{\lambda}{m}]$, let $W$ denote the optimal profit. We apply Theorem~\ref{thm:rounding}
 to find a subset of jobs $S\subseteq J$ of profit $\omega W$, such that for any $\chi\in \mathcal{L}$: $\sum_{j\in S: \chi_j\subseteq \chi} a_j \leq (\omega + \frac{\lambda}{m}) m|\chi|$.
We now show that there is a feasible resource assignment to the jobs in $S$ for
$\omega = \frac 12 - {\lambda} \left(\frac 12 +\frac{1}{m} \right)$.
Clearly, this would imply the theorem.

We show that for the above value of $\omega$ Algorithm~\ref{algm:schedule} never reports {\bf fail}, i.e., the resource is feasibly allocated to all jobs in $S$.
 Assume towards contradiction that  Algorithm~\ref{algm:schedule} reports {\bf fail}
while assigning job $j$. Suppose that $j$ was assigned to $p_j' < p_j$ bins before this {\bf fail}.
For $t \in \chi = \chi_j$, we say that bin $(i,t)$
is \emph{bad} if either $(i,t)$ is colored gray, or $j$  has been assigned to some bin $(i',t)$ in the same time slot. We first show that the following invariant holds, as long as no job $j^+$ such that $\chi\subset \chi_{j^+}$ has been allocated the resource: the number of bad bins while processing job $j$ is at most $ \lambda m |\chi|$.   Assuming that the claim is true in each of the child
 intervals of $\chi$, $\{\chi_1, \chi_2\ldots, \chi_l\}$, before any job with time
window $\chi$ is allocated the resource, we have  the number of bad bins = number of gray bins is at most $ \sum_{k\in [l]}\lambda m|\chi_k| \leq \lambda m|\chi|$.
Now, consider the iteration in which we assign $j$ to host $i$ at time $t$. If $(i,t)$ is a gray bin, then the number of bad bins cannot increase. On the other hand, suppose $(i,t)$ was white
before we assign $j$. If there are no gray bins in $\chi$, then the number of bad bins is at most $mp_j \leq \lambda m|\chi|$. Suppose there exist some gray bins, and consider those bins of the form $(i',t')$ such that job $j$ has not been assigned to any host at time $t'$. If there are no such bins, then again the number of bad bins is at most $mp_j \leq \lambda m|\chi|$. Otherwise, we must have considered one such gray bin $(i',t')$ and failed to assign $j$ to host $i'$ at time $t'$. By the virtue of the algorithm, we must have colored both $(i,t)$ and $(i',t')$ black. Thus, the number of bad bins does not increase, and our claim holds.
Now, since we pair the black bins $(i,t)\leftrightarrow (i',t')$ only if $\sum_{j\in S_{(i,t)}}s_j$ + $ \sum_{j'\in S_{(i',t')}}s_{j'} > 1$, the total number of black bins $< 2(\omega + \frac{\lambda}{m}) m|\chi|$. Hence, the total number of bins that are black or bad is $< (\lambda + 2(\omega + \frac{\lambda}{m})) m|\chi|$. Now, setting
$\omega =  \frac 12 - {\lambda} \left(\frac 12 +\frac{1}{m} \right)$, there should be at least one bin $(i^*,t^*)$ that is neither black nor bad. But in this case, we could have assigned $j$ to host $i^*$ at time $t^*$, which is a contradiction to the assumption that the algorithm reports a {\bf fail}.
\end{proof}

For convenience, we restate the claim shown in the proof of Theorem~\ref{thm:group-packing}.
\begin{corollary}
\label{cor:thm2}
Let $\mathcal{I} = (J,M,\mathcal{L})$ be a laminar instance where $p_j\leq \lambda|\chi_j|$ $\forall j\in J$, for
$\lambda \in (0,1)$.
Let $S \subseteq J$ be a subset of jobs, such that for any $\chi\in \mathcal{L}$: $\sum_{j\in S: \chi_j\subseteq \chi} a_j \leq (\omega + \frac{\lambda}{m}) m|\chi|$, where
$\omega \leq \frac 12 - {\lambda} \left(\frac 12 +\frac{1}{m} \right)$. Then, there exists a feasible resource assignment to the jobs in $S$.
\end{corollary}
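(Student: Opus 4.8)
The statement is precisely the feasibility claim buried in the proof of Theorem~\ref{thm:group-packing}, now isolated and stated for an arbitrary $\omega$ meeting the bound; so the plan is to reproduce the Phase~2 analysis with $\omega$ treated as a parameter rather than fixed to $\tfrac12-\lambda\bigl(\tfrac12+\tfrac1m\bigr)$. I would run Algorithm~\ref{algm:schedule} on the jobs of $S$, processing the intervals of $\mathcal{L}$ bottom-up, and suppose for contradiction that it reports \textbf{fail} while assigning a job $j$ with $\chi_j=\chi$, having already placed $j$ in $p_j'<p_j$ distinct columns. Recall a bin $(i,t)$ with $t\in\chi$ is \emph{bad} if it is gray or if $j$ already occupies some bin in column $t$; at a failure no white bin remains in {\em avail}$(j)$, so every non-black bin of $\chi$ is bad.

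The heart of the argument is the invariant that, as long as no job with a window strictly containing $\chi$ has yet been allocated, the number of bad bins while processing $j$ stays at most $\lambda m|\chi|$. I would prove this by induction on the laminar tree: before any window-$\chi$ job is scheduled the bad bins are exactly the gray bins left by the children $\chi_1,\dots,\chi_l$, and by the inductive hypothesis together with $\sum_k|\chi_k|\le|\chi|$ these number at most $\sum_k\lambda m|\chi_k|\le\lambda m|\chi|$. For the step, I would check each branch of Algorithm~\ref{algm:schedule}: assigning $j$ to an already gray bin does not raise the count; opening a fresh white bin when no gray bin is available in a new column keeps the count at $mp_j\le\lambda m|\chi|$; and opening a white bin alongside a full gray bin triggers the pairing rule, blackening both, so again the count does not rise.

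Next I would bound the black bins. Since a pair $(i,t)\leftrightarrow(i',t')$ is blackened only when $\sum_{j'\in S_{(i,t)}}s_{j'}+\sum_{j'\in S_{(i',t')}}s_{j'}>1$, each pair carries more than one unit of packed area, and the hypothesis $\sum_{j\in S:\chi_j\subseteq\chi}a_j\le(\omega+\tfrac{\lambda}{m})m|\chi|$ caps the area in $\chi$; hence the number of black bins is strictly below $2(\omega+\tfrac{\lambda}{m})m|\chi|$. Combining with the invariant, the bins of $\chi$ that are black or bad number strictly less than $\bigl(\lambda+2(\omega+\tfrac{\lambda}{m})\bigr)m|\chi|$, and the hypothesis $\omega\le\tfrac12-\lambda\bigl(\tfrac12+\tfrac1m\bigr)$ makes this coefficient at most $1$. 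Thus fewer than the $m|\chi|$ bins of $\chi$ are black or bad, so some bin is neither; $j$ could have been placed there, contradicting the \textbf{fail}.

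The only genuine difficulty is the bad-bin invariant: the black-bin count and the closing arithmetic are routine, but the invariant needs the laminar structure to aggregate the children's gray bins and a faithful case analysis of every branch of the algorithm, and it is valid only under the qualifier that no job with a strictly larger window has yet been scheduled---otherwise migrating such a job into $\chi$ would spoil the per-interval accounting. I would finally note that only the boundary value $\omega=\tfrac12-\lambda\bigl(\tfrac12+\tfrac1m\bigr)$ must be checked, since the black-or-bad bound is increasing in $\omega$ and hence the conclusion for any smaller admissible $\omega$ follows a fortiori.
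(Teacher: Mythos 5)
Your proposal is correct and follows essentially the same route as the paper: the paper's own proof of this corollary is exactly the Phase~2 argument inside Theorem~\ref{thm:group-packing} (bad-bin invariant of $\lambda m|\chi|$ proved bottom-up on the laminar tree, black bins bounded by $2(\omega+\frac{\lambda}{m})m|\chi|$ via the pairing rule, and the counting contradiction), which you reproduce faithfully with $\omega$ as a parameter. Your closing observation that the bound is monotone in $\omega$, so only the boundary value needs checking, is a harmless addition.
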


\mysubsection{The General Case}
\label{sec:MT_general}
We use a simple transformation of general instances of \MT\ into laminar instances and prove an $\Omega(1)$-approximation guarantee. Let $\mathcal{W}$ denote the set of all time-windows for jobs in $J$, i.e., $\mathcal{W} = \{\chi_j: j\in J\}$. We now construct a laminar set of intervals $\mathcal{L}$ and a mapping $\mathfrak{L}:\mathcal{W}\rightarrow \mathcal{L}$.
Recall that $T = \max_{j \in J} d_j$.
The construction is done via a binary tree $\mathcal{T}$ whose nodes  correspond to intervals $[l,r]\subseteq [T]$.
The construction is described in Algorithm~\ref{algm:genToLam}.

\begin{algorithm}
\caption{Transformation into a laminar set\label{algm:genToLam}}
\begin{algorithmic}[1]
\Require{Job set $J$ and $\mathcal{W} = \{\chi_j: j\in J\}$}
\Ensure{Laminar set of intervals $\mathcal{L}$ and a mapping $\mathfrak{L}:\mathcal{W}\rightarrow \mathcal{L}$}
\State let $[T]$ be the root node of tree $\mathcal{T}$
\While {$\exists$ a leaf node $[l,r]$ in $\mathcal{T}$ such that $r -l >1$}
  \State add to $\mathcal{T}$ two nodes $[l, \lfloor \frac{l+r}{2}\rfloor]$
  and $[\lfloor \frac{l+r}{2}\rfloor+1, r]$ as the children of $[l,r]$
\EndWhile
\State let $\mathcal{L}$ be the set of intervals corresponding to the nodes of $\mathcal{T}$
\State For each $\chi\in \mathcal{W}$, let $\mathfrak{L}(\chi) = \chi'$,
where $\chi'$ is the largest interval in $\mathcal{L}$ contained in $\chi$, breaking ties by picking the \emph{rightmost} interval.
\end{algorithmic}
\end{algorithm}

\begin{restatable}{lemma}{laminarlpbound}
\label{lem:laminarlpbound} In Algorithm~\ref{algm:genToLam}, the following properties hold:
\begin{enumerate}
\item For any $j\in J$, $|\chi_j| \leq 4|\mathfrak{L}(\chi_{j})|$.
\item For $\chi\in \mathcal{L}$, let
$\tilde{\chi} = \{t\in \chi_j: j\in J,\, \mathfrak{L}(\chi_j) = \chi\}$, i.e., the union of all time-windows in $\mathcal{W}$ that are mapped to $\chi$.
Then, $|\tilde{\chi}|\leq 4|\chi|$.
\end{enumerate}
\end{restatable}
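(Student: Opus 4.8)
The plan is to analyze the recursive-bisection tree $\mathcal{T}$ built in Algorithm~\ref{algm:genToLam} level by level. First I record the basic length structure: splitting $[l,r]$ of length $n$ at $\lfloor(l+r)/2\rfloor$ produces children of lengths $\lceil n/2\rceil$ and $\lfloor n/2\rfloor$, so by induction every node at depth $k$ has length in $\{\lfloor T/2^k\rfloor,\lceil T/2^k\rceil\}$. Consequently the depth-$k$ nodes partition $[T]$ into consecutive blocks whose lengths differ by at most one, and these blocks pair up into consecutive sibling pairs, each pair being exactly the two children of a depth-$(k-1)$ node. I will use this block/pairing picture throughout, and I will treat the $O(1)$ leaf levels (blocks of length $\le 2$) as easy special cases.

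For part~1, fix $j$ and set $L=|\chi_j|$. Let $k$ be the shallowest depth at which some depth-$k$ node lies entirely inside $\chi_j$; since shallower means longer, this node is exactly $\mathfrak{L}(\chi_j)$, of length at least $\lfloor T/2^k\rfloor$. By minimality of $k$, no depth-$(k-1)$ node fits inside $\chi_j$; but an interval containing no full depth-$(k-1)$ block can overlap at most two adjacent such blocks (a third would be fully contained), so $L$ is at most about twice a depth-$(k-1)$ length, hence at most four times a depth-$k$ length. This yields $|\chi_j|\le 4|\mathfrak{L}(\chi_j)|$; the only care needed is the floor/ceiling bookkeeping relating $\lceil T/2^{k-1}\rceil$ to $\lfloor T/2^k\rfloor$, which the factor $4$ comfortably absorbs.

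For part~2 the key observation is that every window $\chi_j$ with $\mathfrak{L}(\chi_j)=\chi$ both contains $\chi$ and contains \emph{no} node longer than $\chi$, as otherwise that longer node would be $\mathfrak{L}(\chi_j)$. Write $d=|\chi|$, place $\chi=C_0$ among its depth-$k$ neighbours $\dots,C_{-1},C_0,C_1,\dots$, and bound how far such a window can reach beyond $\chi$ on each side. On the side of $\chi$'s sibling, $\chi$ together with its sibling forms the parent, a node of length $>d$; hence the window cannot contain the whole sibling and reaches at most $|\text{sibling}|-1\le d$ on that side. On the opposite side the adjacent block $C$ is not $\chi$'s sibling, so the window may contain all of $C$ (its length is $\le d$), but the next block is $C$'s sibling, and those two form a parent of length $>d$; so the window must stop before completing the second block, giving a reach of at most $2d$. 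Since all windows mapped to $\chi$ share $\chi$, the union $\tilde\chi$ extends at most $d$ beyond $\chi$ on the sibling side and at most $2d$ on the other side, so $|\tilde\chi|\le d+d+2d=4|\chi|$.

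I expect part~2 to be the main obstacle. The naive route—invoking part~1 to conclude each window has length $\le 4|\chi|$ and then taking the union of intervals all containing $\chi$—only gives $|\tilde\chi|\le 7|\chi|$, so the stated factor $4$ genuinely requires the sibling/neighbour analysis above, and in particular the asymmetry that the sibling side is cut off one block earlier than the opposite side. The remaining work is routine: dispatching boundary cases where a neighbour $C_{\pm 1}$ is missing near an end of $[T]$ (which only shortens the reach) and the degenerate leaf levels, neither of which can increase $|\tilde\chi|$.
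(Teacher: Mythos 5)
Your proof is correct, but it reaches the factor $4$ by a genuinely different mechanism than the paper's. For part~1 the paper shows that $\chi_j$ cannot contain three consecutive intervals at the level of $\mathfrak{L}(\chi_j)$, and its argument explicitly invokes the \emph{rightmost} tie-breaking rule of Algorithm~\ref{algm:genToLam} (to place $\mathfrak{L}(\chi_j)$ as the rightmost of the three, followed by a left/right-child case analysis); your version --- by minimality of $k$ no depth-$(k-1)$ node fits inside $\chi_j$, hence $\chi_j$ is covered by two adjacent depth-$(k-1)$ blocks --- obtains the same conclusion by pure pigeonhole at the parent level, with no reference to tie-breaking. The divergence is sharper in part~2: the paper bounds the two overhangs of $\tilde{\chi}$ asymmetrically by \emph{direction}, namely less than $2|\chi|$ on the left (the three-consecutive-blocks argument applied to $\chi_l$) and less than $|\chi|$ on the right, the latter again via the rightmost tie-break (``otherwise an equal-size interval to the right of $\chi$ would be mapped to $\chi_r$''), whereas you bound them asymmetrically by \emph{sibling structure} --- at most $|\chi|$ on the sibling side (else the parent would be a contained node longer than $\chi$) and at most $2|\chi|$ on the non-sibling side (else the neighbouring sibling pair's parent would be contained) --- again never using the tie-breaking rule. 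A pleasant consequence of your route is that the lemma is shown to hold for \emph{any} consistent choice of a largest contained interval, whatever the tie-breaking direction, which makes it somewhat more robust; the paper's proof is wedded to the rightmost convention but avoids your sibling-pairing bookkeeping. Two small points you gesture at but would need to execute in a full write-up: your parenthetical claim that the non-sibling neighbour $C$ has length at most $d$ is not automatic (same-depth lengths can differ by one, so $|C|=d+1$ is possible), though in that case $C$ is itself a node longer than $\chi$ and cannot be fully contained, so the reach only shrinks and the $2d$ bound survives; and the depth-$k$ nodes stop partitioning $[T]$ once some branches terminate at length-$2$ leaves, so the ``blocks pair into consecutive siblings'' picture does require the short-length special cases you flagged. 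Since the paper's own proof is comparably informal about floors, ceilings and boundary blocks, neither point constitutes a gap relative to its standard.
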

\begin{proof}
To prove the first property,
it suffices to show that $\chi_j$ cannot completely contain $3$ consecutive intervals in $\mathcal{L}$ that are at the same level as $\mathfrak{L}(\chi_{j})$. Indeed, this would imply that $\chi_j$ cannot intersect more than $4$ consecutive intervals, and therefore  $|\chi_j|\leq 4|\mathfrak{L}(\chi_{j})|$. Now, suppose $\chi_j$ contains at least $3$ such consecutive intervals. Then, by the virtue of our algorithm,  $\mathfrak{L}(\chi_{j})$ is the rightmost interval.  Let $\hat \chi$ be the parent of  $\mathfrak{L}(\chi_{j})$. Two cases arise:

\noindent {\bf Case 1:} $\mathfrak{L}(\chi_{j})$ is a left child of $\hat \chi$.
Consider the two other consecutive intervals at the same level as $\mathfrak{L}(\chi_{j})$ that are contained in $\chi_j$. Observe that these two intervals are siblings; therefore, their parent (which is also in $\mathcal{L}$) is also contained in $\chi_j$. This is a contradiction to the assumption that $\mathfrak{L}(\chi_{j})$ is the largest interval in $\mathcal{L}$ contained in $\chi_j$.

\noindent {\bf Case 2:} $\mathfrak{L}(\chi_{j})$ is a right child of $\hat \chi$. We observe that the sibling of $\mathfrak{L}(\chi_{j})$ must also be contained in $\chi_j$,
implying that
$\hat \chi$
is contained in $\chi_j$, a contradiction.

We now prove the second property.
For any $\chi = [s,d]\in \mathcal{L}$,
let $\chi_l = {[s_l, d_l]}\in \mathcal{W}$ (resp.~$\chi_r = [s_r, d_r]$) be the leftmost (resp.~rightmost) interval
in $\mathcal{W}$ such that $\mathfrak{L}(\chi_l) = \chi$ (resp.~$\mathfrak{L}(\chi_r) = \chi$); then, $\tilde{\chi} = [s_l, d_r]$. Consider the intervals $\chi_1 = [s_l,s]$ and $\chi_2 = [d,d_r]$.
As argued above,
$\chi_l$ cannot contain 3 consecutive  intervals in $\mathcal{L}$ at the same level as $\chi$
Thus, $|\chi_1| < 2|\chi|$.
Also, $|\chi_2| < |\chi|$; otherwise, there is an interval
to the right of
$\chi$ of the same size that can be mapped to $\chi_r$.
Thus, $|\chi_1|+|\chi_2| < 3|\chi|$.
Now, the claim follows by observing that
$|\tilde{\chi}| = |\chi_1| + |\chi|+|\chi_2| \leq 4|\chi|$.
\end{proof}
\begin{theorem}
\label{thm:genToLam} For any $\lambda < \frac 14 -\frac{1}{2(m+2)}$, there exists a
$\frac 18 -\lambda\left( \frac 12 +\frac{1}{m}\right)$-approximation algorithm for
{\em \MT\ }, assuming that $p_j \leq \lambda|\chi_j|$ for all $j  \in J$.
\end{theorem}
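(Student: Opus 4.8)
The plan is to reduce the general instance $\mathcal{I}=(J,M,\mathcal{W})$ to the laminar instance $\mathcal{I}'=(J,M,\mathcal{L})$ produced by Algorithm~\ref{algm:genToLam}, where each job $j$ now carries the window $\mathfrak{L}(\chi_j)\subseteq\chi_j$, and then to run the laminar machinery (Corollaries~\ref{cor:thm1} and~\ref{cor:thm2}) on $\mathcal{I}'$. Two effects of the transformation must be tracked. First, the slackness degrades: by the first item of Lemma~\ref{lem:laminarlpbound}, $|\chi_j|\le 4|\mathfrak{L}(\chi_j)|$, hence $p_j\le\lambda|\chi_j|\le 4\lambda\,|\mathfrak{L}(\chi_j)|$, so $\mathcal{I}'$ has slackness $\lambda'=4\lambda$. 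Second, since $\mathfrak{L}(\chi_j)\subseteq\chi_j$, any assignment feasible for $\mathcal{I}'$ (which places job $j$ inside $\mathfrak{L}(\chi_j)$) is automatically feasible for $\mathcal{I}$; thus it suffices to produce a high-profit feasible schedule for $\mathcal{I}'$.

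The crux is to show that an optimal solution $O$ of $\mathcal{I}$, of profit $W$, serves as a witness set $S^+=O$ for Corollary~\ref{cor:thm1} with constant $c=4$, i.e.\ for every $\chi\in\mathcal{L}$,
\[
\sum_{j\in O:\ \mathfrak{L}(\chi_j)\subseteq\chi} a_j \ \le\ 4m|\chi|.
\]
I would prove this by bounding the union $U_\chi=\bigcup\{\chi_j:\mathfrak{L}(\chi_j)\subseteq\chi\}$, which is an interval containing $\chi$ since every such window meets $\chi$. Because $O$ is feasible for $\mathcal{I}$, all these jobs are scheduled inside $U_\chi$ across $m$ hosts, so their total area is at most $m|U_\chi|$, and it remains to show $|U_\chi|\le 4|\chi|$. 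Here I would reuse the reasoning of Lemma~\ref{lem:laminarlpbound}: a window whose image is a \emph{proper} descendant of $\chi$ cannot fully contain the block of $\mathcal{L}$ of length $|\chi|$ immediately to the left (resp.\ right) of $\chi$, for otherwise it would contain a laminar interval at least as large as $\chi$ lying outside $\chi$, contradicting that its image is a proper descendant; this caps its overhang at less than $|\chi|$ on each side. A window mapped \emph{exactly} to $\chi$ overhangs by less than $2|\chi|$ on the left and, by the rightmost tie-breaking rule of Algorithm~\ref{algm:genToLam}, by less than $|\chi|$ on the right. Hence $U_\chi$ extends less than $2|\chi|$ to the left and less than $|\chi|$ to the right of $\chi$, giving $|U_\chi|\le 4|\chi|$ and $c=4$.

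With the witness set established, I would chain the two corollaries on $\mathcal{I}'$ (slackness $\lambda'=4\lambda$) with a parameter $\omega$ to be fixed. Corollary~\ref{cor:thm1}, applied with $c=4$, yields a set $S$ of profit at least $\frac{\omega}{4}W$ obeying $\sum_{j\in S:\mathfrak{L}(\chi_j)\subseteq\chi}a_j\le(\omega+\frac{\lambda'}{m})m|\chi|$; Corollary~\ref{cor:thm2}, which demands $\omega\le\frac12-\lambda'(\frac12+\frac1m)$, then produces a feasible assignment of $S$ in $\mathcal{I}'$, hence in $\mathcal{I}$. Choosing $\omega=\frac12-4\lambda(\frac12+\frac1m)$ makes the guarantee
\[
\frac{\omega}{4}=\frac18-\lambda\Bigl(\frac12+\frac1m\Bigr),
\]
which is the claimed ratio.

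It remains to verify admissibility of this $\omega$. Positivity of $\omega$ is exactly the hypothesis $\lambda<\frac14-\frac{1}{2(m+2)}$, since $\frac12>4\lambda(\frac12+\frac1m)$ rearranges to $\lambda<\frac{m}{4(m+2)}=\frac14-\frac{1}{2(m+2)}$; moreover $\omega<\frac12\le 1-\frac{4\lambda}{m}$ keeps $\omega$ in the range $(0,1-\frac{\lambda'}{m}]$ required by Corollary~\ref{cor:thm1}, and $\lambda<\frac14$ ensures $\lambda'=4\lambda\in(0,1)$ for Corollary~\ref{cor:thm2}. The one genuinely delicate step is the $c=4$ area bound: as the binary split of Algorithm~\ref{algm:genToLam} rounds at midpoints, the levels of $\mathcal{L}$ are only nearly balanced, so I would phrase all overhang estimates in terms of "consecutive blocks of $\mathcal{L}$ of the same length as $\chi$," exactly as in Lemma~\ref{lem:laminarlpbound}, rather than assuming exact halving.
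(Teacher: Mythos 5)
Your proposal is correct and takes essentially the same route as the paper: transform to the laminar instance via Algorithm~\ref{algm:genToLam}, observe via Lemma~\ref{lem:laminarlpbound} that the slackness degrades to $\lambda_\ell \leq 4\lambda$, use the optimal solution as the witness set with $c=4$ in Corollary~\ref{cor:thm1}, and schedule via Corollary~\ref{cor:thm2} with $\omega = \frac 12 - 4\lambda\left(\frac 12 + \frac{1}{m}\right)$, yielding the ratio $\frac{\omega}{4} = \frac 18 - \lambda\left(\frac 12 + \frac{1}{m}\right)$ under exactly the stated condition on $\lambda$. Your explicit union-interval bound $|U_\chi| \leq 4|\chi|$, which also handles windows whose image is a proper descendant of $\chi$, merely spells out the step the paper compresses into ``It follows that'' after invoking the second property of Lemma~\ref{lem:laminarlpbound}; the underlying argument is the same.
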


\begin{proof}
Given an instance $(J,M,\mathcal{W})$ of \MT\ with slackness parameter $\lambda \in (0,1)$, we first use Algorithm~\ref{algm:genToLam} to obtain a laminar set of intervals $\mathcal{L}$ and the corresponding mapping $\mathfrak{L}:\mathcal{W}\rightarrow \mathcal{L}$.
 Consider a new laminar instance $(J_\ell = \{j_\ell:  j\in J\}, M_\ell = M, \mathcal{L})$, constructed by setting $\chi_{j_\ell} = \mathfrak{L}(\chi_j)$. Note that
  if $S_\ell \subseteq J_\ell$ is a feasible solution for this new instance, the corresponding set $S = \{j: {j_\ell} \in S_{\ell}\}$ is a feasible solution for the original instance.
   Let $\lambda_\ell$ denote the slackness parameter for the new instance. We claim that $\lambda_\ell \leq 4{\lambda}$. Assume this is not true, i.e., there exists a job $j_\ell$, such that $p_{j_\ell} > 4\lambda |\chi_{j_\ell}|$; however, by Lemma~\ref{lem:laminarlpbound}, we have $p_{j_\ell} = p_j \leq \lambda |\chi_j| \leq 4\lambda |\chi_{j_\ell}|$.
A contradiction.
 Now, suppose $O\subseteq J$ is an optimal solution of total profit $W$ for the original (non-laminar) instance. Consider the corresponding subset of jobs $O_\ell = \{j_\ell: j\in O\}$. By
 Lemma~\ref{lem:laminarlpbound}, for any $\chi\in \mathcal{L}$, $|\tilde{\chi}| \leq 4|\chi|$. It follows that, for any $\chi\in \mathcal{L}$, $\sum_{j_\ell\in O_\ell: \chi_{j_\ell}\subseteq \chi}
 a_{j_\ell} = \sum_{j\in O:\mathfrak{L}({\chi_{j}})\subseteq \chi} a_{j} \leq 4m|\chi|$.

Now, we use Corollary~\ref{cor:thm1} for the laminar instance, taking $c=4$, $S^+= O_\ell$ and $\lambda_\ell \in (0,1)$.
Then, for any $\omega \in (0, 1 - \frac{\lambda_\ell}{m}]$, there exists $S_\ell \subseteq J_\ell$ of total profit
$\sum_{j_\ell \in S_\ell} w_j \geq \frac{\omega}{c}W$, such that $\forall \chi\in \mathcal{L}$, $\sum_{j_\ell\in S_\ell:\chi_j\subseteq \chi} a_{j_\ell} \leq (\omega +\frac{\lambda_\ell}{m}) m|\chi|$.
By Corollary~\ref{cor:thm2}, there is a feasible assignment of the
resource to the jobs in $S_\ell$ for
$\omega \leq  \frac 12 -\lambda_\ell\left(\frac 12 +\frac{1}{m}\right)$.
Taking
\[
\omega = \displaystyle{\frac 12 - 4 \lambda\left( \frac 12 + \frac{1}{m}\right)}
\leq
\displaystyle{\frac 12 - \lambda_\ell\left( \frac 12 + \frac{1}{m}\right)},
\]
we have the approximation ratio
$\frac{w}{c} = \frac 18 - 4\lambda \left( \frac 18 + \frac{1}{4m}\right)$, for any
$\lambda < \frac{1}{4} - \frac{1}{2(m+2)}$.

We now return to the original instance and take for the solution the set $S= \{ j:~ j_\ell \in S_\ell \}$.
\end{proof}

\mysubsection{Eliminating the Slackness Requirements}
\label{sec:MaxT-relax-slack}
In this section we show that the slackness requirements in
Theorems~\ref{thm:group-packing} and~\ref{thm:genToLam} can be eliminated,
while maintaining
a constant approximation ratio for \MT\ . In particular, for laminar instances,
we show below that Algorithm~\ref{algm:maxToOutline}
can be used to obtain a polynomial time $\Omega(1)$-approximation for any constant slackness parameter
$\lambda \in (0,1)$. For general \MT\ instances, this leads to an $\Omega(1)$-approximation for any constant $\lambda \in (0,\frac{1}{4})$.
We also show a polynomial time $\Omega(\frac{1}{\log n})$-approximation algorithm for general \MT\ using no assumption on slackness.
We use below the next result, for instances with `large' resource requirement. 

\begin{lemma}
\label{lemma:appx_large_heights}
For any $\delta \in (0,1)$ there is an $\Omega(\frac{1}{\log(1/\delta)})$-approximation for any instance $\mathcal{I} = (J,M,\mathcal{W})$ of \MT\ satisfying $s_j \geq \delta$ $\forall~j \in J$.
\end{lemma}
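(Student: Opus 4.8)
The plan is to reduce an instance with large demands to $O(\log(1/\delta))$ instances of \emph{unit-demand} preemptive throughput maximization, for which a constant-factor algorithm is already known, and to pay the $\log(1/\delta)$ factor only once, when selecting a single demand class.

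First I would bucket the jobs by demand into geometric classes $J_k=\{\,j: s_j\in(2^{-k-1},2^{-k}]\,\}$ for $k=0,1,\dots,K-1$, where $K=\lceil\log_2(1/\delta)\rceil=O(\log(1/\delta))$; the hypothesis $s_j\ge\delta$ guarantees these classes cover all of $J$. If $O$ is an optimal solution of weight $W$ and $O_k=O\cap J_k$, then $\sum_k w(O_k)=W$, so some class $k^*$ satisfies $w(O_{k^*})\ge W/K=\Omega(W/\log(1/\delta))$. Hence it suffices to give an $\Omega(1)$-approximation for a single class, apply it to each of the $K$ classes separately, and return the best of the $K$ solutions.

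Fix a class $k$ and round every demand in $J_k$ up to the common value $2^{-k}$. Since $2^k\cdot 2^{-k}=1$, a host then holds exactly $2^k$ jobs per slot, so I claim the rounded single-class instance is \emph{equivalent} to unit-demand preemptive throughput maximization on $M'=m2^k$ identical machines: group the $M'$ machines into $m$ blocks of size $2^k$ and identify each block with a host; a schedule is feasible on one side iff it is feasible on the other, since a job occupies at most one machine, hence at most one host, per slot, and a host carries at most $2^k$ unit-demand jobs. Because rounding demands \emph{up} only shrinks capacity, any schedule feasible for the rounded instance is feasible for the true demands on $m$ hosts. For the unit-demand instance I would invoke the $\frac{1}{6+\eps}$-approximation of Kalyanasundaram and Pruhs~\cite{KP01}, which assumes no slackness; mapping its output back yields a true-feasible schedule of weight $\frac{1}{6+\eps}\,\mathrm{OPT}_{\mathrm{unit}}(m2^k)$.

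The remaining, and main, obstacle is to lower bound $\mathrm{OPT}_{\mathrm{unit}}(m2^k)$ by a constant fraction of $w(O_k)$, where a genuine factor-two capacity gap appears: since every true demand exceeds $2^{-k-1}$, the true-feasible set $O_k$ has at most $2^{k+1}-1$ jobs per host per slot, so it embeds only as a unit-demand schedule on $m2^{k+1}$ machines, whereas the instance I can actually solve has only $m2^k$ machines. I would close this gap by a machine-halving argument phrased through the flow characterization of preemptive scheduling with migration: feasibility of a job set on $N$ machines is equivalent to a transportation condition with source-to-job capacities $p_j$, job-to-slot capacities $1$, and slot-to-sink capacities $N$. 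Scaling the integral flow that schedules $O_k$ on $m2^{k+1}$ machines by $1/2$ exhibits the fractional point $x\equiv\frac12$ on $O_k$ as feasible for the scheduling LP with $m2^k$ machines, of value $\frac12 w(O_k)$; invoking the constant integrality gap of this LP (the same rounding that underlies the unit-demand guarantee) then produces an integral schedulable subset of weight $\Omega(w(O_k))$, so $\mathrm{OPT}_{\mathrm{unit}}(m2^k)=\Omega(w(O_k))$. Combining the three losses, namely the $1/K$ from class selection, the $\frac12$ from machine halving, and the $\frac{1}{6+\eps}$ from~\cite{KP01}, yields the claimed $\Omega(1/\log(1/\delta))$ ratio. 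I expect the flow-scaling and integrality-gap step to be the crux, while the class bucketing and the unit-demand equivalence are routine.
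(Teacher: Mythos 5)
Your bucketing of jobs into geometric height classes, the reduction of each class to unit-demand {\MT} on a multiple of $m$ machines, the appeal to the $\frac{1}{6+\eps}$ guarantee of~\cite{KP01}, and the best-of-$K$ selection losing the $\log(1/\delta)$ factor all match the paper's proof in outline. The one place you diverge --- rounding heights \emph{up} to $2^{-k}$ instead of down --- is exactly where the proposal breaks. As you yourself observe, the true optimum $O_k$ only embeds as a unit-demand schedule on $m(2^{k+1}-1)$ machines while your solvable instance has $m2^k$, and you propose to close this deficit by halving the feasibility flow and then ``invoking the constant integrality gap of this LP (the same rounding that underlies the unit-demand guarantee).'' There is no such rounding to invoke: the bound of~\cite{KP01} is a combinatorial guarantee for an iterated version of Lawler's algorithm measured against the integral migratory optimum, not an LP-rounding theorem, and neither~\cite{KP01} nor anything else you cite establishes a constant integrality gap for the all-or-nothing preemptive scheduling LP at the point $x\equiv\frac 12$. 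Indeed, all-or-nothing rounding of fractional preemptive schedules is precisely the difficulty this paper's Section~\ref{sec:MaxT} is built around, and the rounding machinery there (Theorem~\ref{thm:rounding}, Corollary~\ref{cor:thm2}) works only under a slackness assumption $p_j \leq \lambda|\chi_j|$ --- an assumption Lemma~\ref{lemma:appx_large_heights} does not make and must not use, since it is invoked for instances with no slackness at all (e.g., in Corollary~\ref{cor:logn_approx_for_MaxT}). So the step you flag as ``the crux'' is a genuine gap, not a routine citation.

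The paper sidesteps the problem by flipping the rounding direction. Heights are rounded \emph{down} to $\delta(1+\eps')^k$, so the class optimum embeds \emph{exactly} into the unit-demand instance on $m_k = m\lfloor 1/(\delta(1+\eps')^k)\rfloor$ machines (a true feasible schedule places at most $\lfloor 1/(\delta(1+\eps')^k)\rfloor$ class jobs per host per slot), and the conversion loss is paid instead on the algorithm's output, where it is easy to control: because the iterated-Lawler schedule uses no migration, each machine's schedule is self-contained, so one simply keeps the $\lfloor m_{\ell^*}/(1+\eps')\rfloor$ heaviest machines, which restores feasibility with the original (up to $(1+\eps')$ times larger) heights at a constant weight loss of roughly $\frac{1}{1+\eps'}$. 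Note that this discarding trick is unavailable on the optimum's side, where migration entangles jobs across machines --- which is why your direction of rounding forces the unproven machine-halving claim. To repair your write-up, either adopt the round-down device, or prove as a standalone lemma that any job set feasible on $2N$ unit-demand machines contains a subset of constant weight fraction feasible on $N$; the latter is plausible but is nowhere available off the shelf.
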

\begin{proof}
Given an instance $\mathcal{I}$, we first
round down the resource requirement (or, {\em height}) of each job $j \in J$ to the nearest value of the form $\delta(1+\eps')^k$, for some fixed $\eps' \in (0,1)$ and integer
$0 \leq k \leq \lceil \log_{1+\eps'} (\frac{1}{\delta}) \rceil$.
We now partition the jobs into $O(\log (\frac{1}{\delta}))$ classes, such that the jobs in each class have the same rounded height. For a class with
job height $\delta(1+\eps')^k$, let
$m_k =m \cdot  \lfloor \frac{1}{\delta(1+\eps')^k} \rfloor$. We define for this class the instance $\mathcal{I}_k = (J_k,M_k,\mathcal{W})$ of \MT\, in which $|M_k|= m_k$ and $s_j=1$ for all
$j \in J_k$.

Recall that Lawler~\cite{L90} gave a PTAS for \MT\ on a single host, where $s_j=1$ for all $j \in J$. Consider an algorithm ${\mathcal A}_k$ for \MT\ on $\mathcal{I}_k$, which
proceeds as follows. We schedule iteratively the jobs in $J$ on hosts $1, \ldots , m_k$. Let ${\mathcal J}_{i-1}$ be the set of jobs scheduled on hosts $1, \ldots i-1$, and
${\mathcal J}_0 = \emptyset$.
In iteration $i \geq 1$, we use the PTAS of~\cite{L90} for the set of jobs $J \setminus {\mathcal J}_{i-1}$. 
We note that the resulting schedule uses no migrations.
By a result of~\cite{KP01}, this iterative algorithm yields a ratio of $\frac{1}{6+\eps}$ to the 
profit of an optimal schedule for {\MT} (which may use migrations).

Let ${\mathcal A}_k (\mathcal{I}_k)$ be the profit of the solution obtained for $\mathcal{I}_k$. Then we choose the solution set for the instance $\mathcal{I}_{\ell^*}$
which maximizes the profit. That is,
${\mathcal A}_{\ell^*} (\mathcal{I}_{\ell^*})= \max_{0 \leq k \leq \lceil \log_{1+\eps'} (\frac{1}{\delta}) \rceil} {\mathcal A}_k (\mathcal{I}_k)$.
We note that since the job heights are rounded {\em down}, transforming back to the original job heights may require to reduce the number of hosts, at most by factor
$\frac{1}{1+\eps'}$. W.l.o.g.,  assume that $m_{\ell^*} >m$ (otherwise, the rounded height of the scheduled jobs is larger than $\frac{1}{2}$, implying they can be scheduled feasibly
with their original heights on $m$ hosts).
 Thus, among the $m_{\ell^*}$ hosts, we select $m'_{\ell^*}= \lfloor \frac{m_\ell^*}{1+\eps'} \rfloor$ hosts on which the total weight of scheduled jobs is maximized.
It follows  that the approximation ratio is
$(\frac{1}{1+\eps'}- \frac{1}{m_{\ell^*}}) \cdot \frac{1}{(6+\eps) \lceil \log_{1+\eps'} (\frac{1}{\delta}) \rceil}$. 
\end{proof}

\subsubsection{Laminar Instances}
\label{appx_sec:laminar_MaxT}
Recall that $m =|M|$ is the number of hosts.
Given a fixed $\lambda \in (0,1)$, let
\begin{equation}
\label{eq:def_alpha}
\alpha = \alpha(m, \lambda)= \frac{ \lambda(1-\lambda)}{1-\lambda + \frac{\lambda}{m}}.
\end{equation}

In Phase 1 of Algorithm~\ref{algm:maxToOutline}, we round the LP solution to obtain a subset of jobs $S \subseteq J$.
We first prove the following.
\begin{lemma}
\label{lemma: slack_w_small_height}
Let $\lambda \in (0,1)$ be a slackness parameter, and
\begin{equation}
\label{eq:def_omega}
\omega = (1-\alpha)(1-\lambda) - \frac{\alpha \lambda}{m},
\end{equation}
 where $\alpha$ is defined in (\ref{eq:def_alpha}). Then, given a laminar instance $\mathcal{I} = (J,M,\mathcal{L})$ satisfying $p_j \leq \lambda |\chi_j|$ and $s_j \leq \alpha$, there is a feasible allocation of
 the resource to the jobs in S.
\end{lemma}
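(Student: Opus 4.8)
The plan is to establish feasibility by running the scheduling step of Phase~2 (Algorithm~\ref{algm:schedule}) and arguing, as in Theorem~\ref{thm:group-packing}, that it never reports \textbf{fail}; the key difference is that the smallness of the heights ($s_j \le \alpha$) lets us replace the wasteful ``pairing'' bound on closed bins by a much stronger \emph{density} bound. First I would record the only property of the set $S$ produced in Phase~1 that is needed: by Theorem~\ref{thm:rounding} (equivalently Corollary~\ref{cor:thm1}), for every $\chi \in \mathcal{L}$ we have $\sum_{j \in S:\, \chi_j \subseteq \chi} a_j \le (\omega + \tfrac{\lambda}{m})\,m|\chi|$. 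The whole argument reduces to showing that this area budget, together with $s_j \le \alpha$ and $p_j \le \lambda|\chi_j|$, is small enough to guarantee a free slot for every job.

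Next I would argue bottom-up over the laminar tree and suppose, for contradiction, that the algorithm fails while assigning some job $j$ with $\chi = \chi_j$, after having placed $p_j' < p_j$ copies. Because $p_j \le \lambda|\chi|$, the set $\mathrm{avail}(j)$ of time slots still free for $j$ has size at least $(1-\lambda)|\chi|$. The crucial use of the height bound is that a bin can block $j$ only if its current load exceeds $1 - s_j \ge 1 - \alpha$; hence at the moment of failure every one of the $m$ bins in each of the $\ge (1-\lambda)|\chi|$ available columns is filled to more than $1-\alpha$. I would then reuse the invariant from Theorem~\ref{thm:group-packing} that the number of ``bad'' (gray/inherited) bins inside $\chi$ never exceeds $\lambda m|\chi|$, so that the bins forced to be nearly full, together with the resource already committed to $j$ in its $p_j'$ columns, account for an area strictly larger than $(1-\alpha)(1-\lambda)\,m|\chi|$ plus the lower-order boundary terms. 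Setting $\omega$ so that $\omega + \tfrac{\alpha\lambda}{m} = (1-\alpha)(1-\lambda)$ --- which is exactly (\ref{eq:def_omega}) with $\alpha$ as in (\ref{eq:def_alpha}) --- makes this forced area exceed the Phase~1 budget $(\omega+\tfrac{\lambda}{m})\,m|\chi|$, contradicting the knapsack guarantee and thereby ruling out failure.

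The main obstacle, and where I would spend most of the care, is getting the accounting tight enough to reach the stated threshold rather than a weaker one: a naive count that only charges the fully available columns falls short by a $\Theta(\tfrac{\lambda}{m})$ term, so I must also extract near-$(1-\alpha)$ load from the bins lying in the columns already used by $j$ and from the inherited bins. To make those contributions legitimate I expect to strengthen the bottom-up invariant to assert that, after a window is completed, all of its occupied bins except at most $\lambda m|\chi|$ ``open'' ones are filled to more than $1-\alpha$; maintaining this forces a slight modification of the closing rule of Algorithm~\ref{algm:schedule} (closing a bin only once it is genuinely saturated, rather than pairing a saturated bin with a fresh nearly-empty one), so that every black bin carries load $>1-\alpha$. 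Verifying that this modified rule both preserves the $\lambda m|\chi|$ bound on open bins and still completes every job is the technical heart of the proof; once it is in place, the area comparison above closes the argument and yields the claimed feasible allocation of the resource to the jobs in $S$.
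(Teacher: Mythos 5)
Your overall strategy---at the moment of failure every bin in every still-available slot must be more than $(1-s_j)$-full, and this forced load contradicts the Phase~1 area budget---is exactly the paper's, but your accounting has a genuine gap, and it is precisely the shortfall you flag in your last paragraph. You invoke the Phase~1 guarantee in the form $\sum_{j\in S:\,\chi_j\subseteq\chi}a_j\le(\omega+\frac{\lambda}{m})m|\chi|$ from Theorem~\ref{thm:rounding}, then choose $\omega$ so that $\omega+\frac{\alpha\lambda}{m}=(1-\alpha)(1-\lambda)$ and assert that the forced load exceeds that budget. It does not: $(\omega+\frac{\lambda}{m})m|\chi|=(1-\alpha)(1-\lambda)m|\chi|+(1-\alpha)\lambda|\chi|$, while the blocked slots force load only strictly greater than $(1-\lambda)(1-\alpha)m|\chi|$ (plus $m(1-s_j)$ for one extra slot), so you are short by roughly $(1-\alpha)\lambda|\chi|$ whenever $\lambda|\chi|>m$. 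Your proposed repair---extracting near-$(1-\alpha)$ load from the columns already holding $j$ and from inherited bins---cannot close this: a column containing a piece of $j$ is only guaranteed load $s_j$, so those columns contribute at most $p_j's_j\le\alpha\lambda|\chi|$. Concretely, for $m=1$ and $s_j\to 0$ one has $\alpha=\lambda(1-\lambda)<\lambda$ by (\ref{eq:def_alpha}), and the forced area $\approx(1-\lambda)|\chi|$ stays strictly below the unrefined budget $\approx(1-\alpha)|\chi|$; no redesign of the closing rule of Algorithm~\ref{algm:schedule} can overcome this, because the slack sits in the Phase~1 bound, not in the scheduler.

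The missing idea is to re-derive the Phase~1 budget under the hypothesis $s_j\le\alpha$. In the rounding proof of Theorem~\ref{thm:rounding}, the additive loss per window is the total area of fractional jobs, $\mathcal{A}[\chi]$, bounded there by $\lambda|\chi|$ via $a_j\le p_j$; when every job satisfies $s_j\le\alpha$ this improves to $a_j=s_jp_j\le\alpha\lambda|\chi_j|$, hence $\mathcal{A}[\chi]\le\alpha\lambda|\chi|$ and the guarantee sharpens to $\sum_{j\in S:\,\chi_j\subseteq\chi}a_j\le(\omega+\frac{\alpha\lambda}{m})m|\chi|=(1-\alpha)(1-\lambda)m|\chi|$, which is exactly (\ref{eq:upper_bound_bad}) in the paper. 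With this refined budget your ``naive'' count already closes the argument, and the gray/black coloring, the pairing, and the $\lambda m|\chi|$ bad-bin invariant of Theorem~\ref{thm:group-packing} all become unnecessary: the paper simply schedules the jobs bottom-up greedily, calls a slot bad for $j$ when all $m$ of its bins are more than $(1-s_j)$-full, bounds the number of bad slots by $\frac{(1-\alpha)(1-\lambda)m|\chi|-a_j}{m(1-s_j)}$, and verifies that the good slots number at least $\frac{p_j}{\lambda}\bigl(1-\frac{(1-\lambda)(1-\alpha)}{1-s_j}\bigr)+\frac{a_j}{m(1-s_j)}\ge p_j$ using $s_j\le\alpha$. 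So the technical heart is the refined rounding bound, not the bin-closing rule you planned to redesign (your ``close only when genuinely saturated'' rule is in effect the paper's first-fit argument, and is fine once the budget is fixed).
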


\begin{proof}
We generate a feasible schedule of the jobs in $S$ proceeding bottom-up in each laminar tree. That is, we start handling job $j$ only once all the jobs $\ell$ with time windows
$\chi_\ell \subset \chi_j$ have been scheduled. Jobs having the same time window are scheduled in an arbitrary order.
Let $j$ be the next job, whose time window is $\chi = \chi_j$. We can view the interval $\chi$ as a set of $|\chi|$ time slots,
each consisting of $m$ unit size bins. We say that a time slot $t \in \chi_j$ is `bad' for job $j$ if there is no space for one processing unit of $j$ (i.e., an `item' of size $s_j$) in any of the bins in $t$;
else, time slot $t$ is `good'.
We note that immediately before we start scheduling job $j$ the number of bad time slots for $j$ is at most
$\frac{m |\chi|(1-\lambda)(1-\alpha) - a_j}{m(1-s_j)}$.
Indeed, by Theorem~\ref{thm:rounding}, choosing for $\omega$ the value in (\ref{eq:def_omega}), after rounding the LP solution the total area of jobs $\ell \in S$, such that $\chi_\ell \subseteq \chi_j$, is at most
\begin{equation}
\label{eq:upper_bound_bad}
(\omega + \frac{\alpha \lambda}{m}) m |\chi| = ((1-\alpha)(1-\lambda) - \frac{\alpha \lambda}{m} + \frac{\alpha \lambda}{m}) m |\chi|.
\end{equation}
In addition, for a time slot $t$ to be `bad' for job $j$, each bin in $t$ has to be at least $(1-s_j)$-full. Hence, the number of good time slots for $j$ is at least
\begin{align}
|\chi| - \frac{m |\chi|(1-\lambda)(1-\alpha) - a_j}{m(1-s_j)} & \!= |\chi| (1 - \frac{(1-\lambda)(1-\alpha)}{1- s_j}) + \frac{a_j}{m(1-s_j)} \nonumber \\
& \! \geq  \frac{p_j}{\lambda} (1 - \frac{(1-\lambda)(1-\alpha)}{1- s_j}) + \frac{a_j}{m(1-s_j)} \nonumber \\
& \!\geq p_j  \nonumber
\end{align}
The first inequality follows from the fact that $p_j \leq \lambda |\chi_j|=\lambda |\chi|$, and the second inequality holds since
$s_j \leq \alpha$. Hence, job $j$ can be feasibly scheduled, for any $j \in S$.
\end{proof}

Using Lemmas~\ref{lemma:appx_large_heights} and~\ref{lemma: slack_w_small_height}, we prove our main result.
\begin{theorem}
\label{thm:MaxT_laminar_tradeoff}
For any $m \geq 1$ and constant $\lambda \in (0,1)$,
\MT\  admits a polynomial time $\Omega(1)$-approximation on any laminar instance $\mathcal{I} = (J,M,\mathcal{L})$ with slackness parameter $\lambda$.
\end{theorem}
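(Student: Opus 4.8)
The plan is to split the job set by resource requirement and treat the two regimes with the two tools already in hand: Lemma~\ref{lemma: slack_w_small_height} for jobs of small height and Lemma~\ref{lemma:appx_large_heights} for jobs of large height. Fix the constant $\alpha=\alpha(m,\lambda)$ of (\ref{eq:def_alpha}) and partition $J$ into $J_s=\{j\in J: s_j\le\alpha\}$ and $J_\ell=\{j\in J: s_j>\alpha\}$. I would run a dedicated algorithm on each part and return whichever of the two resulting feasible schedules has larger profit. The threshold $\alpha$ is chosen precisely so that below it the Phase~1/Phase~2 machinery guarantees feasibility, while above it the jobs are ``fat'' enough that the instance reduces to the unit-height multi-host case handled by Lemma~\ref{lemma:appx_large_heights}.

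For the small-height sub-instance $\mathcal{I}_s=(J_s,M,\mathcal{L})$, set $\omega$ as in (\ref{eq:def_omega}). A short computation (substituting (\ref{eq:def_alpha}) into (\ref{eq:def_omega})) shows that $\omega=(1-\lambda)^2$, a constant in $(0,1)$; moreover $\omega=(1-\lambda)^2\le 1-\frac{\lambda}{m}$ since $\frac{1}{m}\le 1<2-\lambda$, so $\omega$ lies in the admissible range $(0,1-\frac{\lambda}{m}]$ required by Theorem~\ref{thm:rounding}. I would solve the Phase~1 LP on $\mathcal{I}_s$ and round it as in Theorem~\ref{thm:rounding}, obtaining a set $S\subseteq J_s$ with $\sum_{j\in S}w_j\ge\omega W_s^\ast=(1-\lambda)^2 W_s^\ast$ and the knapsack guarantee for every $\chi\in\mathcal{L}$, where $W_s^\ast$ is the optimal profit of $\mathcal{I}_s$. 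Since every job in $S$ has $s_j\le\alpha$, Lemma~\ref{lemma: slack_w_small_height} applies verbatim and produces a feasible allocation of the resource to $S$, yielding a feasible schedule of profit at least $(1-\lambda)^2 W_s^\ast$.

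For the large-height sub-instance $\mathcal{I}_\ell=(J_\ell,M,\mathcal{L})$, every job has $s_j>\alpha$, so I would invoke Lemma~\ref{lemma:appx_large_heights} with $\delta=\alpha$ to obtain a feasible schedule of profit at least $c_\ell\, W_\ell^\ast$, where $W_\ell^\ast$ is the optimal profit of $\mathcal{I}_\ell$ and $c_\ell=\Omega(1/\log(1/\alpha))$. The crucial point is that $\alpha$ is bounded away from $0$ uniformly in $m$: since the denominator in (\ref{eq:def_alpha}) lies in $(1-\lambda,1]$, we have $\alpha\ge\lambda(1-\lambda)$ for every $m\ge1$, so $\log(1/\alpha)\le\log\!\big(1/(\lambda(1-\lambda))\big)$ is a constant depending only on $\lambda$. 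Hence $c_\ell=\Omega(1)$ for fixed $\lambda$.

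It remains to combine the two branches. Let $O$ be an optimal schedule for $\mathcal{I}$ of profit $W^\ast$; restricting $O$ to $J_s$ and to $J_\ell$ gives two feasible schedules (a subset of a feasible schedule is feasible), so $W_s^\ast\ge w(O\cap J_s)$ and $W_\ell^\ast\ge w(O\cap J_\ell)$, whence $W_s^\ast+W_\ell^\ast\ge W^\ast$. Returning the better of the two schedules gives profit at least $\max\{(1-\lambda)^2 W_s^\ast,\, c_\ell W_\ell^\ast\}\ge\tfrac12\min\{(1-\lambda)^2,c_\ell\}\,(W_s^\ast+W_\ell^\ast)\ge\tfrac12\min\{(1-\lambda)^2,c_\ell\}\,W^\ast$, which is $\Omega(1)\cdot W^\ast$ for fixed $\lambda$. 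All steps run in polynomial time: the Phase~1 LP has polynomial size (as $|\mathcal{L}|=O(n)$), its rounding is combinatorial, and Lemma~\ref{lemma:appx_large_heights} is assembled from Lawler's PTAS and the iterative scheme of~\cite{KP01}. The main thing to pin down---rather than a deep obstacle---is the simultaneous admissibility of the parameters: that $\omega=(1-\lambda)^2\in(0,1-\frac{\lambda}{m}]$, so Theorem~\ref{thm:rounding} and Lemma~\ref{lemma: slack_w_small_height} can both be invoked, and that $\alpha\in[\lambda(1-\lambda),\lambda)$, so that the two height regimes tile all of $J$ and the $\log(1/\alpha)$ factor is a genuine constant. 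These are exactly the properties for which $\alpha$ was defined as in (\ref{eq:def_alpha}).
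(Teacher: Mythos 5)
Your proof is correct and follows essentially the same route as the paper's: partition $J$ at the threshold $\alpha$ of (\ref{eq:def_alpha}), handle jobs with $s_j\le\alpha$ via Theorem~\ref{thm:rounding} with $\omega=(1-\lambda)^2$ followed by Lemma~\ref{lemma: slack_w_small_height}, handle jobs with $s_j>\alpha$ via Lemma~\ref{lemma:appx_large_heights}, and return the more profitable of the two feasible schedules. Your added verifications---that $\omega=(1-\lambda)^2\in\left(0,1-\frac{\lambda}{m}\right]$, that $\alpha\ge\lambda(1-\lambda)$ uniformly in $m$ so the $\log(1/\alpha)$ factor is a constant, and the explicit $\frac12\min\{(1-\lambda)^2,c_\ell\}$ combination bound---are correct details that the paper leaves implicit.
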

\begin{proof}
Given a laminar instance $\mathcal{I}$ satisfying the slackness condition,
we handle separately two subsets of jobs.
\myparagraph{Subset 1}
Jobs $j$ satisfying $s_j \leq \alpha = \alpha(m, \lambda)$, where $\alpha$ is defined in (\ref{eq:def_alpha}). We solve \MT\ for these jobs using Algorithm~\ref{algm:maxToOutline}, taking
the value of $\omega$ as in (\ref{eq:def_omega}). By Theorem~\ref{thm:rounding}, the approximation ratio is $\omega = (1 -\alpha)(1 - \lambda) - \frac{\alpha \lambda}{m}= (1-\lambda)^2$, i.e., we have a constant factor.
\myparagraph{Subset 2}
For jobs $j$ satisfying $s_j > \alpha$, use Lemma~\ref{lemma:appx_large_heights} to obtain an $\Omega(\frac{1}{\log (1/\alpha)})$-approximation.

Taking the best among the solutions for the two subsets of jobs, we obtain an $\Omega(1)$-approximation.
\end{proof}

\subsubsection{The General Case}
\label{appx_sec:general_MaxT}
Recall that, given a general \MT\ instance, $(J,M,\mathcal{W})$, with a slackness parameter $\lambda \in (0,1)$, our transformation yields a new laminar instance
$(J_\ell = \{j_\ell:  j\in J\}, M_\ell = M, \mathcal{L})$ with a slackness parameter
$\lambda_\ell \leq 4 \lambda$ (see the proof of Theorem~\ref{thm:genToLam}).
Now, define
\begin{equation}
\label{eq:def_alpha_ell}
\alpha_\ell = \alpha_\ell(m, \lambda_\ell)= \frac{ \lambda_\ell(1-\lambda_\ell)}{1-\lambda_\ell + \frac{\lambda_\ell}{m}},
\end{equation}
and set
\begin{equation}
\label{eq:def_omega_laminar}
\omega = (1-\alpha_\ell)(1-\lambda_\ell) - \frac{\alpha_\ell \lambda_\ell}{m}.
\end{equation}
Then, by Lemma~\ref{lemma: slack_w_small_height}, we have that any job $j_\ell \in J_\ell$ selected for the solution set $S$ can be assigned the
resource (using Algorithm~\ref{algm:maxToOutline}).

\begin{theorem}
\label{thm:MaxT_general_tradeoff}
For any $m \geq 1$ and constant $\lambda \in (0,\frac{1}{4})$,
\MT\  admits a polynomial time $\Omega(1)$-approximation on any
 instance $\mathcal{I} = (J,M,\mathcal{W})$ with slackness parameter $\lambda$.
\end{theorem}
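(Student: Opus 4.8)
The plan is to mirror the proof of Theorem~\ref{thm:MaxT_laminar_tradeoff} for laminar instances, paying the extra price incurred by the reduction of Algorithm~\ref{algm:genToLam} from general to laminar instances. Given the general instance $\mathcal{I} = (J,M,\mathcal{W})$ with constant slackness $\lambda \in (0,\frac{1}{4})$, I first apply Algorithm~\ref{algm:genToLam} to obtain the laminar instance $(J_\ell, M_\ell = M, \mathcal{L})$ whose slackness parameter satisfies $\lambda_\ell \le 4\lambda < 1$ (as established in the proof of Theorem~\ref{thm:genToLam}); since $\lambda$ is a constant strictly below $\frac{1}{4}$, $\lambda_\ell$ is a constant strictly below $1$, and consequently both $\alpha_\ell$ of (\ref{eq:def_alpha_ell}) and $\omega$ of (\ref{eq:def_omega_laminar}) are positive constants. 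As in the laminar case, I partition the jobs according to their resource requirement and handle the two classes separately, finally returning the more profitable of the two solutions.

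For \emph{Subset 1}, the jobs $j$ with $s_j \le \alpha_\ell$, I run Phase~1 of Algorithm~\ref{algm:maxToOutline} on the laminar instance with the value of $\omega$ in (\ref{eq:def_omega_laminar}). Because Lemma~\ref{lem:laminarlpbound} guarantees $|\tilde{\chi}| \le 4|\chi|$, the image under $\mathfrak{L}$ of any feasible solution packs area at most $4m|\chi|$ into every $\chi \in \mathcal{L}$, so Corollary~\ref{cor:thm1} applies with $c = 4$ and yields a set $S_\ell$ of profit at least $\frac{\omega}{4}W_1$, where $W_1$ is the optimal profit attainable on these jobs. The small-height assumption $s_j \le \alpha_\ell$ sharpens the fractional-area bound inside the rounding analysis from $\lambda_\ell|\chi|$ to $\alpha_\ell\lambda_\ell|\chi|$, so the knapsack constraints of $S_\ell$ read $\sum a_{j_\ell} \le (\omega + \frac{\alpha_\ell\lambda_\ell}{m})m|\chi|$; this is exactly the hypothesis of Lemma~\ref{lemma: slack_w_small_height}, which then produces a feasible schedule for $S_\ell$. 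A short computation (identical to the laminar case) shows $\omega = (1-\lambda_\ell)^2$, so this branch is a $\frac{(1-\lambda_\ell)^2}{4} \ge \frac{(1-4\lambda)^2}{4} = \Omega(1)$ approximation. For \emph{Subset 2}, the jobs with $s_j > \alpha_\ell$, I invoke Lemma~\ref{lemma:appx_large_heights} with $\delta = \alpha_\ell$; since $\alpha_\ell$ is a positive constant, $\log(1/\alpha_\ell)$ is constant and this branch is an $\Omega(1)$ approximation as well.

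To combine the two branches, fix an optimal solution $O$ of profit $W$ for $\mathcal{I}$ and split it as $O = O_1 \cup O_2$ according to the two height classes; then $\max(w(O_1), w(O_2)) \ge W/2$, and since $O_i$ is a feasible solution using only jobs of class $i$, the optimum $W_i$ of that subinstance satisfies $W_i \ge w(O_i)$. Hence the better of the two computed solutions has profit at least $\min(\rho_1, \rho_2)\cdot \frac{W}{2}$, where $\rho_1 = \frac{(1-\lambda_\ell)^2}{4}$ and $\rho_2 = \Omega(1/\log(1/\alpha_\ell))$ are the two branch ratios, giving the claimed $\Omega(1)$ approximation. I expect the only delicate point to be the bookkeeping of the factor-$4$ losses from Algorithm~\ref{algm:genToLam}: this factor enters both through $\lambda_\ell \le 4\lambda$ (forcing the restriction $\lambda < \frac{1}{4}$ to keep $(1-\lambda_\ell)^2$ bounded away from $0$) and through the constant $c=4$ in Corollary~\ref{cor:thm1}, and one must also confirm that the small-height branch indeed supplies the tightened knapsack bound $(\omega + \frac{\alpha_\ell\lambda_\ell}{m})m|\chi|$ that Lemma~\ref{lemma: slack_w_small_height} requires.
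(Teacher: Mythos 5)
Your proposal is correct and follows essentially the same route as the paper's proof of Theorem~\ref{thm:MaxT_general_tradeoff}: transform to a laminar instance via Algorithm~\ref{algm:genToLam} (so $\lambda_\ell \leq 4\lambda < 1$), split jobs at the height threshold $\alpha_\ell$ of (\ref{eq:def_alpha_ell}), handle small jobs with Algorithm~\ref{algm:maxToOutline} using the $\omega$ of (\ref{eq:def_omega_laminar}) together with Lemma~\ref{lemma: slack_w_small_height}, handle large jobs via Lemma~\ref{lemma:appx_large_heights}, and output the better of the two solutions. If anything, your bookkeeping is more explicit than the paper's: you track the factor-$4$ profit loss through Corollary~\ref{cor:thm1} with $c=4$ (giving ratio $\omega/4$ on the small-height branch, where the paper's terse proof reports just $\omega = (1-\lambda_\ell)^2$) and you spell out the $W/2$ split of the optimum across the two height classes, both of which the paper leaves implicit and neither of which affects the $\Omega(1)$ conclusion.
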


\begin{proof}
Given such an instance $\mathcal{I}$, consider the resulting laminar instance. As before, we handle separately two subsets of jobs.
\myparagraph{Subset 1}
For jobs $j_\ell \in J_\ell$ satisfying $s_{j_\ell} \leq \alpha_\ell$, where $\alpha_\ell$ is defined in (\ref{eq:def_alpha_ell}), apply Algorithm~\ref{algm:maxToOutline} with
$\omega$ value as in (\ref{eq:def_omega_laminar}). Then, the approximation ratio is $\omega= (1 - \lambda_\ell)^2 \geq (1 -4 \lambda)^2$.
\myparagraph{Subset 2}
For jobs $j_\ell$ where $s_{j_\ell} > \alpha_\ell$,
use Lemma~\ref{lemma:appx_large_heights} to obtain
$\Omega(\frac{1}{\log (1/\alpha_\ell)})$-approximation. 

Taking the best among the solutions for the two subsets of jobs, we obtain an $\Omega(1)$-approximation.
\end{proof}

Finally, consider a general instance of \MT\ . By selecting $\delta = \frac{1}{n}$, we can apply Lemma~\ref{lemma:appx_large_heights} to obtain an
$\Omega(\frac{1}{\log n})$-approximate solution, $S_1$ for the jobs $j \in J$ of heights $s_j \geq \frac{1}{n}$.
Let $S_2$ be a solution consisting of all jobs $j$ for which $s_j < \frac{1}{n}$. Note that this solution is feasible since $\sum_{j\in S_2} s_j < 1$.
Selecting the highest profit solution between $S_1$ and $S_2$, we have the following.
\begin{corollary}
\label{cor:logn_approx_for_MaxT}
There is a polynomial time $\Omega(\frac{1}{\log n})$-approximation algorithm for \MT\ .
\end{corollary}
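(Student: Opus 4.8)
The plan is to establish Corollary~\ref{cor:logn_approx_for_MaxT} by splitting the job set $J$ according to resource requirement and handling each part with a tool already in hand. The natural threshold is $\delta = \frac{1}{n}$, where $n = |J|$. First I would partition $J$ into the set of \emph{tall} jobs $J_{\text{tall}} = \{j \in J : s_j \geq \frac{1}{n}\}$ and the set of \emph{short} jobs $J_{\text{short}} = \{j \in J : s_j < \frac{1}{n}\}$. These two sets cover all of $J$, so the optimal profit $W$ satisfies $W \leq W(J_{\text{tall}}) + W(J_{\text{short}})$, where $W(\cdot)$ denotes the optimal profit restricted to a job subset; hence at least one of the two parts carries profit at least $\frac{W}{2}$.

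For the tall jobs, I would invoke Lemma~\ref{lemma:appx_large_heights} directly with $\delta = \frac{1}{n}$. Since every $j \in J_{\text{tall}}$ satisfies $s_j \geq \frac{1}{n}$, the lemma yields a feasible solution $S_1$ with profit $\Omega\!\left(\frac{1}{\log(1/\delta)}\right) = \Omega\!\left(\frac{1}{\log n}\right)$ of the optimal profit achievable on $J_{\text{tall}}$. For the short jobs, the key observation is that \emph{all} of them can be accepted simultaneously on a single host at a single time: since $s_j < \frac{1}{n}$ for each of the at most $n$ short jobs, their combined requirement satisfies $\sum_{j \in J_{\text{short}}} s_j < n \cdot \frac{1}{n} = 1$, so they all fit within the unit resource capacity of one host in every time slot. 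Consequently there is no timing conflict whatsoever — each short job may be allocated its required $p_j$ slots inside its window $\chi_j$ on the same host, packing all short jobs together at every relevant slot. Thus $S_2 = J_{\text{short}}$ is a feasible schedule capturing the full optimal profit on the short jobs.

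Taking the higher-profit solution between $S_1$ and $S_2$ then gives an $\Omega\!\left(\frac{1}{\log n}\right)$-approximation overall: whichever of $W(J_{\text{tall}})$ or $W(J_{\text{short}})$ is at least $\frac{W}{2}$, the corresponding solution recovers an $\Omega\!\left(\frac{1}{\log n}\right)$ fraction of it (a constant fraction in the short case, and an $\Omega(\frac{1}{\log n})$ fraction in the tall case), hence an $\Omega\!\left(\frac{1}{\log n}\right)$ fraction of $W$. I do not expect a serious obstacle here, since the heavy lifting is entirely delegated to Lemma~\ref{lemma:appx_large_heights}; the only point requiring a moment's care is the feasibility argument for the short jobs, where one must confirm that placing all short jobs on one host never exceeds unit capacity and that each short job can independently secure $p_j$ slots within its own window without any inter-job scheduling conflict, which follows immediately from the strict bound $\sum_{j \in J_{\text{short}}} s_j < 1$ holding at every time slot.
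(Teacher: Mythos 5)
Your proposal is correct and follows essentially the same route as the paper: the paper also sets $\delta = \frac{1}{n}$, applies Lemma~\ref{lemma:appx_large_heights} to the jobs with $s_j \geq \frac{1}{n}$, takes $S_2$ to be \emph{all} jobs with $s_j < \frac{1}{n}$ (feasible precisely because $\sum_{j \in S_2} s_j < n \cdot \frac{1}{n} = 1$), and outputs the higher-profit of the two solutions. The only difference is cosmetic: you spell out the factor-2 profit split $W \leq W(J_{\text{tall}}) + W(J_{\text{short}})$ and the per-window feasibility of the short jobs, both of which the paper leaves implicit.
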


\subsubsection{Maximizing Utilization} 

Consider instances of {\MT} where the profit gained from scheduling job $j$ is $w_j=a_j=s_j p_j $. 
In this section, we obtain an $\Omega(1)$-approximation for {\MT} instances where the weight of a job is equal to its area. In other words, the goal is to maximize resource utilization.
Our result builds on an algorithm of~\cite{chen2002allocation}.

\begin{theorem}
\label{cor:utility}
There is a polynomial time $\Omega(1)$-approximation for any instance of {\MT} where $w_j=a_j$ for all $j \in J$.
\end{theorem}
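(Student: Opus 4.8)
The plan is to eliminate the slackness assumption entirely for the maximization-of-utilization special case, where $w_j = a_j = s_j p_j$. The key structural advantage here is that profit equals area, so any selection rule that packs a large total area into the time windows automatically secures a large fraction of the optimal profit, and we need not worry about choosing high-density jobs. First, I would recall that any feasible solution places total area at most $m|\chi|$ within each window $\chi$, so the optimal profit $W = \sum_{j \in O} a_j$ is bounded in terms of the available area across windows. The approach of Chen, Hassin, and Tzur~\cite{chen2002allocation} (which the theorem says we build on) is, in essence, a greedy/LP-based packing that fills the available area; since profit is area, a constant-factor approximation on the packed area yields a constant-factor approximation on profit.

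The main obstacle is that without slackness, a single job can have $p_j$ nearly equal to $|\chi_j|$, so the clean two-phase argument (Phase~1 packing followed by Phase~2 feasible scheduling) breaks down: the scheduling phase crucially used $p_j \leq \lambda|\chi_j|$ to bound the number of ``bad'' time slots. To circumvent this, I would split the jobs by their relative length. \textbf{Tight jobs}, where $p_j > \lambda|\chi_j|$ for a suitable constant $\lambda$, have the property that they occupy almost their entire window, so at most a constant number of them (per ``level'' or per window nesting) can be simultaneously scheduled; these can be handled directly because their area is comparable to $|\chi_j| \cdot s_j$ and a matching/greedy argument on the windows suffices. \textbf{Slack jobs}, where $p_j \leq \lambda|\chi_j|$, fall under the slackness regime already solved, so Theorem~\ref{thm:MaxT_general_tradeoff} (or its laminar version) applies and yields an $\Omega(1)$-approximation restricted to these jobs. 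Taking the better of the two solutions loses only a constant factor.

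For the tight-job case specifically, I would exploit that $w_j = a_j$ to reduce to a cleaner packing problem. When $p_j$ is close to $|\chi_j|$, scheduling job $j$ essentially reserves one ``unit-height strip'' of width $s_j$ across nearly all of $\chi_j$; so the question becomes how much total $s_j$-weighted width can be packed simultaneously, subject to the per-time-slot capacity $\sum s_j \leq 1$ on each of the $m$ hosts. This is exactly the kind of resource allocation / bandwidth packing that~\cite{chen2002allocation} addresses, and since we are maximizing total area (= profit), a constant-factor packing guarantee transfers directly. I would invoke their algorithm as a black box on the tight jobs, verify that the area it packs is within a constant factor of the area any feasible schedule could accommodate for tight jobs, and conclude.

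Combining the two regimes, the better of the slack-job solution (via Theorem~\ref{thm:MaxT_general_tradeoff}) and the tight-job solution (via the adaptation of~\cite{chen2002allocation}) captures a constant fraction of $W = W_{\text{slack}} + W_{\text{tight}}$, since $\max\{W_{\text{slack}}, W_{\text{tight}}\} \geq \tfrac{1}{2} W$ and each regime is approximated to within $\Omega(1)$. This yields the claimed $\Omega(1)$-approximation with no slackness assumption. The step I expect to be most delicate is making precise the ``tight job'' packing argument and its area-preservation guarantee, i.e., showing that packing nearly-full-window jobs can be done so that the total area scheduled is a constant fraction of the optimum for those jobs, despite the loss of the slot-counting argument that slackness previously provided.
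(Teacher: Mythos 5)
Your high-level decomposition matches the paper's skeleton: split jobs into \emph{short/slack} ($p_j \leq \lambda|\chi_j|$) and \emph{long/tight} ($p_j > \lambda|\chi_j|$), handle the slack jobs via Theorem~\ref{thm:MaxT_general_tradeoff}, adapt Chen--Hassin--Tzur for the tight jobs, and take the better of the two solutions. However, there are two genuine gaps. First, you never control the heights $s_j$, and the paper's long-job argument breaks without this: the paper first splits off all jobs with $s_j > \alpha$ (for a constant $\alpha$) and handles them separately via Lemma~\ref{lemma:appx_large_heights}, precisely because the greedy rejection certificate for a tight job $j$ is only that some slot $t' \in \chi_j$ has every bin at least $(1-s_j)$-full --- a statement that becomes vacuous as $s_j \to 1$, so the charging argument yields total utilization $m(1-s_j)$ rather than $m(1-\alpha)=\Omega(m)$ at the blocking slot. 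Your proposal, stated for all heights, cannot close this case.

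Second, the tight-job analysis --- which is the actual content of the theorem --- is left as a sketch, and the one structural claim you do make is false: it is not true that ``at most a constant number'' of tight jobs can be simultaneously scheduled, since tight jobs with small $s_j$ (say $s_j = 1/n$) can run in parallel in arbitrarily large numbers; tightness constrains length relative to the window, not height. Nor can \cite{chen2002allocation} be invoked as a black box; the paper \emph{adapts} it as follows: sort long jobs by non-increasing $|\chi_j|$ and greedily accept $j$ iff there exist $p_j$ slots each containing a bin less than $(1-s_j)$-full. The area-preservation guarantee you flag as delicate is then proved via \emph{augmented jobs} $j'$ with $p_{j'} = 3|\chi_j|$, $s_{j'} = s_j/(1-\alpha)$, $\chi_{j'} = [r_j - |\chi_j|,\, d_j + |\chi_j|]$, and the effective weight $W_{eff}(A') = \sum_t \min\{m, \sum_{j' \in A':\, t \in \chi_{j'}} s_{j'}\}$. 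The key step is a per-slot domination argument: if some long job $j$ with $t \in \chi_j$ was rejected, then at the blocking slot $t'$ all bins are at least $(1-\alpha)$-full, the jobs scheduled there have windows no smaller than $\chi_j$ intersecting it at $t'$, hence their augmented windows contain all of $\chi_j$ (in particular $t$), giving $\sum_{j' \in A':\, t \in \chi_{j'}} s_{j'} \geq \frac{m(1-\alpha)}{1-\alpha} = m$; combined with $W(A) \geq \frac{(1-\alpha)\lambda}{3} W_{eff}(A') \geq \frac{(1-\alpha)\lambda}{3} W(OPT_\ell)$ this yields the constant factor. Without the height cap, the window-inflation/charging mechanism, and the $W_{eff}$ bound, your proposal does not constitute a proof of the tight-job case.
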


\begin{proof}
As before, we represent a time slot $t$ on a host $i$ by a bin $(i,t)$. We first assume that for some constant $\alpha \in (0,1)$, $s_j \leq \alpha$.
For the case where $s_j > \alpha$, we can obtain a constant approximation using Lemma~\ref{lemma:appx_large_heights}.

Fix some $\lambda < \frac 14$.
 We first split the jobs by their lengths. A job $j$ is {\em long} if $p_j > \lambda |\chi_j|$; otherwise, job $j$ is {\em short}.
For a given optimal solution of the problem, let $OPT_s$ and $OPT_\ell$ be the contributions of short jobs and long jobs, respectively.
We handle the long and short jobs separately. Note that since short jobs satisfy the requirements of Theorem~\ref{thm:MaxT_general_tradeoff}, we can obtain a constant approximation with respect to $OPT_s$.

We now handle the long jobs. For this part, we adapt an algorithm due to Chen, Hassin and Tzur~\cite{chen2002allocation}. Suppose $L$ is the set of long jobs.
Consider the following algorithm.

\myparagraph{Step 1} Sort the jobs in non-increasing order of their time-window sizes $|\chi_j|$.
\myparagraph{Step 2} For each job $j$ in the sorted order, if there are $p_j$ time-slots that have at least one bin that is less than $1-s_j$ full, schedule $j$; otherwise, discard it.

Let $A$ be the set of jobs chosen by this algorithm.
We now analyze the performance of the algorithm. For each job $j$, we define an {\em augmented job} $j'$ as follows:
\begin{align*}
&p_{j'} = 3|\chi_j|, \text{ and } s_{j'} = \frac{s_j}{1-\alpha} \nonumber\\
&\chi_{j'} = [r_j-|\chi_j|, d_j + |\chi_j|]\nonumber
\end{align*}

Let $A'$ denote the set of augmented jobs for $A$.
We note that there may be no feasible schedule for the jobs in $A'$.
We define the effective weight of $A'$ as
$$W_{eff}(A') = \sum_{t} \min\{m,\sum_{j'\in A': t\in \chi_{j'}}s_{j'}\}$$

It follows that

$$W(A) \geq \frac{(1-\alpha)\lambda}{3} W(A')\geq \frac{(1-\alpha)\lambda}{3} W_{eff}(A')$$

To complete the proof, we simply show that $W_{eff}(A') \geq W(OPT_\ell)$. To this end, it suffices to show that

$$\sum_{j'\in A': t\in \chi_{j'}}s_{j'} \geq \sum_{o\in OPT_\ell: t\in \chi_o}s_o$$

Two cases arise:

\myparagraph{Case I} No long job $j$ with $t\in \chi_j$ is rejected by our algorithm. In this case the claim follows trivially.
\myparagraph{Case II} There exists some long job $j$ that is rejected by our algorithm. We show that $\sum_{j'\in A': t\in \chi_{j'}}s_{j'} \geq m$. The proof would be follow since $\sum_{o\in OPT_\ell: t\in \chi_o}s_o \leq m$.

Since $j$ was rejected $\exists t'\in \chi_j$ such that each bin $(i,t')$ is at least $(1-\alpha)$ full.
  Let $A_j$ be the set of jobs already scheduled in $t'$ before $j$ was rejected, and let $A_j'$ be the respective set of augmented jobs. We claim that for all $j'_p\in A_j'$, $t\in \chi_{j_p'}$. To see this, we first note that, for any $j_p\in A_j$, we have $|\chi_{j_p}| \geq |\chi_j|$ (because the jobs are chosen in increasing order of time-window sizes). Further, $\chi_{j_p}\cap \chi_j$ contains at least $t'$ and hence $\chi_j$ and $\chi_{j_p}$ are intersecting. Therefore, the augmented job $j_p'\in A_j'$ must completely contain $\chi_j$ and so $t\in \chi_{j_p'}$.
 Thus, we have
 $$\sum_{j'\in A': t\in \chi_{j'}}s_{j'}\geq \sum_{j'_p\in A'_j}s_{j'_p} \geq \frac{1}{1-\alpha}\sum_{j_p\in A_j}s_{j_p}\geq \frac{m(1-\alpha)}{1-\alpha} = m.$$
\end{proof}

\section{Resource Minimization}
\label{sec:crcsd}
In this section, we consider the \MM\ problem with $d$ resources, where $d \geq 1$ is some constant.
We show that the problem admits
an $O(\log d)$-approximation under some mild assumptions on the slack and minimum window size.

Our  approach builds on a formulation of the problem as a configuration LP and involves two main phases:
\emph{a maximization phase} and \emph{residual phase}. Informally, a configuration is a subset of jobs that can be feasibly assigned to a host at a given time slot $t$.
For each $t$, we choose  $O(m\log d)$ configurations with probabilities proportional to their LP-values.  In this phase, jobs may be allocated the resource only for part of their processing length.
In the second phase, we construct a residual instance based on the amount of time each job has been processed. A key challenge is to show that, for any time window $\chi$, 
the total ``area'' of jobs left to be scheduled is at most $1/d$
of the original total area. We use this property to solve the residual instance. We start by describing the configuration linear program that is at the heart of our algorithm.

\subsection{Configuration LP}

Let $J_t\subseteq J$ denote the set of all jobs $j$ such that  $t\in \chi_j$, i.e., $j$
can be allocated resources at time slot
 $t$. For any $t\in [T]$ and $S\subseteq J_t$,  $C= (S, t)$ is  a valid \emph{configuration} on a single host if $\sum_{j\in S} \bar{s}_j \leq \bar{1}^d$, i.e.,
 the jobs in $S$ can be feasibly allocated their resource requirements on a single host at time slot
 $t$. Denote the set of all valid configurations at time $t$ by $\mathcal{C}_t$, and by $\mathcal{C}^j$ the set of all valid configurations
 $(S,t)$,
 such that $S$ contains job $j$. Denote by $x_C$ the indicator variable for choosing configuration $C$, and by $m$ the number of hosts needed to schedule all jobs. The fractional relaxation of the Integer Program formulation of our problem is given below.

\begin{equation*}
\begin{array}{llll}
\text{\bf Primal}: & \text{Minimize } &m  & \nonumber\\
& \text{Subject to:} & m -\sum_{C\in \mathcal{C}_t} x_C \geq 0, &
\forall t\in [T] \nonumber  \\
&& -\sum_{C\in \mathcal{C}^j\cap \mathcal{C}_t} x_C \geq -1, &
\forall j\in J,\,\, t\in [T]  \nonumber\\
&& \sum_{C\in \mathcal{C}^j} x_C \geq p_j, &
\forall j\in J \nonumber\\
&& x_C\geq 0, &
\forall C \nonumber
\end{array}
\end{equation*}

The first constraint ensures that we do not pick more than $m$ configurations for each time slot $t\in [T]$.  The second constraint guarantees that at most one configuration is chosen for each job $j$ at a given time $t$.
Finally, the last constraint guarantees that each job $j$ is
allocated the resource for $p_j$ time slots, i.e., job $j$ is completed.

Given that the LP has an exponentially number of variables, we consider solving the dual.

\[
\begin{array}{llll}
\text{\bf Dual:}&
 \text{Maximize }& \sum_{j\in J}(p_j\alpha_j -\sum_{t\in [T]}\beta_{j,t})& \nonumber\\
& \text{Subject to:} & \sum_{j\in S} (\alpha_j - \beta_{j,t}) - \gamma_t\leq 0, &
\forall C=(S,t) \nonumber  \\
&& \sum_{t\in [T]}\gamma_t \leq 1, & \nonumber \\
&&\alpha_j, \beta_{j,t}, \gamma_t \geq 0&\nonumber\\
&\nonumber
\end{array}
\]

The proof of the following theorem is similar to a result due to Fleischer~\etal~\cite{FGMS11}, with some differences due to the
``negative'' terms in the objective of the dual program.

\begin{theorem}
\label{thm:lp-config}
For any $\epsilon>0$, there is a polynomial time algorithm that yields a $(1+\epsilon)$-approximate solution for the configuration LP.
\end{theorem}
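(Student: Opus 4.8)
The plan is to solve the configuration LP by solving its dual using the ellipsoid method, which requires only a polynomial-time approximate separation oracle despite the exponential number of primal variables (equivalently, dual constraints of the form $\sum_{j\in S}(\alpha_j-\beta_{j,t})-\gamma_t\le 0$ for each configuration $C=(S,t)$). First I would observe that for a fixed candidate dual solution $(\alpha,\beta,\gamma)$ and a fixed time slot $t$, checking dual feasibility amounts to deciding whether there is a feasible configuration $S\subseteq J_t$ (i.e.\ $\sum_{j\in S}\bar s_j\le\bar 1^d$) whose ``value'' $\sum_{j\in S}(\alpha_j-\beta_{j,t})$ exceeds $\gamma_t$. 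This is precisely a $d$-dimensional knapsack-type problem: maximize $\sum_{j\in S}(\alpha_j-\beta_{j,t})$ subject to the vector packing constraint $\sum_{j\in S}\bar s_j\le\bar 1^d$, over $j\in J_t$. Jobs with $\alpha_j-\beta_{j,t}\le 0$ can be discarded since including them never helps, so we maximize a nonnegative linear objective under a $d$-dimensional packing constraint.

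The key step is to supply an approximate separation oracle for this maximization. Since $d$ is a fixed constant, the $d$-dimensional knapsack problem admits a PTAS, so for any $\epsilon>0$ we can find in polynomial time a configuration whose value is within a $(1+\epsilon)$ factor of the best. I would then invoke the standard reduction (as in Fleischer~\etal~\cite{FGMS11}, and going back to the ellipsoid-based framework for covering/packing LPs) that turns an approximate separation oracle for the dual into a $(1+\epsilon)$-approximate solution of the primal LP: run the ellipsoid method treating the approximate oracle's verdict as the separation, collect the polynomially many configurations that the oracle ever returns as violated, and solve the primal LP restricted to that polynomial-size set of configurations. This yields a feasible primal solution whose objective is within $(1+\epsilon)$ of the true LP optimum.

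The main obstacle, and the reason the theorem is flagged as only \emph{similar} to \cite{FGMS11}, is the presence of the negative $-\beta_{j,t}$ terms in the dual objective and inside the separation problem. In a pure covering LP the separation oracle maximizes a nonnegative combination, and the approximation guarantee transfers cleanly; here the quantity being packed per job is $\alpha_j-\beta_{j,t}$, which can be negative, and the dual objective $\sum_j(p_j\alpha_j-\sum_t\beta_{j,t})$ is not monotone in the variables. I would handle this by (i) restricting each separation instance at time $t$ to the jobs with strictly positive profit $\alpha_j-\beta_{j,t}>0$, which makes the per-slot oracle a genuine nonnegative $d$-dimensional knapsack amenable to the PTAS, and (ii) verifying that the multiplicative error introduced by approximate separation propagates correctly to a one-sided $(1+\epsilon)$ guarantee on the primal objective $m$ despite the negative terms, rescaling $\epsilon$ as needed. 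Carefully tracking how the approximation factor of the knapsack PTAS combines with the ellipsoid framework in the presence of these negative contributions is the delicate part; the remaining bookkeeping (polynomial bounds on the number of oracle calls, and solving the reduced primal) is routine.
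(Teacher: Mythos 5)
Your proposal follows essentially the same route as the paper's proof: approximately separating the dual constraints at each time slot $t$ via the PTAS for $d$-dimensional knapsack (with profits $\alpha_j-\beta_{j,t}$, where non-positive-profit jobs can be discarded), running the ellipsoid method with a binary search on the dual objective, and recovering a $(1+\epsilon)$-approximate primal solution by LP duality on the polynomially many configurations the oracle ever returns. The paper resolves the negative-terms issue exactly as you anticipate, observing that when the approximate oracle declares feasibility, the scaled point $(\alpha_j/(1+\epsilon'),\,\beta_{j,t}/(1+\epsilon'),\,\gamma_t)$ is feasible for the true dual, so the optimum is pinned in $[z^*/(1+\epsilon'),\,z^*+\delta]$.
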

\begin{proof}
We first describe a separation oracle for the dual program. Given a vector $(\gamma_t:t\in [T], \beta_{j,t}: j\in J, t\in [T], \alpha_{j}, j\in J)$, the oracle should either report that the solution
is feasible or find a violating constraint. Clearly, the non-trivial task is to check the exponentially many constraints corresponding to the configurations. To this end, we compute a configuration
$C=(S,t)$, for each $t$, that maximizes the value $\sum_{j\in S} (p_j\alpha_j - \beta_{j,t})$. Subsequently, we can compare this value against the fixed value $\gamma_t$. Further, observe that such a subset $S$ (for a given $t$) can be approximately found by solving the following instance of the \emph{multi-dimensional knapsack} problem. Indeed, we have an item for each job $j \in J_t$ with size $\mathbf{s}_j$ and profit
$p_j\alpha_j - \beta_{j,t}$. The goal is to find a subset of items of maximum total profit
that can fit into the $d$-dimensional bin $1^d$. There is a well known PTAS for the multi-dimensional knapsack problem, for any fixed $d >1$~\cite{FC84}.
Let $\epsilon' > 0$ be the error parameter for the PTAS.

We now run the ellipsoid algorithm on the dual program. We perform a binary search over the  possible values of $z^* \leq  \sum_{j\in J} (\alpha_j -\sum_{t \in T}\beta_{j,t})$.
Suppose that the ellipsoid algorithm reports failure at the value $z^*+\delta$, where $\delta$ is the accuracy parameter of the binary search, which can be made as small as desired. Clearly, the optimal solution value must be lower than $z^*+\delta$. On the other hand, since we are using a $(1+\epsilon')$-approximation oracle, we have that if a solution $(\alpha_j, \beta_{j,t}, \gamma_t)$ is
reported to be feasible, then we must have that $(\alpha_j/(1+\epsilon'), \beta_{j,t}/(1+\epsilon'), \gamma_t)$ is feasible for the original dual program. Hence, the optimal solution lies in the range
 $[z^*/(1+\epsilon'), z^*+\delta]$.

Further, we look at the constraints checked by the ellipsoid algorithm when  the value is set to be $z^*+\delta$. There are polynomial number of such constraints before the algorithm reports a failure. We consider the dual of this restricted LP that is equivalent to a restricted original configuration LP obtained by setting the variables corresponding to the constraints not considered by the ellipsoid algorithm. As noted by Fleischer~\etal~\cite{FGMS11}, the cost of this LP is at most $z^*+\delta$  by LP duality. Thus, for appropriate selection of $ \epsilon' \in (0, \epsilon)$, we obtain a $(1+\epsilon)$-approximate solution for the configuration LP.
\end{proof}

\begin{algorithm}
\caption{Algorithm for the \MM\ problem }
\label{algm:MinM}
\begin{algorithmic}[1]
    \Require{\MM\ instance $(J,\mathcal{W})$, where $J$ is the set of jobs and $\mathcal{W}$ is the set of time-windows}
\Ensure{An assignment of all jobs in $J$ to a (approximate) minimum number of hosts}
\State Solve the configuration LP approximately using Theorem~\ref{thm:lp-config} to obtain a (fractional) solution $\{\hat{x}_C: \forall  C\}$. Let $m$ denote the objective value of the solution (assume $m$ is rounded up to the nearest integer).
\State Let $c>1$ be a constant (to be determined). For each $t\in [T]$ repeat: for $m_1 = cm\log d$ iterations,  choose a configuration $C\in \mathcal{C}_t$ with probability
$\hat{x}_C/m$. If two configurations $C_1 = (S_1,t)$ and $C_2 = (S_2,t)$ where $S_1\cap S_2 \neq \phi$ are chosen in this process, replace $C_2$ with $C_2' = (S_2\setminus S_1, t)$. Continue with this
replacement process until all configuration sets corresponding to a given time-slot $t$ are disjoint.\label{mm:rand}
\State If a configuration $C = (S,t)$ is chosen in Step~\ref{mm:rand} in iteration $i$ for time $t$, assign the jobs in $S$ to host $i$ at time $t$.
\State Let $J_{res}$ be the set of residual jobs constructed from $J$ as follows: for each $j\in J$, let $n_j$ denote the number of configurations containing job $j$ that are chosen in Step~\ref{mm:rand}. Now, define a residual job $j'$ with attributes $(p_{j'} = \max(p_j-n_j,0), s_{j'}= ||\bar{s}_j||_\infty, \chi_{j'} =\chi_j )$. For each job associate a set, {\em forb}$(j')$, of forbidden time-slots in which  a configuration containing job $j$ is already chosen.
\State Use the construction given in Section~\ref{sec:MT_general} to transform $\mathcal{W}$  into a laminar family of intervals $\mathcal{L}$ and obtain the corresponding mapping $\mathfrak{L}:\mathcal{W}\rightarrow \mathcal{L}$. Consider a modified instance $(J'_{res} = \{j': j\in J_{res}\}, \mathcal{L})$ such that $\chi_{j'} = \mathfrak{L}(\chi_j)$.\label{mm:res}
\State Use Lemma~\ref{cor:group-packing} to compute a feasible schedule for the residual jobs $J'_{res}$. Let $m_2$ be the number of additional hosts needed in this step.
\State  Output the resulting assignment of the jobs in $J$ to $m_1+m_2$ hosts.
\end{algorithmic}
\end{algorithm}
\negA

\subsection{The Algorithm}
Let $m^*$ denote  the optimal value of the configuration LP, and let $m \le (1+\epsilon)m^*$ be the
objective value of the approximate solution of the configuration LP, rounded up to the nearest integer.
The detailed description of the algorithm is given in algorithm~\ref{algm:MinM}.
We use the following
two stage process. In the first stage, we choose $O(m\log d)$ configurations  $C$ (with probabilities proportional to their $x_C$ values in the LP solution) for each time slot.
Indeed, this random selection may lead to {\em partial} execution of some jobs $j$, which are allocated the resources for less than $p_j$ time slots.
The second stage amends this, by considering the ``residual'' job parts and assigning them to a set of $m$ new hosts.

Our key technical result (in Lemma~\ref{lem:residual}) is that, with high probability, the total volume of the residual jobs to be scheduled in any time window $\chi$ is sufficiently
 small.  Some additional challenges arise due to the fact that
the time slots used to schedule a job in the first and second stages must be disjoint. Thus, the time slots used for job $j$ in the first stage become ``forbidden'' for $j$  in the second stage.
We associate with each job $j$ a subset of ``forbidden'' time slots, denoted $\text{forb}(j)\subseteq \chi_j$. Any feasible solution for the residual jobs must  ensure that job $j$ is not scheduled at time
 $t\in \text{forb}(j)$. To resolve this issue we need to refine Algorithm~\ref{algm:schedule} (in Lemma~\ref{cor:group-packing}).
 
 \subsection{Analysis}
Towards analyzing our algorithm, we prove several technical lemmas.
The following useful result is due to McDiarmid~\cite{mcdiarmid1989method}.
\begin{lemma}[McDiarmid]
\label{mcdiarmid}
Let $Y = (Y_1, Y_2, \ldots, Y_l)$ be a family of independent random variables, such that $Y_j$ has its domain over the event set $E_j$. Further, suppose $f:(E_1\times E_2\times \ldots \times E_l)\rightarrow \mathbb{R}$ is a real function over the $l$-dimensional event space, such that: $|f({\bar y})- f({\bar y}')| \leq c_i$
for all ${\bar y}$ and ${\bar y}'$ differing in exactly the $i$th coordinate. Then,
$Pr[f(Y) -\mathbb{E}(f(Y)) \geq \psi] \leq \texttt{exp}({-2\psi^2/\sum_{i=1}^lc_i^2}) $.
\end{lemma}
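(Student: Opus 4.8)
The plan is to prove this via the \emph{method of bounded differences}: I would construct the Doob ``exposure'' martingale associated with $f(Y)$ and then run a self-contained Chernoff-type argument on it (this is exactly the proof of the Azuma--Hoeffding inequality). First I would fix the filtration $\mathcal{F}_k = \sigma(Y_1, \ldots, Y_k)$ for $0 \leq k \leq l$ and define $Z_k = \mathbb{E}[f(Y) \mid \mathcal{F}_k]$. By the tower property $(Z_k)$ is a martingale, with $Z_0 = \mathbb{E}[f(Y)]$ (a constant, since the $Y_j$ are integrated out) and $Z_l = f(Y)$. Hence $f(Y) - \mathbb{E}[f(Y)] = \sum_{k=1}^l D_k$, where the martingale differences $D_k = Z_k - Z_{k-1}$ satisfy $\mathbb{E}[D_k \mid \mathcal{F}_{k-1}] = 0$.

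The crux of the argument, and the step I expect to be the main obstacle, is showing that conditioned on the past each difference $D_k$ is confined to an interval of width at most $c_k$. Here independence is used crucially. Writing $g_k(y_1, \ldots, y_k) = \mathbb{E}[f(Y) \mid Y_1 = y_1, \ldots, Y_k = y_k]$, independence lets me evaluate $g_k$ by integrating $f$ only over the remaining coordinates $Y_{k+1}, \ldots, Y_l$ under their unconditioned laws. Consequently, for two values $y_k, y_k'$ of the $k$th coordinate, $g_k(\ldots, y_k) - g_k(\ldots, y_k')$ is an average over $Y_{k+1}, \ldots, Y_l$ of differences of $f$ evaluated at points that agree in every coordinate except the $k$th; each such difference is at most $c_k$ in absolute value by hypothesis, so $|g_k(\ldots, y_k) - g_k(\ldots, y_k')| \leq c_k$. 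Therefore, given $\mathcal{F}_{k-1}$, the quantity $Z_k = g_k(Y_1, \ldots, Y_k)$ ranges over an interval $[L_k, U_k]$ with $U_k - L_k \leq c_k$, and $D_k = Z_k - \mathbb{E}[Z_k \mid \mathcal{F}_{k-1}]$ is a mean-zero variable supported on an interval of length at most $c_k$.

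To finish I would apply Hoeffding's lemma conditionally: a mean-zero random variable supported on an interval of length $c_k$ has conditional moment generating function bounded by $\mathbb{E}[e^{sD_k} \mid \mathcal{F}_{k-1}] \leq \exp(s^2 c_k^2 / 8)$ for every $s > 0$. Peeling off the differences one at a time (conditioning successively on $\mathcal{F}_{l-1}, \mathcal{F}_{l-2}, \ldots$ and pulling out the bounded conditional MGF at each step) yields $\mathbb{E}[e^{s(f(Y) - \mathbb{E} f(Y))}] \leq \exp\!\big(s^2 \sum_{k=1}^l c_k^2 / 8\big)$. A Markov/Chernoff bound then gives $\Pr[f(Y) - \mathbb{E}[f(Y)] \geq \psi] \leq \exp\!\big(-s\psi + s^2 \sum_k c_k^2 / 8\big)$, and optimizing the free parameter by setting $s = 4\psi / \sum_k c_k^2$ produces the claimed bound $\exp\!\big(-2\psi^2 / \sum_{k=1}^l c_k^2\big)$.
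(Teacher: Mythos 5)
Your proposal is correct: the paper states this lemma without proof, citing McDiarmid's original work, and your argument --- the Doob exposure martingale $Z_k = \mathbb{E}[f(Y)\mid \mathcal{F}_k]$, the independence-based bound confining each conditional difference to an interval of width $c_k$ (rather than the weaker two-sided bound $|D_k|\leq c_k$, which would cost a factor of $4$ in the exponent), conditional Hoeffding's lemma $\mathbb{E}[e^{sD_k}\mid \mathcal{F}_{k-1}] \leq \exp(s^2c_k^2/8)$, and the Chernoff optimization at $s = 4\psi/\sum_{k} c_k^2$ --- is precisely the canonical bounded-differences proof from that reference, with the constants worked out correctly. Nothing further is needed.
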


The next lemma gives the conditions implying that the total volume of the residual jobs is {\em small}. This is essential for obtaining our performance bounds.
\begin{lemma}
\label{lem:residual}
Let  $\chi$ be any sub-interval of $[T]$.
For any $\epsilon\in (0,1)$, $\omega\in (0,1)$, and a sufficiently large value of $\theta$ that depends on $\epsilon$ and $\omega$,
the following holds with probability at least $1-\epsilon$
\begin{equation}
\label{eq:residual_chi_volume_bound}
\sum_{j\in J_{res}: \chi_{j}\subseteq \chi} p_j' \lVert \bar{s}_j\rVert_\infty \leq \omega m|\chi|,
\end{equation}
assuming interval $\chi$ staisfies
\begin{equation}
\label{eq:large_windows_assumption}
|\chi| \geq \frac{1}{m}\theta  d^2\log d \log (\epsilon^{-\frac{1}{2}}T).
\end{equation}
Further, if $T = O(d^\delta)$, for some constant $\delta \geq 0$, then the restriction on the length of $\chi$ in
(\ref{eq:large_windows_assumption}) can be dropped and (\ref{eq:residual_chi_volume_bound}) holds
for any interval $\chi$.
\end{lemma}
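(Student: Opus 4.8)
The plan is to bound the expected residual volume first, then use concentration (McDiarmid's inequality, Lemma~\ref{mcdiarmid}) to show the actual residual volume does not exceed the expectation by too much, and finally apply a union bound over all relevant sub-intervals $\chi$ so that the bound holds simultaneously. The residual length $p_j'$ of job $j$ equals $\max(p_j - n_j, 0)$, where $n_j$ is the number of chosen configurations containing $j$. Since the LP guarantees $\sum_{C \in \mathcal{C}^j} \hat{x}_C \geq p_j$ and we choose each configuration at time $t$ with probability $\hat{x}_C/m$ over $m_1 = cm\log d$ iterations, the expected number of times job $j$ is selected at time $t$ (before the disjointness-repair step) is $cm\log d \cdot \sum_{C \in \mathcal{C}^j \cap \mathcal{C}_t}\hat{x}_C/m = c\log d \cdot \sum_{C \in \mathcal{C}^j \cap \mathcal{C}_t}\hat{x}_C$. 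Summing over $t$ and using the completion constraint, the expected total number of selections is at least $c p_j \log d$. Thus for $c$ large enough, $\mathbb{E}[n_j]$ substantially exceeds $p_j$, so the probability that job $j$ remains significantly under-processed is small, and $\mathbb{E}[p_j']$ is a tiny fraction of $p_j$. The first obstacle here is that the disjointness-repair step (replacing $C_2$ by $C_2 \setminus S_1$) can only \emph{remove} jobs from chosen configurations, so I must argue this step does not decrease $n_j$ in a way that breaks the bound --- in fact a job removed from one configuration at time $t$ is already counted via the surviving configuration at $t$, so $n_j$ as defined (number of chosen configurations containing $j$, post-repair) still satisfies the needed lower bound on its expectation up to the factor-of-$c$ slack.

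Next I would translate the per-job expected bound into a bound on $\mathbb{E}\!\left[\sum_{j: \chi_j \subseteq \chi} p_j' \|\bar{s}_j\|_\infty\right]$. The key quantitative point is that $\|\bar{s}_j\|_\infty \leq 1$, and the feasibility of each configuration forces $\sum_{j \in S}\bar{s}_j \leq \bar{1}^d$; combined with the LP's first constraint $\sum_{C\in\mathcal{C}_t}\hat{x}_C \leq m$, the total fractional area packed in any window $\chi$ is at most $m|\chi|$. Choosing $\theta$ (hence $c$) large relative to $\epsilon^{-1}$ and $\omega^{-1}$ drives the \emph{expected} residual volume below, say, $\frac{\omega}{2} m|\chi|$. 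Here the role of the $\log d$ oversampling factor is exactly to beat down the per-job failure probability to $O(1/\mathrm{poly}(d))$, which is what ultimately yields the $1/d$-fraction guarantee advertised in the algorithm's overview.

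The heart of the argument, and the step I expect to be the main obstacle, is the concentration. I would define $f$ to be the residual-volume random variable in~(\ref{eq:residual_chi_volume_bound}) as a function of the independent choices made in Step~\ref{mm:rand}; the natural coordinates $Y_i$ are the configurations chosen across the $m_1$ iterations at each time slot. The subtlety is computing the bounded-differences constants $c_i$: changing one configuration choice at time $t$ alters $n_j$ by at most one for each job $j$ in the symmetric difference of the old and new configuration, and thus changes the residual volume by at most $\sum_j \|\bar{s}_j\|_\infty$ over the affected jobs. Because a single configuration is feasible, this sum is controlled in terms of $d$ (each configuration's jobs have total requirement at most $d$ in $\ell_1$, hence the $\ell_\infty$-weighted count is bounded), giving $c_i = O(d)$ per coordinate. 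With $O(m_1 |\chi|) = O(cm|\chi|\log d)$ coordinates, $\sum_i c_i^2 = O(c m |\chi| d^2 \log d)$, so McDiarmid gives a deviation bound of the form $\exp(-\Omega(\psi^2 / (m|\chi| d^2 \log d)))$; setting $\psi = \frac{\omega}{2}m|\chi|$ and invoking the window-size assumption~(\ref{eq:large_windows_assumption}), which makes $m|\chi|$ large enough, forces this probability below $\epsilon/T^2$ or so.

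Finally I would take a union bound. It suffices to control $\chi$ ranging over a polynomial family of \emph{canonical} intervals --- for instance the $O(T^2)$ intervals of the form $[l,r]$, or, more economically, the laminar intervals produced by Algorithm~\ref{algm:genToLam} together with the observation (via Lemma~\ref{lem:laminarlpbound}) that any window is covered by $O(1)$ canonical ones so that an $\omega/4$-bound on canonical intervals upgrades to an $\omega$-bound on arbitrary $\chi$. The $T^2$ factor inside the union bound is exactly what produces the $\log(\epsilon^{-1/2}T)$ term in~(\ref{eq:large_windows_assumption}): each per-interval failure probability is pushed below $\epsilon / T^2$, and summing over $O(T^2)$ intervals keeps the total failure probability below $\epsilon$. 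For the addendum where $T = O(d^\delta)$, the window-size restriction can be dropped because even a window of size $1$ then satisfies the relevant concentration once $\theta$ absorbs the now-constant $\log T = O(\log d)$ factor; more carefully, for such small $T$ the total number of intervals is $\mathrm{poly}(d)$ and the oversampling by $\log d$ already suffices, so~(\ref{eq:residual_chi_volume_bound}) holds for every $\chi$ unconditionally.
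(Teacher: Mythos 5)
Your overall architecture (per-job expectation bound, then McDiarmid, then a union bound over $O(T^2)$ intervals) matches the paper's, and your concentration step is essentially right --- including the two subtle points that the coordinates must be the pre-repair configuration choices with the repair folded into the function $f$, and that the bounded-difference constant is $O(d)$ because a feasible configuration carries total $\ell_\infty$ mass at most $d$. But the first step has a genuine gap. You bound the expected number of \emph{raw} selections of job $j$ (at least $c\,p_j \log d$) and assert that therefore ``the probability that job $j$ remains significantly under-processed is small.'' This does not follow: after the repair in Step~\ref{mm:rand}, $n_j$ counts \emph{distinct time slots} in which $j$ is scheduled, and a large expected raw-selection count says nothing about distinct-slot coverage when the fractional mass $X_{jt} = \sum_{C\in\mathcal{C}^j\cap\mathcal{C}_t}\hat{x}_C$ is concentrated --- e.g., $p_j$ slots with $X_{jt}=1$ each, where oversampling produces duplicate selections at the same slot rather than progress. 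The paper's device, missing from your proposal, is to partition $\chi_j$ into at least $p_j$ disjoint regions $R_k^j$ each carrying mass $\sum_{t\in R_k^j}X_{jt}\geq 1/2$ (possible precisely because the second LP constraint forces $X_{jt}\leq 1$); scheduling in distinct regions yields distinct slots, the per-region failure probability is at most $\exp(-\frac{c}{2}\log d)=d^{-c/2}$, and hence $\mathbb{E}[p_j']\leq p_j d^{-c/2}$. Without this (or an equivalent argument) your expectation bound is unsupported. A smaller error: your claim that the total $\ell_\infty$-weighted area in a window is at most $m|\chi|$ is off by a factor of $d$ --- a single configuration can contain $d$ jobs each with $\lVert\bar{s}_j\rVert_\infty = 1$, so the correct bound (via the folklore vector-packing fact) is $m|\chi|d$; this is harmless since $d^{-c/2}$ absorbs the extra factor, but it matters when tuning $c$.

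Your treatment of the $T=O(d^\delta)$ addendum also fails as stated. You claim a size-$1$ window ``satisfies the relevant concentration once $\theta$ absorbs the now-constant $\log T$,'' but McDiarmid's exponent is $\Omega\bigl(\omega^2 m|\chi|/(d^2\log d)\bigr)$, which for $|\chi|=1$ is not large unless $m\gg d^2\log d$ --- and nothing bounds $m$ from below. The paper abandons concentration in this regime and instead applies Markov's inequality directly to the expectation bound, giving per-interval failure probability at most $d^{1-\frac{c}{2}}/\omega$, which beats the union bound over $T^2=O(d^{2\delta})$ intervals once $c\geq 2(2\delta+1)+\epsilon'$. Your parenthetical remark that ``the oversampling by $\log d$ already suffices'' gestures toward this, but the actual argument --- Markov on the already-tiny expectation, not a deviation bound --- is not in your write-up.
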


\begin{proof}

    Consider an interval $\chi$ that satisfies (\ref{eq:large_windows_assumption}) and a job $j$ such that $\chi_j \subseteq \chi$.
    Define $X_{jt}$ as the total (fractional) value of configurations corresponding to job $j$ and time $t$, i.e., $X_{jt} = \sum_{C\in \mathcal{C}^j\cap \mathcal{C}_t}\hat{x}_C$. Then,
\begin{equation}
\label{eqn:pj} \sum_{t\in \chi_j} X_{jt} = \sum_{C\in \mathcal{C}^j} \hat{x}_C  \geq p_j
\end{equation}
Further, by the LP constraints, we have that $X_{jt}\leq 1$.

For the  analysis, we partition $\chi$ into $p_j$ regions $R_1^j, R_2^j, \ldots, R_{p_j}^j$, such that
\begin{equation}
\label{eq:def_regions}
\sum_{t\in R_k^j}X_{jt} \geq 1/2.
\end{equation}
The regions are not necessarily formed of consecutive time slots in $\chi$.
One way of doing this is as follows. Start with singleton regions $\{t\}$ for which $X_{jt} \geq 1/2$. Let $q_j$ denote the number of regions generated.
If $q_j\geq p_j$ then we are done; otherwise, until  $p_j$ regions are obtained, we do the following, for $k\in [q_j+1,p_j]$: starting with $R_k^j = \phi$ and while  $\sum_{t\in R_k^j}X_{jt} < 1/2$, add a time slot $t'$ that has not yet been assigned to any region. Since any such $t'$ has not been chosen as a singleton region, we have $X_{jt'} < 1/2$;
 therefore,  $\sum_{t\in R_k^j}X_{jt} \leq 1$. From Equation~(\ref{eqn:pj}) we have that at least $p_j$ regions are generated in this process.

Let $\mathbf{R}_k^j$ denote the event that job $j$ is \emph{not} scheduled in a given region $R_k^j$ in the first stage of the algorithm. We now compute the probability $\mathbb{P}(\mathbf{R}_k^j)$. Let $Pr(j,t)$ be the probability that $j$ is not assigned to any host at time $t$. Then,
$$Pr(j,t) = \left(1-\frac{X_{jt}}{m}\right)^{cm\log d} \leq \texttt{exp}({-cX_{jt}\log d})$$
Hence, the probability that job $j$ is \emph{not} scheduled in  region $R_k^j$ satisfies

\begin{align}
\mathbb{P}(\mathbf{R}_k^j)& \leq \displaystyle{\Pi_{t\in R_{k}^j}}\texttt{exp}({-cX_{jt}\log d}) = \texttt{exp}({-c\log d\sum_{t\in R_k^j} X_{jt}}) \nonumber \\
& \leq \texttt{exp}({-\frac{c}{2}\log d}) = d^{-\frac{c}{2}}
\label{eq:prob_j_notin_Rkj}
\end{align}
\remove{
\begin{equation}
\label{eq:prob_j_notin_Rkj}
\mathbb{P}(\mathbf{R}_k^j)\leq \Pi_{t\in R_{k}^j}\texttt{exp}({-cX_{jt}\log d}) = \texttt{exp}({-c\log d\sum_{t\in R_k^j} X_{jt}}) \leq \texttt{exp}({-\frac{c}{2}\log d}) = d^{-\frac{c}{2}}
 \end{equation}
}
The last inequality follows from (\ref{eq:def_regions}).

Recall that $p_j'$ (in Algorithm~\ref{algm:MinM}) is the processing time required for the residual job $j$. The  value of $p_{j}'$ is upper bounded by the number of regions in which job $j$
 has {\em not} been scheduled. Therefore, we have
$\mathbb{E}[p_j'] \leq \sum_{k\in [p_j]} \mathbb{P}(\mathbf{R}_k^j) \leq p_jd^{-\frac{c}{2}}$.
By the LP constraints, we have that $\sum_{j:\chi_j \subseteq \chi} p_j\bar{s}_j \leq m|\chi|\bar{1}^d $.
A simple (folklore) fact
is that if a set $V$ of  $d$-dimensional items can be vector-packed into the cube $\bar{1}^d$, then the set of 1-dimensional items obtained from $V$ by taking the $\ell_\infty$
norm of item size vectors can be bin packed into at most $d$ unit sized bins.
 It follows that $\sum_{j:\chi_j \subseteq \chi} p_j\lVert \bar{s}_j\rVert_\infty \leq m|\chi| d$. Hence, using (\ref{eq:prob_j_notin_Rkj}), we obtain the following upper bound on the expected
 total volume of residual jobs in $\chi$:
 \begin{equation}
\label{eqn:expectedvalue}
\mathbb{E}\left[\sum_{j: \chi_j\subseteq \chi} p_j'\lVert\bar{s}_j\rVert_\infty\right] \leq m|\chi|d^{1-\frac{c}{2}}
\end{equation}

To complete the proof, we need to show that, with high probability, the total volume of residual jobs in {\em any} sub-interval of $[T]$ is small. To this end, we apply Lemma~\ref{mcdiarmid} as follows. For each $t\in \chi$ and iteration $i\in [cm\log d]$, there is an associated random variable
indicating the configuration $C$ chosen in iteration $i$ at time $t$.
Note that since Lemma~\ref{mcdiarmid} requires independence between events, the configurations considered are the ones prior to the modifications applied in Step~\ref{mm:rand}.
This is valid as we define the function $f(\cdot)$ in Lemma~\ref{mcdiarmid} to take into account the modifications applied in Step~\ref{mm:rand}.
This function is defined  as the quantity $A_{res}[\chi] = \sum_{j: \chi_j\subseteq \chi} p_j'\lVert\bar{s}_j\rVert_\infty$ which is a function of these $cm|\chi|\log d $ independent random variables.
 If one of these variables is altered, it might affect the choice of at most one configuration, namely,
 a configuration $C'$ selected in iteration $i$ at time $t$ is replaced by another configuration $C''$.
 Suppose $A'_{res}(\chi)$ and $A''_{res}(\chi)$ are the two corresponding realizations of the random variable $A_{res}(\chi)$. We now bound the quantity $|A'_{res}(\chi) - A''_{res}(\chi)|$: the worst
 scenario is clearly when none of the jobs in configuration $C'$ is contained in any other configuration chosen at time $t$, whereas every job in $C''$ is contained in some other configuration chosen
 at time $t$; or vice versa. Thus, we have  $|A'_{res}(\chi) - A''_{res}(\chi)| \leq \max(\sum_{j\in C'}\lVert\bar{s}_j\rVert_\infty, \sum_{j\in  C''}\lVert\bar{s}_j\rVert_\infty) \leq d$.
Applying Lemma~\ref{mcdiarmid}, we have, for any $\omega' \geq 0$

\begin{align}
Pr[A_{res}(\chi) -\mathbb{E}(A_{res}(\chi)) \geq \omega' m|\chi|] & \leq
\frac{\texttt{exp}({-(2{\omega'}^2)m^2|\chi|^2}}{cm \log d |\chi| d^2)} \nonumber \\
& = \texttt{exp}({-\frac{2{\omega'}^2}{c}\cdot \frac{m|\chi|}{d^2\log d}}) \nonumber
\end{align}



Let $c > 2$ be a sufficiently large constant such that $\omega > d^{1-\frac{c}{2}}$. We set $\omega' = \omega - d^{1-\frac{c}{2}}$, and
$\theta = c{\omega'}^{-2}$. Then, we have
\begin{align*}
Pr[A_{res}(\chi) & \!\geq \omega m|\chi|] = \displaystyle{Pr[A_{res}(\chi) \geq (\omega' + d^{1-\frac{c}{2}}) m|\chi|] }\\
& \leq \displaystyle{Pr[A_{res}(\chi) -\mathbb{E}(A_{res}(\chi)) \geq \omega' m|\chi|]}\\
& \leq \!\displaystyle{\texttt{exp}({-\frac{2 \omega'^2}{c}\cdot \frac{m|\chi|}{d^2 \log d}})}
= \displaystyle{ \texttt{exp}({\frac{-2}{\theta}\cdot \frac{m|\chi|}{d^2 \log d}})}\\
& \!\leq \displaystyle{ \texttt{exp}({-2 \log (\epsilon^{-\frac{1}{2}}T)}) = \epsilon T^{-2}}
\end{align*}
The first inequality follows from (\ref{eqn:expectedvalue}) and the third from (\ref{eq:large_windows_assumption}).
 Since the total number of distinct intervals possible in $[T]$ is at most $T^2$, by applying the union bound, the probability that some interval $\chi$ that satisfies (\ref{eq:large_windows_assumption})  fails to satisfy (\ref{eq:residual_chi_volume_bound}) is at most $\epsilon$.

Now, consider  the case where $T = d^\delta$, for some constant $\delta \geq 0$ (independent of $d$). We have
$$Pr[A_{res}(\chi) \geq \omega m|\chi|] \leq Pr[A_{res}(\chi) \geq \frac{\mathbb{E}(A_{res}(\chi)) \omega }{d^{1-\frac{c}{2} }}]  \leq \frac{d^{1-\frac{c}{2}}}{\omega}.$$
The first inequality follows from (\ref{eqn:expectedvalue}) and the second from Markov's inequality. As before, we observe that total number of distinct intervals possible in $[T]$ is at most $T^2 \leq d^{2\delta}$. Now, we choose
$\epsilon'>0$ to be a large enough constant satisfying $\omega^{-1}d^{-\frac{\epsilon'}{2}}\leq \epsilon$, and $c \geq 2(2 \delta +1) + \epsilon'$.
Hence,
$$
Pr[A_{res}(\chi) \geq \omega m|\chi|]  \leq \frac{d^{1 -(2 \delta +1) - \epsilon'/2}}{\omega} \leq \epsilon T^{-2}.
$$
Again, since the total number of distinct intervals possible in $[T]$ is at most $T^2$, by applying the union bound, the probability that some interval $\chi$ fails to satisfy
(\ref{eq:residual_chi_volume_bound}) is at most $\epsilon$.
\end{proof}

The next lemma gives the condition that guarantee feasible schedule for the residual jobs.

\begin{lemma}
\label{cor:group-packing}
Let $(J,\mathcal{L})$ be
a single resource
laminar instance  of {\em \MM} such that for any job $j$, $p_j + |\text{forb}(j)| \leq \lambda |\chi_j|$,
for some $\lambda \in (0,1)$, and for any
$\chi \in \mathcal{L}$: $\sum_{\{j\in J: \chi_j \subseteq \chi \} } p_j s_j \leq \alpha m|\chi|$,
for some $\alpha \in (0,1)$. Then,
if $\lambda + 2\alpha \leq 1$, all jobs in $J$ can be assigned to $m$ hosts.
\end{lemma}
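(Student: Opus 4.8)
The plan is to reduce the statement to the analysis of Theorem~\ref{thm:group-packing} by folding each job's forbidden slots into its effective processing requirement. Concretely, I would refine Algorithm~\ref{algm:schedule} so that, when a residual job $j$ with $\chi = \chi_j$ is processed, its set of candidate slots $\text{avail}(j)$ is $\chi \setminus (\text{forb}(j) \cup \{t : j \text{ is already placed at } t\})$, so that $j$ is never scheduled in a forbidden slot; the white/gray/black colouring of the bins $(i,t)$ and the pairing rule are kept verbatim, and jobs are handled bottom-up over the laminar tree of $\mathcal{L}$. For the accounting I would treat the $|\text{forb}(j)|$ forbidden slots of $j$ exactly as if $j$ had already been assigned there, so that $j$ behaves like a job of \emph{effective length} $\tilde p_j = p_j + |\text{forb}(j)|$; the hypothesis $p_j + |\text{forb}(j)| \leq \lambda|\chi_j|$ is precisely the statement that every job has effective slackness $\lambda$.

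With this reduction in place, the feasibility argument mirrors the proof of Theorem~\ref{thm:group-packing}. For the job $j$ under consideration, call a bin $(i,t)$ with $t \in \chi$ \emph{bad} if it is gray, or $j$ is already placed at time $t$, or $t \in \text{forb}(j)$, and count black bins separately. The goal is the invariant that, as long as no job $j^{+}$ with $\chi \subset \chi_{j^{+}}$ has yet been allocated, the number of bad bins while processing $j$ is at most $\lambda m|\chi|$: the assigned-and-forbidden slots account for $m(p_j' + |\text{forb}(j)|) \le m\tilde p_j \le \lambda m|\chi|$ of the bad bins, and the case analysis of Theorem~\ref{thm:group-packing} (reuse a free gray bin when one exists, otherwise open a white bin and pair the over-full gray bin to black) shows the remaining gray bins never push the count past $\lambda m|\chi|$. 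The black bins are bounded as before: a pair is created only when its combined load exceeds $1$, so the area bound $\sum_{j:\chi_j\subseteq\chi} p_j s_j \le \alpha m|\chi|$ limits their number to fewer than $2\alpha m|\chi|$.

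It then remains to rule out a \textbf{fail}. If Algorithm~\ref{algm:schedule} reports \textbf{fail} while scheduling $j$, then no admissible white bin survives, i.e.\ every bin of $\chi$ is black or bad; hence $m|\chi| \le (\text{number of black bins}) + (\text{number of bad bins}) < (2\alpha + \lambda)m|\chi| \le m|\chi|$ by the hypothesis $\lambda + 2\alpha \le 1$, a contradiction. Consequently every residual job is completed and the $m$ hosts suffice, which is the assertion of the lemma.

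The step I expect to be the crux is re-establishing the bad-bin invariant at the base of the bottom-up induction. When $j$ is the first job handled at level $\chi$, the bad bins consist of the gray bins inherited from the already-completed child subtrees $\chi_1,\dots,\chi_l$ together with the $m|\text{forb}(j)|$ forbidden bins of $j$. The inherited gray bins are only bounded by $\sum_k \lambda m|\chi_k| \le \lambda m|\chi|$, and $\text{forb}(j)$ is specific to $j$ and may overlap them arbitrarily, so a naive sum of the two counts would exhaust the budget twice over. Making this go through is exactly why the algorithm must be \emph{refined} rather than applied verbatim: the reuse-and-pairing discipline has to be shown to keep the genuinely gray bins (those not already charged to forbidden or assigned slots) below $m(\lambda|\chi| - p_j' - |\text{forb}(j)|)$, so that the three disjoint contributions---gray, assigned, and forbidden---sum to at most $\lambda m|\chi|$. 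Verifying this disjoint charging throughout the run, and in particular across the handoff from child intervals to their parent, is the main technical work.
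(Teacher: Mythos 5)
Your proposal follows the paper's proof essentially verbatim: the same refinement $\text{avail}(j) \subseteq \chi_j \setminus \text{forb}(j)$, the same enlarged notion of \emph{bad} bins (gray bins, slots where $j$ is already assigned, or slots in $\text{forb}(j)$), the same invariant bound of $\lambda m|\chi_j|$ on bad bins, the same pairing bound of fewer than $2\alpha m|\chi_j|$ black bins via the area hypothesis, and the same final counting contradiction under $\lambda + 2\alpha \leq 1$. The step you single out as the crux---re-establishing the bad-bin invariant when inherited gray bins coexist with $\text{forb}(j)$---is precisely the step the paper itself dispatches with ``it can be shown by induction,'' so your account matches the paper's argument while being more explicit about where the remaining technical work lies.
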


\begin{proof}
We use an algorithm similar to Algorithm~\ref{algm:schedule}. The only difference is that in our case
$avail(j) \subseteq \chi_j\setminus \text{forb}(j)$.
Similar to the proof of Theorem~\ref{thm:group-packing} we prove that the algorithm never reports {\bf fail}
while assigning job $j$. Suppose that $j$ was assigned to $p_j' < p_j$ bins before this {\bf fail}.
For $t \in \chi_j$, we say that bin $(i,t)$
is \emph{bad} if either $t \in \text{forb}(j)$, or $(i,t)$ is colored gray, or $j$  has been assigned to some bin $(i',t)$ in the same time slot. It can be shown by induction that as long as no job
$j^+$ such that $\chi_j\subset \chi_{j^+}$ has been allocated the resource: the number of bad bins while processing job $j$ is at most $ \lambda m |\chi|$.
Since we pair the black bins $(i,t)\leftrightarrow (i',t')$ only if $\sum_{j\in S_{(i,t)}}s_j$ + $ \sum_{j'\in S_{(i',t')}}s_{j'} > 1$, the total number of black bins $< 2\alpha m|\chi_j|$.
Hence, the total number of bins that are black or bad is $< (\lambda + 2\alpha) m|\chi_j|$.
Thus if $\lambda+2\alpha \le 1$, there should be at least one bin $(i^*,t^*)$ that is neither black nor bad.
But in this case, we could have assigned $j$ to host $i^*$ at time $t^*$, which is a contradiction to the assumption that the algorithm reports a {\bf fail}.
\end{proof}

The above lemmas lead to an $O(\log d)$ performance guarantee for instances with {\em large} time windows, as formalized in the next result.
\begin{theorem}
\label{thm:crcsd}
Let $(J,\mathcal{W})$ be an instance of {\em \MM } with slackness parameter $\lambda \in (0,\frac{1}{4})$.
Fix an $\epsilon \in (0,1)$. If
$|\chi_j|\geq \frac{1}{m}\theta d^2 \log d \log (T\epsilon^{-\frac{1}{2}})$  $\forall~j\in J$,
for sufficiently large constant $\theta$, then
Algorithm~\ref{algm:MinM} yields an $O(\log d)$ approximation guarantee with probability at least $1-\epsilon$.
\end{theorem}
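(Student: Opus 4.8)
The plan is to track the two groups of hosts used by Algorithm~\ref{algm:MinM} separately and show that each is $O(m\log d)$, where $m$ is the rounded-up objective value of the approximate configuration LP. Since the configuration LP is a relaxation of the integer program, its optimum $m^*$ satisfies $m^*\le OPT$, and by Theorem~\ref{thm:lp-config} we have $m\le(1+\epsilon)m^*\le(1+\epsilon)OPT$. The maximization phase uses exactly $m_1=cm\log d$ hosts by construction, so this part already contributes $O(OPT\log d)$; it remains to bound the number $m_2$ of hosts consumed by the residual phase.

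First I would fix the free constants. I choose $\omega>0$ small enough that $4\lambda+18\omega\le 1$, which is possible since $\lambda<\tfrac14$ gives $4\lambda<1$. I then invoke Lemma~\ref{lem:residual} with this $\omega$ and the given $\epsilon$, fixing the constants $c$ and $\theta$. The hypothesis $|\chi_j|\ge\frac{1}{m}\theta d^2\log d\log(T\epsilon^{-1/2})$ for all $j$ ensures that every interval arising below meets the length requirement (\ref{eq:large_windows_assumption}), so with probability at least $1-\epsilon$ the bound (\ref{eq:residual_chi_volume_bound}), namely $\sum_{j\in J_{res}:\chi_j\subseteq\chi} p_j'\lVert\bar{s}_j\rVert_\infty\le\omega m|\chi|$, holds simultaneously for every admissible sub-interval $\chi$ of $[T]$. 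I condition on this event throughout.

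Next I would verify the two hypotheses of Lemma~\ref{cor:group-packing} for the laminar residual instance $(J'_{res},\mathcal{L})$ produced in Step~\ref{mm:res}, where $\chi_{j'}=\mathfrak{L}(\chi_j)$, $s_{j'}=\lVert\bar{s}_j\rVert_\infty$, and $p_{j'}=\max(p_j-n_j,0)$. For the slackness condition, any residual job with $p_{j'}>0$ has $n_j\le p_j$, so $p_{j'}+|\text{forb}(j')|\le(p_j-n_j)+n_j=p_j\le\lambda|\chi_j|\le 4\lambda|\chi_{j'}|$, using $|\chi_j|\le 4|\mathfrak{L}(\chi_j)|$ from Lemma~\ref{lem:laminarlpbound}; hence the residual slackness is $\lambda'=4\lambda<1$. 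For the volume condition, I fix $\hat\chi\in\mathcal{L}$ and consider any job with $\mathfrak{L}(\chi_j)\subseteq\hat\chi$. By Lemma~\ref{lem:laminarlpbound}, $|\chi_j|\le 4|\mathfrak{L}(\chi_j)|\le 4|\hat\chi|$, and since $\chi_j\supseteq\mathfrak{L}(\chi_j)$ meets $\hat\chi$, every such $\chi_j$ lies inside a single interval $I$ obtained by padding $\hat\chi$ on both sides, with $|I|\le 9|\hat\chi|$. Because $|I|\ge|\chi_j|$, the interval $I$ satisfies (\ref{eq:large_windows_assumption}), so the conditioned event gives
\[
\sum_{j':\,\chi_{j'}\subseteq\hat\chi} p_{j'}s_{j'} \;\le\; \sum_{j:\,\chi_j\subseteq I} p_j'\lVert\bar{s}_j\rVert_\infty \;\le\; \omega m|I| \;\le\; 9\omega m|\hat\chi|.
\]
Thus the volume hypothesis holds with $\alpha=9\omega$, and $\lambda'+2\alpha=4\lambda+18\omega\le 1$ by the choice of $\omega$, so Lemma~\ref{cor:group-packing} schedules all residual jobs on $m_2=m$ hosts.

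Adding the two phases yields $m_1+m_2=cm\log d+m=O(m\log d)=O(OPT\log d)$ hosts, an $O(\log d)$-approximation, and the construction succeeds with probability at least $1-\epsilon$ since the only randomness, in the maximization phase, is controlled by Lemma~\ref{lem:residual}. The step I expect to be most delicate is the translation of the residual-volume guarantee through the laminar transformation: one must argue that the jobs falling into the subtree of a laminar node $\hat\chi$ have original windows confined to an interval of length $O(|\hat\chi|)$ that is still large enough to be covered by the union bound of Lemma~\ref{lem:residual}, while keeping the blow-up constant (here $9$) small enough that $\lambda'+2\alpha\le 1$ stays satisfiable for every $\lambda<\tfrac14$.
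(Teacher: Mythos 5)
Your proposal is correct and takes essentially the same route as the paper's proof: the same accounting of $m_1=cm\log d$ plus $m_2=m$ hosts via $m\le(1+\epsilon)m^*$, the same slackness bound $p_{j'}+|\text{forb}(j')|\le p_j\le 4\lambda|\chi_{j'}|$ via Lemma~\ref{lem:laminarlpbound}, and the same translation of the residual-volume guarantee of Lemma~\ref{lem:residual} through the laminar transformation into the hypotheses of Lemma~\ref{cor:group-packing}. The only difference is cosmetic: where the paper covers the windows feeding a laminar node $\chi_{j'}$ by the union $\tilde\chi_{j'}$ of mapped windows (giving the constant $4$ and the condition $4\lambda+8\omega=1$), you pad $\hat\chi$ to an explicit interval $I$ with $|I|\le 9|\hat\chi|$ and $|I|\ge|\chi_j|$ (giving $\alpha=9\omega$ and $4\lambda+18\omega\le 1$) --- a slightly worse constant, but if anything a more carefully justified covering step, and both settings of $\omega$ are feasible for every $\lambda<\tfrac14$.
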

\begin{proof}
The optimal objective value of the configuration LP, denoted $m^*$ is a lower bound on the number of hosts required for the instance  $(J,\mathcal{W})$.
By Theorem~\ref{thm:lp-config}, $m\leq (1+\epsilon)m^*$. Now, assuming Algorithm~\ref{algm:MinM} is correct,
the number of hosts used is at
 most $m(c\log d) + m$, implying an $O(\log d)$ approximation guarantee.
 Below, we prove correctness of Algorithm~\ref{algm:MinM}, i.e., we show that it feasibly schedules all the jobs in $J$.

It suffices to show
that the residual set of jobs $J_{res}$ can be successfully scheduled.
Recall that Algorithm~\ref{algm:MinM} uses the construction given in Section~\ref{sec:MT_general} to transform $\mathcal{W}$ into a laminar set of intervals $\mathcal{L}$ and to obtain a mapping $\mathfrak{L}:\mathcal{W}\rightarrow \mathcal{L}$.
Then, the algorithm computes a schedule of the residual set of jobs $J_{res}$
by solving the laminar instance $(J'_{res}= \{j': j\in J\},  \mathcal{L})$, where $\chi_{j'} = \mathfrak{L}(\chi_j)$.
Observe that
for any job $j\in J_{res}$ we have that $p_j' + |\text{forb}(j)| = p_j \leq \lambda|\chi_j| \leq 4\lambda |\chi_{j'}|$,
where the last inequality follows from Lemma~\ref{lem:laminarlpbound}.
By the same lemma, we also get that for any job $j\in J_{res}$,
$|\chi_{j'}|\geq \frac 14 |\chi_j|$. Now, let $\tilde{\chi}_{j'}$ be the union of all the time windows mapped by $\mathfrak{L}$ to time windows in $\chi_{j'}$. Also, by Lemma~\ref{lem:laminarlpbound}
$|\tilde{\chi}_{j'}| \leq 4|\chi_{j'}|$. Clearly, $|\tilde{\chi}_{j'}| \geq
|\chi_{j'}| \geq \frac{1}{4m}\theta d^2 \log d \log (T\epsilon^{-\frac{1}{2}})$.
Now, apply Lemma~\ref{lem:residual} to the intervals $\tilde{\chi}_{j'}$ for $j\in J_{res}$ (with appropriate value of $\theta$), to get that
that $\sum_{j\in J_{res}:\chi_{j'}\subseteq \chi} p_j' \lVert\bar{s}_j\rVert_\infty \leq 4\omega m|\chi|$ with probabillty at least $1- \epsilon$.
We now apply Lemma~\ref{cor:group-packing}, by setting $\alpha \leftarrow 4\omega$ and $\hat{\lambda} \leftarrow 4\lambda$.
 We note that there is a feasible schedule of jobs $j'$ on $m$ hosts if $4\lambda + 8\omega = 1$. Indeed, for any $\lambda < \frac{1}{4}$, there is a positive constant $\omega$ that satisfies this equation. Finally, it is easy to show that transforming the instance back to $d$ dimensions (by replacing the requirement of $\lVert\bar{s}_j\rVert_\infty$ by $\bar{s}_j$) the schedule remains feasible. This completes the proof.
\end{proof}

We conclude our analysis with the following result.
\begin{theorem}
\label{cor:noassumption}
Let $(J,\mathcal{W})$ be an instance of {\em \MM} with slackness parameter $\lambda \in (0,\frac{1}{4})$.
Fix an $\epsilon \in (0,1)$.
There is a polynomial time algorithm that yields an
$O(\log d\log^*T)$ approximation ratio with probability at least $1-\epsilon$.
\end{theorem}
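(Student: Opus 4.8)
The plan is to remove the window-size assumption of Theorem~\ref{thm:crcsd} by a multiscale decomposition of the time axis, grouping jobs according to the size of their time windows and handling each group on its own, suitably short, local horizon. Define the map $\Phi(x)=\frac{c_0}{m}\,d^2\log d\,\log x$ for a constant $c_0$ large enough to absorb $\theta$ and the $O(1)$ factors incurred by the laminar transformation of Section~\ref{sec:MT_general}. Set $T_0=T$ and $T_{k+1}=\Phi(T_k)$, and stop at the first level $K$ with $T_K=O(\mathrm{poly}(d))$. Since $d$ is constant, $\Phi$ is, up to the constant factor $c_0d^2\log d/m$, an iterated logarithm, so $K=O(\log^* T)$. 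Job $j$ is placed in group $G_k$ if $T_{k+1}\le|\chi_j|<T_k$, with a final group collecting all jobs of window size below $T_K$. The intention is to schedule each $G_k$ using $O(\log d)\,m^*$ hosts, so that summing over the $O(\log^* T)$ groups yields $O(\log d\log^* T)\,m^*\le O(\log d\log^* T)\,\mathrm{OPT}$, where $m^*$ is the optimum of the global configuration LP and $m^*\le\mathrm{OPT}$.

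To handle a single group $G_k$, I would first cut $[T]$ into consecutive segments of length $2T_k$, using two such partitions offset by $T_k$; since every window in $G_k$ has length $<T_k$, each such window is contained in some segment of at least one of the two partitions, so the two partitions can be treated separately at the cost of a factor $2$. Fix a segment $\sigma$, regarded as a stand-alone instance with local horizon $|\sigma|=2T_k$ and job set $\{j\in G_k:\chi_j\subseteq\sigma\}$. By the defining inequality of $G_k$ and the choice of $c_0$, every such job satisfies $|\chi_j|\ge T_{k+1}\ge\frac1m\theta d^2\log d\,\log(\sqrt2\,|\sigma|)$ even after the factor-$4$ shrinkage of the laminar mapping $\mathfrak{L}$ (Lemma~\ref{lem:laminarlpbound}); that is, the jobs of $\sigma$ meet the window hypothesis of Theorem~\ref{thm:crcsd} \emph{relative to the local horizon} $|\sigma|$. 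I would then run the machinery of Algorithm~\ref{algm:MinM} on $\sigma$ — the configuration-LP rounding of Step~\ref{mm:rand} followed by the laminar residual packing of Lemma~\ref{cor:group-packing}, with the volume control supplied by Lemma~\ref{lem:residual} applied to the local horizon — obtaining a feasible schedule on $O(\log d)$ times the optimum of the configuration LP for the jobs of $\sigma$, which is at most $m^*$. For the final group, whose local horizon is $O(\mathrm{poly}(d))=O(d^\delta)$, I would instead invoke the second part of Lemma~\ref{lem:residual}, which dispenses with the window restriction altogether. Crucially, the segments within one partition of one level are pairwise time-disjoint, so their schedules may reuse the same physical hosts; hence a whole group costs $O(\log d)\,m^*$ hosts in total, not $O(\log d)\,m^*$ per segment.

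The delicate point — and the step I expect to be the main obstacle — is the probabilistic guarantee. Lemma~\ref{lem:residual} controls the residual volume in a sub-interval only after a union bound is paid over the intervals in play, and at the fine levels there are $\Theta(T/T_k)$ segments; demanding that \emph{all} of them succeed simultaneously by shrinking the per-segment failure probability would force a factor $\log(T/T_k)=\Theta(\log T)$ back into the window requirement $\Phi(\cdot)$, which the jobs of $G_k$ cannot meet. The way I would circumvent this is to run the rounding inside each segment as a \emph{Las~Vegas} procedure: a single attempt satisfies the residual-volume bound for all sub-intervals of $\sigma$ with constant probability (say $\ge\frac12$), and this requires only the \emph{local} window bound $|\chi_j|\ge\Phi(|\sigma|)$, since the union is taken over the $O(|\sigma|^2)$ sub-intervals of $\sigma$ against a constant target. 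Because the event ``the residual volume is below $\omega m|\chi|$ for every sub-interval $\chi\subseteq\sigma$'' is efficiently checkable, I would simply re-randomize the rounding of $\sigma$ until it holds; repetition amplifies the per-segment success probability to $1-\epsilon/(\text{number of segments})$ within a polynomial number of trials, and — this is the key — re-randomizing never increases the host count, since each attempt reuses the same $O(m\log d)$ host-slots at the times of $\sigma$. A union bound over all segments of all levels then drives the global failure probability below $\epsilon$ while leaving both the window requirement and the host count untouched.

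Putting the pieces together, with probability at least $1-\epsilon$ every segment of every level is scheduled successfully, each group consumes $O(\log d)\,m^*$ hosts, and the $O(\log^* T)$ groups together use $O(\log d\log^* T)\,m^*$ hosts. The remaining routine checks are that the iterated-logarithm recursion terminates in $O(\log^* T)$ steps for constant $d$ (robustness of $\log^*$ to the constant factor $c_0d^2\log d/m$), that restricting the global LP solution to a segment preserves the covering constraints $\sum_{t\in\chi_j}X_{jt}\ge p_j$ used by Lemma~\ref{lem:residual}, and that transforming the single-resource $\ell_\infty$ packing back to $d$ dimensions keeps the schedule feasible, exactly as in the last step of the proof of Theorem~\ref{thm:crcsd}.
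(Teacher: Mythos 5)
Your proposal follows essentially the same route as the paper: the paper defines levels bottom-up via an iterated exponential $\psi(i+1)=\min\{T,2^{\psi(i)/(2\gamma)}\}$ with $\gamma=\theta d^2\log d$, which is exactly your top-down iterated-logarithm recursion $T_{k+1}=\Phi(T_k)$ read in the other direction; it then uses the same two shifted partitions into segments of length twice the group's maximum window size (its ``Odd''/``Even'' instances), applies Theorem~\ref{thm:crcsd} locally to each segment, invokes the second part of Lemma~\ref{lem:residual} for the base group whose horizon is $\mathrm{poly}(d)$, reuses hosts across time-disjoint segments within a partition, and aggregates $O(m\log d)$ hosts over the $O(\log^* T)$ levels. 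The one genuine difference is your probabilistic accounting. The paper simply applies the per-segment $1-\epsilon$ guarantee of Theorem~\ref{thm:crcsd} and does not address the union bound over the $\Theta(T/\psi(w+1))$ segments at a level, each of which runs an \emph{independent} LP rounding; as you correctly observe, forcing per-segment failure probability $\epsilon/(\text{number of segments})$ through the window requirement of Lemma~\ref{lem:residual} would reintroduce a $\log T$ term that the jobs of the fine levels cannot satisfy. Your Las Vegas repair --- exploiting that the bad event ``$A_{res}(\chi)>\omega m|\chi|$ for some sub-interval $\chi\subseteq\sigma$'' is checkable in time $O(|\sigma|^2)$ after the rounding, and re-randomizing within a segment until it holds, which amplifies the success probability without touching either the window hypothesis or the host count --- is a legitimate and clean way to close this gap, and in this respect your write-up is actually more careful than the paper's. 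Your remaining checks (restriction of the global LP to a segment stays feasible with value at most $m^*$, the factor-$4$ loss from $\mathfrak{L}$ absorbed into $c_0$, robustness of $\log^*$ to the constant in $\Phi$, and the $\ell_\infty$-to-$d$-dimensional back-transformation) all match the paper's treatment.
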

\begin{proof}
The key idea is the following.
Starting with the maximum schedule length $T$, we recursively define $\kappa +1$ ranges for the time-window sizes in the original instance. We then partition the set of
jobs $J$ to $\kappa +1$ subsets, each containing jobs with time windows within the corresponding range, where $\kappa = O(\log^*T)$.
The crux of this partition is that  the resulting $\kappa +1$ instances of our problem satisfy
the conditions of Theorem~\ref{thm:crcsd}. In particular, all jobs have `large' windows. Thus, we can obtain for each instance a $\log d$-approximate solution.
Formally, let $\theta$ be the constant in Theorem~\ref{thm:crcsd}. Set $\gamma = \theta d^2\log d$ (we assume that the $\log$ is base 2).
Define the function $\psi:\mathbf{N}\rightarrow \mathbf{R^+}$ as follows:
$$
\psi(i)=
\begin{cases}
    4\lceil{\gamma^2}\rceil  & \mbox{ if } i =1 \\
    \min \{T,\left(2^{1/(2\gamma)}\right)^{\psi(i-1)}\} & \mbox{ if } i>1
\end{cases}
$$
It is easy to verify that, for $i\geq 1$, we have
$\psi(i) \le \psi(i+1)$, and $\psi(i) \ge 2\gamma \log_2 \psi(i+1)$. Also,
let $\kappa$ be the smallest integer for which $\psi(\kappa) = T$. Then, we have $\kappa = O(\log^* T)$.

We partition the set of intervals $\mathcal{W}$ into groups based on their length as follows:
$$\mathcal{I}_0 = \{\chi: |\chi|\in [1, \psi(1)]\},$$
$$\mathcal{I}_w = \{\chi: |\chi|\in (\psi(w), \psi(w+1)]\}\quad \mbox{ for all }w\in [1, \kappa].$$

Next, we define $\kappa +1$ instances of our problem, where the $w^{th}$ instance is given by:
$$(J_w = \{j\in J: \chi_j \in \mathcal{I}_w\}, \mathcal{I}_w).$$

Since each of the above instances requires to schedule a subset of jobs in the original instance, they optimally need at most $m$ hosts to complete all jobs. Consider the $w^{th}$ instance. The
largest window size here is at most $\psi(w+1)$. We further partition this instance as follows. Let
$$ \mathcal{I}_{w,Odd}^i =[(2i+1)\psi(w+1)+1, \min\{(2i+3)\psi(w+1),T\}],$$
where $i\geq 0 \text{ and } (2i+1)\psi(w+1) < T$. Similarly, let
$$ \mathcal{I}_{w,Even}^i =[(2i)\psi(w+1)+1, \min\{(2i+2)\psi(w+1),T\}],$$
where $i\geq 0 \text{ and } (2i)\psi(w+1) < T$.
Now, we define
$$J_{w,Odd}^i = \{j \in J_w: \chi_j \subseteq \mathcal{I}_{w,Odd}^i \}$$
and
$$J_{w,Even}^i = \{j \in J_w: \chi_j \subseteq \mathcal{I}_{w,Even}^i \}.$$
Let $J_{w,Odd}=\cup_i J_{w,Odd}^i $ and $J_{w,Even}=\cup_i J_{w,Even}^i $.
Finally, remove each job $j\in J_{w,Odd}\cap J_{w,Even}$ from $J_{w,Even}$ and the corresponding $J_{w,Even}^i$.
Consequently, $J_{w,Odd}\cap J_{w,Even} =\emptyset$.

For any $w>0$, fix an $i\ge 0$ such that $(2i+1)\psi(w+1) < T$ and
consider the instance defined by the jobs in  $J_{w,Odd}^i$. We claim that this instance
can be solved using Theorem~\ref{thm:crcsd}.
Indeed, the total number of time slots in this instance is $2\psi(w+1)$, and the time-window of any job $j$ in the instance is $|\chi_j| \geq \psi(w) \geq 2\theta d^2\log d \log \psi(w+1)$. Thus, the conditions in Theorem~\ref{thm:crcsd} are satisfied, and we can obtain a feasible schedule using $O(m\log d)$ hosts.
Now, fix an $i\ge 0$ such that $(2i+1)\psi(1) < T$ and  consider
the instance defined by the jobs  in  $J_{0,Odd}^i$. We claim that this instance
can also be solved using Theorem~\ref{thm:crcsd}, since in this case the total number of time slots in this instance is
$2\psi(1)=O(d^\delta)$, for some constant $\delta \geq 0$.
\remove{
Note that for any $w$, the odd instances  $J_{w,Odd}^i, J_{w,Odd}^\ell$  for $i \neq \ell$ are mutually disjoint (all jobs and time-windows are completely disjoint). Thus,
we can solve them in parallel using $O(m\log d)$ (say $m_o$) hosts and merge them together. We can argue the same about $J_{w,Even}$. Suppose we need $m_o$ and $m_e$ hosts to solve $J_{w,Odd}$ and $J_{w,Even}$, respectively.  Since no job is shared between $J_{w,Odd}$ and $J_{w,Even}$, we can schedule all the jobs in $J_w$ using $m_o + m_e$ hosts.
Thus, for any $w$, we obtain a feasible schedule using $O(m\log d)$ hosts.
}

Note that for any $w$, the odd instances  $J_{w,Odd}^i, J_{w,Odd}^\ell$  for $i \neq \ell$ are mutually disjoint (jobs and time-windows). Thus,
we can solve them in parallel using the same $O(m\log d)$ hosts. We can do the same for $J_{w,Even}$. Suppose we need $m_o$ and $m_e$ hosts to solve $J_{w,Odd}$ and $J_{w,Even}$, respectively.  Since no job is shared between $J_{w,Odd}$ and $J_{w,Even}$, we can schedule all the jobs in $J_w$ using
$m_o + m_e=O(m\log d)$ hosts.

Now, to handle the instances corresponding to all $w\in \{0, \ldots , \kappa\}$, we note that no job is shared among the instances. Therefore, we can aggregate the hosts to obtain a feasible schedule for all instances using $O(m\kappa \log d)$ hosts.
\end{proof}

\bibliography{rap}

\end{document}